\newcommand{\Tr}[1]{\operatorname{Tr}\!\left[#1\right]}
\newcommand{\PTr}[2]{\operatorname{Tr}_{\mathrm{#1}}\!\left[#2\right]}
\newcommand{\PTrOp}[1]{\operatorname{Tr}_{\mathrm{#1}}}
\def\>{\rangle}
\def\<{\langle}
\def\id{\mathsf{id}}
\def\mD{\mathcal{D}}
\def\mN{\mathcal{N}}
\def\from{\leftarrow}
\newcommand{\ketbra}[2]{\left|{#1}\right>\!\!\left<{#2}\right|}
\newcommand{\set}[1]{\mathscr{#1}}
\newcommand{\defeq}{\stackrel{\textup{\tiny def}}{=}}
\renewcommand{\ge}{\geqslant}
\renewcommand{\le}{\leqslant}
\newtheorem{theorem}{Theorem}
\newtheorem{corollary}{Corollary}
\newtheorem{proposition}{Proposition}
\newtheorem{definition}{Definition}
\theoremstyle{remark}
\def\payoff{\wp}
\def\openone{\mathds{1}}
\def\free{\mathcal{F}}
\def\freeop{\mathcal{L}}
\def\sufficient{\rightarrowtail}
\def\ssscenario{\mathsf{S}}
\newcommand{\mI}{\mathcal{I}}
\newcommand{\EB}{\operatorname{EB}}
\newcommand{\mU}{\mathcal{U}}
\newcommand{\on}[1]{^\mathrm{#1}}
\newcommand{\hilbert}[1]{\ensuremath{\mathsf{H}_\mathrm{#1}}}
\newcommand{\density}[1]{\ensuremath{\mathsf{D}(\hilbert{#1})}}
\newcommand{\subdensity}[1]{\ensuremath{\mathsf{D}_\le(\hilbert{#1})}}
\newcommand{\densitytwo}[2]{\ensuremath{\mathsf{D}_{\mathrm{#1}\otimes\mathrm{#2}}}}
\newcommand{\channel}[1]{\ensuremath{\mathcal{#1}}}
\newcommand{\sigscenario}[1]{\ssscenario^\text{sig}_{#1}}
\newcommand{\sigcorrelation}[1]{p^\text{sig}_{#1}}
\newcommand{\inputset}[1]{\mathcal{Q}_\text{#1}}
\newcommand{\choi}[1]{\mathsf{J}_{#1}}
\newcommand{\vacuum}{\ketbra{\varnothing}{\varnothing}}
\begin{document}


\title{A resource theory of quantum memories and their faithful verification with minimal assumptions}
\author{Denis Rosset}
\email{physics@denisrosset.com}
\affiliation{Department of Physics, National Cheng Kung University, Tainan 701, Taiwan}
\affiliation{Perimeter Institute for Theoretical Physics, Waterloo, Ontario, Canada, N2L 2Y5}
\author{Francesco Buscemi}
\email{buscemi@i.nagoya-u.ac.jp}
\affiliation{Department of Mathematical Informatics, Nagoya University, Chikusa-ku, Nagoya, 464-8601, Japan}
\author{Yeong-Cherng Liang}
\email{ycliang@mail.ncku.edu.tw}
\affiliation{Department of Physics, National Cheng Kung University, Tainan 701, Taiwan}

\date{\today}

\begin{abstract}
  We provide a complete set of game-theoretic conditions equivalent to the existence of a transformation from one quantum channel into another one, by means of classically correlated pre/post processing maps only.
  Such conditions naturally induce tests to certify that a quantum memory is capable of storing quantum information, as opposed to memories that can be simulated by measurement and state preparation (corresponding to entanglement-breaking channels).
  These results are formulated as a resource theory of genuine quantum memories (correlated in time), mirroring the resource theory of entanglement in quantum states (correlated spatially).
  As the set of conditions is complete, the corresponding tests are faithful, in the sense that any non entanglement-breaking channel can be certified.
  Moreover, they only require the assumption of trusted inputs, known to be unavoidable for quantum channel verification.
  As such, the tests we propose are intrinsically different from the usual process tomography, for which the probes of both the input and the output of the channel must be trusted.
  An explicit construction is provided and shown to be experimentally realizable, even in the presence of arbitrarily strong losses in the memory or detectors.
\end{abstract}

\maketitle


Consider a vendor selling quantum devices purportedly able of storing quantum information for a period of time.
However, during their operation, the devices always break the entanglement between the stored subsystem and any other subsystem.
Such devices are arguably useless as quantum memories; for example, they would not be able to establish entangled links in a quantum repeater scheme~\cite{Briegel1998,Sangouard2011}.
In the terminology of quantum channels~\cite{Holevo1998,Braunstein1998,Verstraete2002a}, those devices correspond to entanglement-breaking (EB) channels~\cite{Horodecki2003, Holevo2008}, which are exactly equivalent to the measure-and-prepare channels depicted in Fig.~\ref{fig:eb-processors}.
Measure-and-prepare channels are implemented by measuring the input state, storing the classical information corresponding to the measurement outcome for the required duration, and then using that information to prepare a quantum state.
Even though strictly speaking, such channels are quantum channels (since they act upon an input quantum system), they actually transmit only classical information from the input to the output.
Thus, in constructing a quantum memory, one aims to produce a device that could retain some correlations between the stored system and remote systems.

\begin{figure}
  \includegraphics{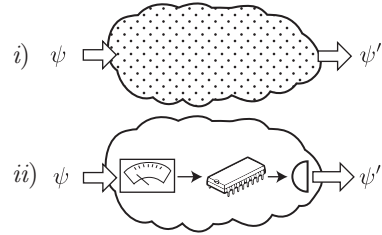}
  \caption{
    \label{fig:eb-processors}
    The quantum device in i) is {\em not} in the quantum domain whenever its functioning can be simulated by a fixed measurement, an arbitrary processing of the classical outcome, and a preparation, as shown in ii).
    Such devices are called \textit{entanglement-breaking (EB) channels}.
  }
\end{figure}

Due to its relevance in quantum information science, the benchmarking of quantum memories has been extensively considered in the literature~\cite{Namiki2008,Haseler2008,Haseler2009,Macchiavello2013,Pusey2015}.
An obvious way to verify whether a channel is EB or not is by performing process tomography~\cite{Poyatos1997,DAriano2001,Mohseni2008}, that is, by feeding through the channel a tomographically complete set of states and performing a tomographically complete measurement at the output.
By collecting sufficient statistics, it is possible to reconstruct the process matrix corresponding to the channel up to any desired level of accuracy, and check whether the channel is EB or not.
This scheme, however, requires complete control on all parts of the experiment: the state preparation device and measurement device must be both accurate and trusted, i.e., the experimenter must trust that they are doing \textit{exactly} what they would like to achieve (or, at least, what they think they are doing).

The possibility of lifting the assumption of trust in whole (or parts) of the experimental setup is at the roots of an important area of research known as (partially)-\textit{device-independent} quantum information~\cite{Scarani2012,Rosset2012a,Brunner2014}, with important applications not only to quantum technologies~\cite{Ekert1991,Acin2007,Vazirani2014,Colbeck2006,Pironio2010,Colbeck2011,Reichardt2013}, but also to the foundations of quantum theory~\cite{Bell1964,Bancal2012a,DallArno2017a,DallArno2017b}.
Indeed, a fully device-independent approach to quantum channel verification can be achieved in practice (see Fig.~\ref{fig:bell-test}).
An initial entangled state $\varphi\on{RA}$ is prepared and subsystem A is fed through the channel to be tested, denoted by $\channel{N}$, with its output subsystem denoted by B.
Channel $\channel{N}$ may represent the actual quantum communication channel delivering subsystem A from one location to another, or perhaps just a static quantum memory, storing subsystem A while R is moved to a different location.
However, both these scenarios are mathematically equivalent, as the final result, in both cases, is that the initial bipartite state $\varphi\on{RA}$ is transformed into another bipartite state $\tilde{\varphi}\on{RB}=(\id\on{R}\otimes\mN\on{A})\varphi\on{RA}$, now shared between two different locations.
At this point, a Bell test can be performed.
The violation of any Bell inequality between R and B would constitute an irrefutable evidence that the state $\tilde{\varphi}\on{RB}$ is entangled, and hence that the channel $\mN$ is in the quantum domain, even in the case in which both preparation and measurement devices are untrusted.

\begin{figure}
  \includegraphics{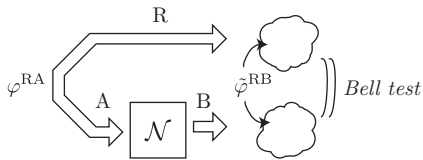}
  \caption{Quantumness verification via Bell tests.}
  \label{fig:bell-test}
\end{figure}

This approach allows one to verify a quantum channel in a fully device-independent way, namely, based solely on the correlations observed in the experiment.
It has however some limitations:
\begin{enumerate}
\item The ability to prepare entangled states is assumed.
\item Subsystem R must be preserved \textit{intact} during the waiting phase.
  Of course one could imagine to store $R$ in another copy of the same memory $\channel{N}$, but we would then be testing the quantumness of $\channel{N}^{\otimes 2}$, which is generally more demanding than testing $\channel{N}$ alone.
\item The violation of a Bell inequality is a sufficient condition for quantum entanglement, but it is known \textit{not to be necessary}, as there exist entangled states that do not violate any Bell inequality~\cite{Werner1989,Barrett2002}.
  In our framework, this is equivalent to saying that there exist genuinely quantum channels that would never pass this test~\cite{Pal2015}.
\item Bell tests are very fragile with respect to losses (during storage or detection)~\cite{Santos1992,Brunner2014}.
\end{enumerate}
Experimental violations of the Clauser-Horne-Shimony-Holt inequality (CHSH)~\cite{Clauser1969} have been reported for quantum memories~\cite{Ding2015,Ding2015a,Tiranov2015}, but without closing all loopholes (in particular the detection and locality loopholes).

Ideally then, we would like to lift all the above four assumptions and construct tests that:
\begin{enumerate}
\item do not require entangled states;
\item do not require the use of any additional side-channel (notice that an identity channel, as it appears for example in Fig.~\ref{fig:bell-test}, counts as one);
\item are faithful, i.e., are able to verify any channel in the quantum domain;
\item are loss tolerant;
\item trust neither the preparation of the input nor the measurement of the output.
\end{enumerate}
Alas, the above desiderata cannot be all met. In particular, the intimately related conditions~(1) and~(2) force us to test the channel in a time-like setting, thus ruling out, not only in practice but also \textit{in principle}, a fully device-independent solution.
In this paper, following Ref.~\cite{Pusey2015}, we give up the possibility to mistrust the preparation device, that is, we assume that the skeptic who wants to be convinced of the quantumness of the memory is able to prepare trusted states.

The necessity of trusting the input to the device is a consequence of the causal configuration of a channel test, in which the output of the channel necessarily lies in the future of the input.
In such a configuration, since we must allow any amount of classical communication for free (as we are testing for quantumness), if the input to the device is untrusted, then \textit{any} correlation can be reproduced, and no conclusion can be drawn about the nature of the channel.
In other words, the trusted input assumption is \textit{minimal} in quantum channel verification\footnote{This is to be compared with the protocol of \textit{quantum channel falsification}, as considered in~\cite{DallArno2017a,DallArno2017b}. There, no side channel, not even a classical one, is freely granted, but must be part of the initial hypothesis (the one to be falsified). As a consequence, channel falsification can be done in a fully device-independent way, whereas quantum channel verification (the task considered in this paper) can \textit{at most} be made measurement device-independent, but not fully device-independent.}.

We will thus work within the so-called measurement device-independent (MDI) framework~\cite{Branciard2013}, in particular taking inspiration from semiquantum nonlocal games~\cite{Buscemi2012}, which generalize the usual Bell-local games by enabling the referee to send quantum input states to the players.
Until now, this framework has only been applied to scenarios involving space-like separated systems, where all entangled states can be detected~\cite{Buscemi2012,Branciard2013}; this characterization is faithful even in the presence of arbitrary losses and classical communication between the systems~\cite{Rosset2013a}.
Other instances of its application include steering scenarios~\cite{Cavalcanti2013a} and nonclassical teleportation~\cite{Cavalcanti2017a}.
Some of these tests have been implemented experimentally~\cite{Xu2014,Nawareg2015,Kocsis2015,Verbanis2016}; and even when no prior knowledge about the tested systems is available, the verification can be performed directly on experimental data~\cite{Yuan2016,Verbanis2016,Rosset2017b}.
Within this framework, several authors proposed structural or quantitative tests of entanglement~\cite{Zhao2016,Shahandeh2017,Supic2017a,Rosset2017b} or the quantification of generated randomness~\cite{Cao2015,Chaturvedi2015,Supic2017a}.
In our construction below, we apply the MDI framework to temporal correlations arising out of the use of a quantum memory.

\section{The setup}

The construction that we propose here is best formulated as a game played between a referee (the buyer) and an experimentalist (the vendor) trying to convince the referee about her ability to store quantum information.
Contrary to Bell games which involve multiple players, our temporal game involve only one player, inquired at two successive instants in time.
It is then reasonable to assume that the player can retain an arbitrarily large amount of classical information (her memory) during the game.
This is also the reason why we explicitly avoid here speaking of two different players, one receiving the first question (Alice) and another one receiving the second question (Bob) --- nothing prevents Bob from being just Alice in the future.
In the present setup of ``one player at two different times'',  we name this player ``Abby'' and impose the same operational assumptions for her at all times. A schematic representation is given in Fig.~\ref{fig:game-config}.

\begin{figure}
  \includegraphics{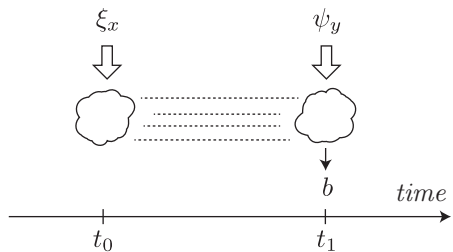}
  \caption{
    \label{fig:game-config}
    The test configuration.
    Time flows from left to right.
    At time $t=t_0$ a first input $\xi_x$ (that must be trusted by the referee who prepares it) is handed over to the experimentalist (whose laboratory is represented by the flying cloud).
    After some time, at $t=t_1>t_0$, the referee hands over a second input $\psi_y$.
    At this point the experimentalist is required to output a classical outcome, labeled by $b$.
    By repeating this procedure many times and observing the input/output correlation $p(b|x,y)$ so obtained, it is possible to verify whether the experimentalist is actually able or not to preserve quantum information through a time $\delta=t_1-t_0>0$.
  }
\end{figure}

More explicitly, the situation considered here is as follows:
\begin{enumerate}
\item At time $t=t_0$, the referee provides a first quantum input to Abby.
  Abby knows the set from which the state has been chosen, but she does not know the actual state given to her, while the referee has perfect knowledge of the corresponding density matrix $\xi_x$.
\item The referee then waits for some time, necessary for Abby to implement the specific channel or memory.
  Then, at time $t=t_1=t_0+\delta>t_0$, the referee hands a second quantum state $\psi_y$ over to Abby.
  Again, Abby knows the set of possible states, but is ignorant about the actual state that she receives.
\item After Abby receives $\psi_y$, she is required to broadcast a classical outcome, labeled by $b$.
\item By repeating the same procedure a sufficient number of times, the referee, by looking at the input/output correlation $p(b|x,y)$ obtained, can decide whether Abby has been able to store quantum information during a time $\delta$, or not.
\end{enumerate}

A comparison with the framework of Ref.~\cite{Pusey2015} is in order.
There also, the input to the channel is trusted (the referee can trust the state preparation device), however, the second inquiry is restricted to be a classical question.
This is the reason why the scheme proposed in Ref.~\cite{Pusey2015} is not able to detect all channels in the quantum domain, but only those corresponding to a steerable Choi operator.
Here, instead, we take the second question to be again encoded on a trusted quantum state.
Notice that this does not constitute an extra assumption: if the referee must trust the input she prepares at time $t_0$ (otherwise quantum channel verification is impossible), there is no reason to mistrust the preparation device at time $t_1$ (just to trust it again at the next round of the game).
Indeed, since the game is assumed to go on until sufficient statistics is collected, the assumption of a ``reusable'' trusted preparation device is essential.
Notice also that we \textit{never} require the ability to prepare entangled or classically correlated states: the quantum questions can always be drawn independently, so the operational assumptions about the preparation device are exactly those in Ref.~\cite{Pusey2015}, namely, of a trusted, re-usable, re-initializable (in information theory jargon, i.i.d.) quantum state preparation device.
The same i.i.d. assumption is made about the quantum memory being tested.

The rest of the paper is organized as follows: in Section~\ref{sec:resourcetheory} we put our discussion on formal ground, by formulating it as a resource theory of quantum memories, including a definition of monotones ranking the usefulness of memories.
In Section~\ref{sec:existence}, we show that the protocol described above defines a family of monotones able to order precisely all resources.
In particular, they enable the verification of the quantumness of any channel as soon as it is not entanglement-breaking.
The above protocol hence provides \textit{faithful} tests (i.e. necessary and sufficient), thus improving on those in Ref.~\cite{Pusey2015} which are only sufficient.
In Section~\ref{sec:construction} we will complement the existence proof with a way to constructively find practical tests, and discuss the role of losses in practical implementations.
In particular, we argue that the present approach is robust against typical models of loss, contrarily to what happens in Bell tests.

\section{Quantum memories as resources}
\label{sec:resourcetheory}
Let us consider a quantum channel $\channel{N}$ between finite Hilbert spaces $\hilbert{A}$ and $\hilbert{B}$, governing the evolution of an arbitrary state stored inside a quantum memory.
The channel $\channel{N}$ is a completely positive trace-preserving (CPTP) linear map $\channel{N}: \density{A} \to \density{B}$.
We write $\density{A}$ and $\density{B}$ the sets of density matrices respectively for $\hilbert{A}$ and $\hilbert{B}$; later, we will  write $\subdensity{A}$ for subnormalized density matrices.
The original version of semiquantum games~\cite{Buscemi2012} was constructed to detect the presence of entanglement in quantum states.
The present work, however, concerns the detection of channels in the quantum domain.
While resource theories have been extensively studied for quantum states~\cite{Horodecki2012,Brandao2015,Coecke2016,Streltsov2017,Buscemi2017,Buscemi2017} (correlations in space), limited results exist for quantum channels~\cite{BenDana2017} (correlations in time).
We thus introduce below a resource theory of quantum channels that mirrors the resource theory of entanglement for quantum states.

We include the following free operations in our resource theory.
First, we allow local operations, which are manipulations of a quantum system at a specific point in spacetime.
Second, we allow unlimited storage of classical information; this includes the use of preexisting randomness.
However, we do not include storage of quantum information (that is when it cannot be otherwise simulated with the free operations above).

For arbitrary (finite dimensional) $\hilbert{A}$ and $\hilbert{B}$, we consider the set $\free$ of quantum channels storing only classical information, the so-called measure-and-prepare or quantum-classical-quantum channels:
\begin{equation}
  \label{eq:entanglementbreaking}
  \channel{N}\on{A\to B}(\rho\on{A}) = \sum_\mu \pi(\mu) \sum_i {\rho'}_{i,\mu}\on{B} \Tr{\Pi_{i|\mu}\on{A} \rho\on{A}}
\end{equation}
where $\mu$ is preexisting randomness distributed according to $\pi(\mu)$; the measurement on the channel input $\hilbert{A}$ is described by the positive-operator valued measure (POVM) elements $\{ \Pi_{i|\mu}\on{A} \}$ that possibly depend on $\mu$; the family $\{{\rho'}_{i,\mu}\on{B} \}$ of density matrices prepared at the channel output $\hilbert{B}$ is indexed by the previously stored measurement result $i$ and the classical index $\mu$ (see Fig.~\ref{fig:ebprocessors}ii).
For ease of presentation when there is no ambiguity, we shall sometimes suppress the superscripts used to indicate the Hilbert space on which each operator acts.

Channels that can be written as in~(\ref{eq:entanglementbreaking}) are also known as entanglement-breaking (EB), as they are exactly those channels that, when applied locally to any bipartite state, produce a separable output~\cite{Horodecki2003}. We notice that the usual definition of EB channels does not explicitly account for the mixing index $\mu$, however, it is a straightforward matter to verify that the two are equivalent (see Appendix~\ref{app:sharedrandomness}).

In our resource theory, the channels in $\free$ are the free objects.
In contrast, the quantum channels $\channel{N}$ not in $\free$ are genuine resources, as they require a quantum memory in their implementation.
Such channels are thus in the quantum domain, and by definition are not EB: they preserve the entanglement in at least {\em some} states $\rho_\mathrm{AR}$.

\begin{figure}
  \includegraphics{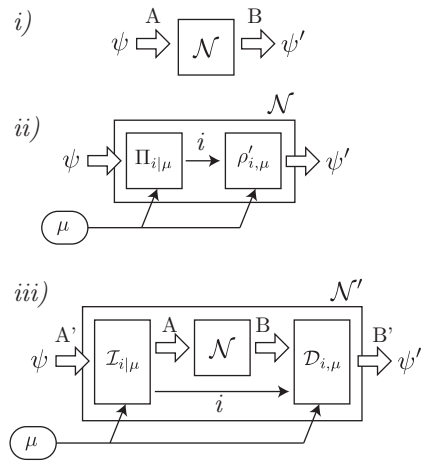}
  \caption{
    \label{fig:ebprocessors}
    i) The channel $\channel{N}:\density{A}\to\density{B}$ in our resource theory.
    ii) A free channel decomposed as a quantum-classical-quantum, or measure-and-prepare channel using POVM elements $\{\Pi_{i|\mu}\}$ and a family of states $\rho'_{i,\mu}$.
    iii) The free transformation of a channel $\channel{N}$ into $\channel{N}'$.
  }
\end{figure}

The last piece of our resource theory concerns the transformations of quantum channels.
The transformations $\Lambda$ we consider are the \textit{quantum supermaps}~\cite{Chiribella2008,Bisio2012} that transform a quantum channel $\channel{N}: \density{A} \to \density{B}$ into another quantum channel $\channel{N}': \density{A'} \to \density{B'}$, possibly related to different Hilbert spaces $\hilbert{A'}$ and $\hilbert{B'}$.
We allow the following ingredients in these supermaps:
\begin{itemize}
\item the use of preexisting randomness $\mu$,
\item a quantum operation $\density{A'} \to \density{A}$ before the use of the channel $\channel{N}$, possibly selected by $\mu$; in general, such operation can also produce a classical outcome labeled by $i$, that is we allow quantum instruments,
\item the storage of an arbitrary amount of classical information in the memory,
\item a quantum operation $\density{B} \to \density{B'}$, possibly selected using the classical information $i$ transmitted and the preexisting randomness $\mu$.
\end{itemize}
The general form of a free transformation $\Lambda$ thus has the following components (diagram in Fig.~\ref{fig:ebprocessors}iii):
\begin{equation}
  \label{eq:freetransformation}
  \channel{N}' = \Lambda[\channel{N}] = \sum_\mu \pi(\mu) \sum_i \mD_{i,\mu}\on{B'\from B} \circ \mN\on{B\from A} \circ \mI_{i|\mu}\on{A\from A'}\;.
\end{equation}
For each $\mu$, the quantum instrument $\{\mI_{i|\mu}\}_i$ is represented by a family of completely positive (CP) maps $\mI_{i|\mu}: \density{A} \to \subdensity{A'}$; while each $\mI_{i|\mu}$ is not necessarily trace-preserving, their sum $\sum_i \mI_{i|\mu}$ is trace preserving for each $\mu$.
The quantum operations $\{ \mD_{i,\mu} \}$ are indexed by $i$ and $\mu$, and for each pair $(i,\mu)$, the operator $\mD_{i,\mu}: \density{B'}\to\density{B}$ is a CPTP map.
Supermaps in the sense of~\eqref{eq:freetransformation} are {\em free} in the sense that they only require a classical memory; as a consequence, $\Lambda$ maps EB channels $\channel{N} \in \free$ to EB channels $\channel{N}' \in \free$.
To distinguish our free transformations from the set of all possible supermaps, we write $\freeop$ the set of all transformations of the form~\eqref{eq:freetransformation}, that we name {\em classically correlated supermaps}.

In the present case, we allow unbounded classical communication as part of the free resources.
Thus, as discussed in Appendix~\ref{app:sharedrandomness}, the establishment of the preexisting randomness $\mu$ can be embedded in the quantum instrument $\{\mathcal{I}_i\}$, a fact that we use later to simplify the notation in the proofs.

The set of free resources $\free$ is defined by the operational fact that the free channels break entanglement.
It is also reasonable that free resources constitute a convex set: the mere act of convexly mixing two ``useless'' channels should not give rise to a ``useful'' one.
This is indeed the case for EB channels.
However, the choice of the set $\freeop$ of free transformations that preserve $\free$ is not necessarily unique.
For example, one could consider, instead of classically correlated supermaps, only pre-processings or, alternatively, only post-processings: these would also leave the set of entanglement-breaking channels invariant.
However, such choices would not allow the preparation of any EB channel for free: in this sense, the choice we make here for $\freeop$ seems to be the most natural one in the present setting.

The fact that free operations are not (logically speaking) uniquely fixed by the set of free resources is a common feature shared by many resource theories.
For example, in the resource theory of entanglement, the set of free states is uniquely given by the separable states; however, this does not automatically single out a set of free transformations, which can be taken as the set of local quantum operations correlated by two- or one-way classical communication, or just by preexisting classical correlations, or even by classical correlations without causal order~\cite{Akibue2017} (the resulting transformations corresponding to separable operations).
In our case, we built $\freeop$ from the most general sequence of operations that respects the temporal order and grants any form of classical memory for free.
Smaller classes exist, such as supermaps correlated only by preexisting randomness, but those would correspond to artificial restrictions on the quantum memory users.
It is unclear whether \textit{larger} classes would exist after relaxing restrictions on the temporal order, however, it is doubtful that a resource theory of quantum memories would be meaningful in that context.

Straightforwardly, we observe that any EB channel $\channel{N} \in \free$ can be mapped to any other EB channel $\channel{N}' \in \free$ using a suitable $\Lambda \in \freeop$.
We also observe that memories that preserve exactly their input state are represented by the identity channel, and are the most powerful resources in our theory (of course, identity channels do not belong to $\free$).
Indeed, if we have at disposal a memory whose operation is the identity channel $\id: \density{A} \to \density{A}$, any memory $\channel{N}: \density{A} \to \density{B}$ can be simulated by using the free transformation $\Lambda[\id] = \mD \circ \id$: in this supermap, we do not use a quantum instrument, but simply apply the operation $\mD = \channel{N}$ after storage in $\id$ to simulate the processing of $\channel{N}$.
The transformations $\Lambda \in \freeop$ classify the relative power of quantum memories, as formalized by the following definition.
\begin{definition}
  \label{def:sufficient}
  We write $\channel{N} \sufficient \channel{N}'$ whenever there exists a classically correlated supermap $\Lambda \in \freeop$ such that $\channel{N}' = \Lambda[\channel{N}]$.
\end{definition}
The transformations $\freeop$ induce a partial order on quantum channels.
Given two channels $\channel{N}$ and $\channel{N}'$, it can be that neither $\channel{N} \sufficient \channel{N}'$ nor $\channel{N}' \sufficient \channel{N}$ hold; however, if $\channel{N} \sufficient \channel{N}'$ and $\channel{N}' \sufficient \channel{N}''$, then $\channel{N} \sufficient \channel{N}''$.

As it often happens with generalized resource theories, the partial order $\sufficient$ can be studied within the framework of statistical comparison in statistical decision theory, in the sense of Blackwell~\cite{Blackwell1953} and Le Cam~\cite{Cam1964}.
These ideas have been recently extended to the quantum setting~\cite{Buscemi2012a} and constitute the basis on which semiquantum games were originally introduced~\cite{Buscemi2012}.
More recently, the theory of (quantum) statistical comparison have been successfully applied to quantum information theory~\cite{Buscemi2014,Buscemi2016}, quantum open systems dynamics~\cite{Buscemi2014a,Buscemi2016a} and quantum thermodynamics~\cite{Buscemi2015,Buscemi2017,Buscemi2016b,Gour2017}.
One of the results of this work is to construct a framework to apply quantum statistical decision theory to the comparison of quantum channels and memories.

Note that the partial order $\sufficient$ distinguishes qualitatively between free and nonfree resources.
Quantitative statements about the usefulness of a resource are made using {\em monotones}, which are in this case real-valued functions of quantum channels, $\channel{N} \mapsto M(\channel{N})$, such that:
\begin{equation}
  \label{eq:monotone}
  \channel{N} \sufficient \channel{N}' \implies M(\channel{N}) \ge M(\channel{N}')\;.
\end{equation}

In particular, let $\channel{N}$ and $\channel{N}'$ be two free resources (i.e., two EB channels).
As $\channel{N} \sufficient \channel{N}'$ and $\channel{N}' \sufficient \channel{N}$, it follows that $M(\channel{N}) = M(\channel{N}')$.
Thus, any given monotone has a constant value on the set of free resources.

A {\em complete family of monotones} is a set $\{M_i\}_{i\in I}$ that completely characterizes the partial order $\sufficient$:
\begin{equation}
  \channel{N} \sufficient \channel{N}' \iff M_i(\channel{N}) \ge M_i(\channel{N}'), \quad \forall i \in I\;.
\end{equation}

As our resource theory allows convex mixtures of free and nonfree resources, the absolute and generalized robustness~\cite{Vidal1999,Steiner2003} can straightforwardly be computed~\footnote{For that purpose, the set of EB channels can be characterized using the symmetric extensions~\cite{Doherty2004} of their Choi, or equivalently, the existence of a channel that performs $1 \to N$ approximate symmetrical cloning~\cite{Bae2006,Chiribella2006a}, and whose restriction to a single output reproduces the original channel for all $N$. This hierarchy of tests can be formulated using semidefinite programming.}, as it has been done in other resource theories~\cite{Gallego2015,Napoli2016,SpekkensInPrep}.
The Schmidt number of a memory, as defined by the Schmidt number of the corresponding Choi operator, is a monotone with a straightforward operational interpretation: it corresponds to the maximal quantum dimension preserved by the channel~\cite{Chruscinski2006}.
Its practical computation, however, is outside the scope of the present manuscript and will be presented in a future work~\cite{ResourceTheoryInPrep}.

In the present work, we focus on a family of monotones that can be measured using minimal experimental assumptions.
As the family is complete, for any $\channel{N} \notin \free$, there exists a monotone that distinguishes $\channel{N}$ from free resources.
Those monotones are built on the semiquantum signaling games detailed in the next section.

\section{Semiquantum signaling games and existence of a test}
\label{sec:existence}
We come back to the semiquantum tests described in the introduction and put them on a formal footing.
\begin{definition}
  A \emph{semiquantum signaling scenario} is a tuple $\ssscenario = (\set{X},\set{Y},\set{B},\inputset{X},\inputset{Y})$ where:
  \begin{enumerate}
  \item $\set{X}=\{x\}$ and $\set{Y}=\{y\}$ are two (finite) index sets for the referee's questions;
  \item $\set{B}=\{b\}$ is a (finite) index set for Abby's answers;
  \item $\inputset{X}=\{\xi_x \in \density{X}:x\in\set{X} \}$ and $\inputset{Y}=\{\psi_y\in\density{Y}:y\in\set{Y} \}$ are two families of quantum states on $\hilbert{X}$ and $\hilbert{Y}$.
  \end{enumerate}
\end{definition}
Having fixed a semiquantum signaling scenario $\ssscenario$, a \emph{semiquantum signaling game} is played as follows:
\begin{enumerate}
\item at some time $t=t_0$, the referee randomly chooses an initial question $x\in\set{X}$ and sends the first state $\xi_x$ to Abby;
\item at a later time $t=t_1=t_0+\delta$ ($\delta>0$), the referee chooses another question $y\in\set{Y}$ and sends the second state $\psi_y$ to Abby;
\item Abby replies with an answer $b\in\set{B}$;
\item a payoff function $\payoff:\set{B}\times\set{X}\times\set{Y}\to\mathbb{R}$, publicly announced before the game started, decides the values $\payoff(b,x,y)$, namely, how much answer $b$ earns or costs Abby in the face of questions $x$ and $y$.
\end{enumerate}
A semiquantum signaling game is completely described by the payoff function $\payoff$ in the (implicit) context of a scenario $\ssscenario$.
By comparing the relative frequencies of the questions posed and the answers given, the referee can estimate the correlation $p(b|x,y)$.
Note that we do not address the effects of finite statistics in the present work, and use for now ideal distributions for $p(b|x,y)$.
We prescribe the computation of Abby's average payoff to be:
\begin{align}
  \label{eq:expect-payoff}
  \sum_{x,y}\payoff(b,x,y)p(b|x,y)\;,
\end{align}
which corresponds, up to a multiplicative factor, to a uniformly random distribution of inputs $p(x,y)$.

The resource that Abby can utilize in order to maximize her expected payoff is a given channel $\channel{N}:\density{A}\to\density{B}$, taking quantum states defined on an input Hilbert space $\hilbert{A}$ at time $t_0$, to quantum states defined on an output Hilbert space $\hilbert{B}$ at time $t_1$.
We assume that Abby uses the same channel $\channel{N}$ only once in each round of the game.
Besides the channel $\channel{N}$, Abby can freely use any amount of classical memory she wants.
Formally, this amounts to letting Abby transform the channel $\channel{N}$ into $\channel{N}': \density{X}\to\density{B'}$ by any of the (free) transformations $\Lambda \in \freeop$.
While doing so, Abby also adapts the dimension of the input in case the spaces $\hilbert{A}$ and $\hilbert{X}$ are not isomorphic.
When also the second quantum question is received, Alice may jointly measure this second question (state) along with the output of the transformed channel $\channel{N}'$.

Referring to Appendix~\ref{app:sharedrandomness}, we embed randomness in the classical communication and write any admissible strategy for Abby for the resource $\channel{N}$ as:
\begin{align}
  \label{eq:admissiblefull}
p_\channel{N}(b|x,y)=\sum_i \Tr{\left\{(\mD\on{B}_i \circ \channel{N}\on{A} \circ\mI_i\on{X})(\xi_x\on{X})\otimes\psi_y\on{Y}\right\}\ {B'}_{b|i}\on{B'Y}}\;,
\end{align}
where $\{{B'}_{b|i}\on{B'Y}\}$ represents a POVM for each $i$, in the sense that $\sum_b {B'}_{b|i}=\openone$ for all $i$.
Since the action of the channels $\mD_i^B$ can be absorbed in the POVMs $\{{B'}_{b|i}\on{B'Y}\}$ we will, from now on, avoid writing them explicitly.

Given a resource channel $\channel{N}$, we obtain our characterization of admissible strategies.
\begin{definition}\label{Dfn:Admissible}
  The correlations $p_\channel{N}(b|x,y)$ are admissible for the resource $\channel{N}$ in the scenario $\ssscenario$ if there exists a quantum instrument $\{ \mI_i\on{X} : \density{X} \to \subdensity{A} \}$, and a family of measurements $\{ B_{b|i}\on{BY} \}$ such that:
\begin{align}
  \label{eq:admissible}
  p_\channel{N}(b|x,y)=\sum_i \Tr{\left\{(\channel{N}\on{A} \circ\mI_i\on{X})(\xi_x\on{X})\otimes\psi_y\on{Y}\right\}\ B_{b|i}\on{BY}}.
\end{align}
\end{definition}
For a given channel $\channel{N}$ and a given semiquantum signaling scenario $\ssscenario=(\set{X},\set{Y},\set{B},\inputset{X},\inputset{Y})$, we define the set
\begin{equation}
\label{eq:admissibleset}
\set{S}(\channel{N},\ssscenario)\defeq\{p_\channel{N}(b|x,y):p_\channel{N}(b|x,y)\text{ is admissible} \}\;,
\end{equation}
where by ``admissible'' we mean that $p_\channel{N}(b|x,y)$ can be written as in Eq.~(\ref{eq:admissible}) for varying instruments and POVMs.
The set $\set{S}(\channel{N},\ssscenario)$ is a convex, closed and bounded subset of $\mathbb{R}^N$ for $N={|\set{B}|\cdot|\set{X}|\cdot|\set{Y}|}$ (see Appendix~\ref{app:sharedrandomness}).

It is possible to compute the utility of the resource channel $\mN$ in any semiquantum signaling game by computing the maximal expected payoff over all admissible strategies:
\begin{equation}
  \label{eq:optimalpayoff}
  \payoff^*(\channel{N})\defeq\max\sum_{x,y}\payoff(b,x,y)p_\channel{N}(b|x,y)\;,
\end{equation}
where the optimization is made over all $p_\channel{N}(b|x,y)\in\set{S}(\channel{N},\ssscenario)$.

\begin{proposition}
  Let $\ssscenario$ and $\payoff$ define a semiquantum signaling game.
  Then $\payoff^*: \channel{N} \mapsto \payoff^*(\channel{N})$, as defined in Eq.~\eqref{eq:optimalpayoff} is a monotone in the sense of Eq.~\eqref{eq:monotone}.
\end{proposition}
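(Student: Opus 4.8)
The plan is to show that if $\channel{N} \sufficient \channel{N}'$, then every admissible correlation for $\channel{N}'$ in any scenario $\ssscenario$ is also admissible for $\channel{N}$; this gives $\set{S}(\channel{N}',\ssscenario) \subseteq \set{S}(\channel{N},\ssscenario)$, from which $\payoff^*(\channel{N}) \ge \payoff^*(\channel{N}')$ follows immediately by taking the maximum of the same linear functional over the larger set. So the monotonicity of $\payoff^*$ reduces to a set-inclusion statement about admissible correlations, which is the natural thing to prove and makes the argument scenario-independent.

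The key step is then the inclusion $\set{S}(\channel{N}',\ssscenario) \subseteq \set{S}(\channel{N},\ssscenario)$ under $\channel{N} \sufficient \channel{N}'$. By Definition~\ref{def:sufficient}, $\channel{N}' = \Lambda[\channel{N}]$ for some $\Lambda \in \freeop$ of the form~\eqref{eq:freetransformation} (and, using Appendix~\ref{app:sharedrandomness}, I will absorb the preexisting randomness $\mu$ into the classical index, so $\channel{N}' = \sum_j \mD_j\on{B'\from B} \circ \mN\on{B\from A} \circ \mI'_j\on{A\from A'}$). Take any $p_{\channel{N}'}(b|x,y) \in \set{S}(\channel{N}',\ssscenario)$, witnessed by an instrument $\{\mathcal{J}_k\on{X} : \density{X} \to \subdensity{A'}\}$ and POVMs $\{B_{b|k}\on{B'Y}\}$ as in Eq.~\eqref{eq:admissible} (with $\channel{N}'$ in place of $\channel{N}$ and $\hilbert{A'}$ in place of $\hilbert{A}$). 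Substituting the decomposition of $\channel{N}'$ and expanding, one gets
\begin{align}
  \label{eq:proof-expand}
  p_{\channel{N}'}(b|x,y)
  = \sum_{j,k}\Tr{\left\{\bigl(\channel{N}\on{A} \circ \mI'_j\on{A\from A'} \circ \mathcal{J}_k\on{X}\bigr)(\xi_x)\otimes\psi_y\right\}\,\bigl(\mD_j^{\dagger}\otimes\id\on{Y}\bigr)(B_{b|k})}\;,
\end{align}
where I have used the adjoint (Heisenberg picture) of the CPTP map $\mD_j\on{B'\from B}$ to move it onto the measurement side. Now I define a new composite index $i \equiv (j,k)$, a new instrument $\widetilde{\mathcal{I}}_{i}\on{X} \defeq \mI'_j \circ \mathcal{J}_k : \density{X} \to \subdensity{A}$, and new measurement operators $\widetilde{B}_{b|i}\on{BY} \defeq (\mD_j^{\dagger}\otimes\id\on{Y})(B_{b|k})$. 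One checks that $\{\widetilde{\mathcal{I}}_i\}_i$ is a valid instrument: each piece is CP and trace-nonincreasing, and summing over $i=(j,k)$ uses first $\sum_k \mathcal{J}_k$ trace-preserving, then $\sum_j \mI'_j$ trace-preserving, so $\sum_i \widetilde{\mathcal{I}}_i$ is trace-preserving. Likewise $\{\widetilde{B}_{b|i}\}_b$ is a POVM for each $i$ because $\mD_j^{\dagger}$ is completely positive and unital (the adjoint of a CPTP map is unital), so $\sum_b \widetilde{B}_{b|i} = (\mD_j^{\dagger}\otimes\id\on{Y})(\openone) = \openone$. Thus Eq.~\eqref{eq:proof-expand} exhibits $p_{\channel{N}'}(b|x,y)$ in exactly the form~\eqref{eq:admissible} for the resource $\channel{N}$, i.e.\ $p_{\channel{N}'} \in \set{S}(\channel{N},\ssscenario)$.

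The main obstacle — more a bookkeeping subtlety than a genuine difficulty — is making sure all the pieces of the free supermap are correctly accounted for: that the preexisting randomness $\mu$ can indeed be folded into the classical index (this is where I invoke Appendix~\ref{app:sharedrandomness}), that the pre-processing instrument of $\Lambda$ composes with Abby's own instrument to again give a legitimate (trace-preserving after summing) instrument, and that the post-processing channels $\mD_j$ get correctly transposed onto the POVM in the Heisenberg picture while preserving the normalization $\sum_b \widetilde{B}_{b|i}=\openone$. Once the inclusion $\set{S}(\channel{N}',\ssscenario)\subseteq\set{S}(\channel{N},\ssscenario)$ is in hand, the conclusion $\payoff^*(\channel{N}) = \max_{p\in\set{S}(\channel{N},\ssscenario)} \sum_{x,y}\payoff(b,x,y)\,p \ge \max_{p\in\set{S}(\channel{N}',\ssscenario)}\sum_{x,y}\payoff(b,x,y)\,p = \payoff^*(\channel{N}')$ is immediate, which is precisely Eq.~\eqref{eq:monotone} for $M=\payoff^*$.
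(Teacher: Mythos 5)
Your proposal is correct and follows essentially the same route as the paper: the paper's proof simply observes that any strategy achieving $\payoff^*(\channel{N}')$ can be realized with $\channel{N}$ by folding the free supermap $\Lambda$ into Abby's strategy (Eq.~\eqref{eq:admissiblefull}), with the post-processing maps $\mD_j$ absorbed into the POVMs exactly as you do via the Heisenberg picture. You merely spell out the bookkeeping (composition of instruments, unitality of $\mD_j^\dagger$) that the paper leaves implicit.
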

\begin{proof}
  Let $\channel{N}$ and $\channel{N}'$ be two channels such that $\channel{N} \sufficient \channel{N}'$.
  Then, any strategy that achieves $\payoff^*(\channel{N}')$ can also be realized using $\channel{N}$, as there is a transformation $\Lambda \in \freeop$ that Abby can use as part of her strategy~\eqref{eq:admissiblefull} to transform $\channel{N}$ into $\channel{N}'$.
  Thus $\payoff^*(\channel{N}) \ge \payoff^*(\channel{N}')$ as required.
\end{proof}

As $\payoff^*$ is a monotone, it has a constant value $\payoff_\text{EB}$ on all entanglement-breaking channels, which we define as the {\em entanglement-breaking threshold} for the signaling semiquantum game defined by $\ssscenario$ and $\payoff$.
For a given $\channel{N}$, any admissible strategy $p_\channel{N}(b|x,y)$ that achieves a payoff greater than $\payoff_\text{EB}$ certifies that $\channel{N}$ is not entanglement-breaking.
This holds even if the admissible strategy does not achieve the maximal expected payoff $\payoff^*(\channel{N})$, which makes the procedure robust when dealing with imperfect experimental implementations.

Given a non-EB $\channel{N}$, the main question is to find a signaling semiquantum game able to certify that $\channel{N} \notin \free$.
We answer that question by proving a stronger result: signaling semiquantum games form a complete family of monotones for the resource theory of quantum memories.

\begin{theorem}
  \label{th:completefamily}
  Let $\channel{N}:\density{A}\to\density{B}$ and $\channel{N}':\density{A'}\to\density{B'}$ be two channels.
  If $\payoff^*(\channel{N}) \ge \payoff^*(\channel{N}')$ for all signaling semiquantum games, then $\channel{N} \sufficient \channel{N}'$.
\end{theorem}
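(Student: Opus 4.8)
The plan is to prove the contrapositive-free direction directly via a separation/duality argument, turning the hypothesis ``$\payoff^*(\channel{N}) \ge \payoff^*(\channel{N}')$ for all games'' into the geometric statement that every admissible correlation for $\channel{N}'$ is also admissible for $\channel{N}$ in a suitably rich scenario, and then decoding such an inclusion into an explicit classically correlated supermap. First I would fix a ``universal'' or tomographically complete scenario $\ssscenario$: choose $\inputset{X}$ to be a spanning set of states on $\hilbert{A'}$ (the input space of $\channel{N}'$) and $\inputset{Y}$ to be a spanning set on a space $\hilbert{B'}$ large enough to perform a tomographically complete measurement at the output together with $\channel{N}'$'s output. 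Since $\payoff^*$ is linear in $p_{\channel{N}}(b|x,y)$ over the convex compact set $\set{S}(\channel{N},\ssscenario)$, the hypothesis ``for all payoff functions $\payoff$, $\payoff^*(\channel{N}) \ge \payoff^*(\channel{N}')$'' is, by the separating hyperplane theorem, equivalent to the inclusion $\set{S}(\channel{N}',\ssscenario) \subseteq \set{S}(\channel{N},\ssscenario)$. This is the standard ``complete family of monotones $\Leftrightarrow$ set inclusion'' step and is where I would invoke convexity, closedness and boundedness of $\set{S}(\channel{N},\ssscenario)$ as asserted after Eq.~\eqref{eq:admissibleset}.

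Next I would exhibit a canonical strategy for $\channel{N}'$ that lives in $\set{S}(\channel{N}',\ssscenario)$ and is ``maximally informative'': take the trivial instrument $\mI_i\on{X}$ that simply identifies $\hilbert{X}=\hilbert{A'}$ and post-selects nothing (a single $i$), and let $\{B_{b|i}\on{B'Y}\}$ range over an informationally complete POVM on $\hilbert{B'}\otimes\hilbert{Y}$ (here $\hilbert{Y}$ is just an ancilla used to probe $\channel{N}'$'s output by teleportation-like measurement). The resulting correlation $p_{\channel{N}'}(b|x,y) = \Tr{(\channel{N}'(\xi_x)\otimes\psi_y)\, B_{b|y}}$ determines the Choi operator $\choi{\channel{N}'}$ completely, because $\{\xi_x\}$ spans inputs and $\{B_{b|y}\}$ separates outputs. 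By the inclusion $\set{S}(\channel{N}',\ssscenario) \subseteq \set{S}(\channel{N},\ssscenario)$, this same correlation is realizable with $\channel{N}$: there exist an instrument $\{\mI_i\on{X}:\density{X}\to\subdensity{A}\}$ and POVMs $\{B'_{b|i}\on{BY}\}$ with $\sum_i \Tr{(\channel{N}(\mI_i(\xi_x))\otimes\psi_y)\,B'_{b|i}} = p_{\channel{N}'}(b|x,y)$ for all $x,y,b$.

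The heart of the argument is then to convert this data into a supermap $\Lambda\in\freeop$ with $\Lambda[\channel{N}]=\channel{N}'$. The instrument $\{\mI_i\on{X\to A}\}$ is already the pre-processing of $\eqref{eq:freetransformation}$, with outcome $i$ stored in classical memory. For the post-processing, I would use the fact that the observed correlation pins down $\choi{\channel{N}'}$ to reconstruct, for each $i$, a channel $\mD_i\on{B\to B'}$ such that $\sum_i \mD_i\circ\channel{N}\circ\mI_i = \channel{N}'$: concretely, the measurement $\{B'_{b|i}\on{BY}\}$, being rich enough on the $\hilbert{Y}$-tensor factor, encodes (via the Choi–Jamiołkowski correspondence applied on the $\hilbert{Y}$ slot against the spanning family $\{\psi_y\}$) a CP map from $\hilbert{B}$ to the classical register $b$, and the structure of $p_{\channel{N}'}$ as arising from $\channel{N}'$'s Choi lets one ``invert'' the measurement into a genuine quantum post-processing $\mD_i$ rather than a mere measurement, after absorbing normalization using trace-preservation of $\sum_i\mI_i$ and of $\channel{N}'$. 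Assembling $\Lambda = \sum_i \mD_i\circ(\cdot)\circ\mI_i$ gives $\channel{N}\sufficient\channel{N}'$.

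The main obstacle I anticipate is the last step: going from ``the output measurement statistics match'' to ``there is a bona fide quantum post-processing channel $\mD_i$'', since a priori the inclusion only guarantees that some measurement reproduces the numbers, not that the reproduction factors through a channel. The resolution is to have chosen $\inputset{Y}$ and the game structure so that the second quantum input $\psi_y$ acts as a ``quantum wire'' — i.e.\ the measurement on $\hilbert{B}\otimes\hilbert{Y}$ against a tomographically complete $\{\psi_y\}$ is equivalent, by the semiquantum/teleportation trick of Ref.~\cite{Buscemi2012}, to processing the state on $\hilbert{B}$ and outputting a quantum system, whose subsequent measurement yields $b$; the freedom to choose $\set{B}$ and $\payoff$ large enough then forces the effective channel on $\hilbert{B}$ to be exactly the required $\mD_i$. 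Handling subnormalization of the $\mI_i$ and keeping track of the classical index $\mu$ (absorbed into $i$ as per Appendix~\ref{app:sharedrandomness}) are routine bookkeeping once this structural point is in place.
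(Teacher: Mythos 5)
Your first step --- using linearity of the expected payoff over the convex, closed, bounded sets $\set{S}(\channel{N},\ssscenario)$ together with the separating-hyperplane theorem to convert ``$\payoff^*(\channel{N})\ge\payoff^*(\channel{N}')$ for all $\payoff$'' into the inclusion $\set{S}(\channel{N}',\ssscenario)\subseteq\set{S}(\channel{N},\ssscenario)$ --- is exactly the paper's first lemma and is fine. The gap is in the decoding step, and it is genuine. You choose the target correlation for $\channel{N}'$ to be generated by an arbitrary informationally complete POVM on $\hilbert{B'}\otimes\hilbert{Y}$, and then assert that ``the measurement on $\hilbert{B}\otimes\hilbert{Y}$ against a tomographically complete $\{\psi_y\}$ is equivalent to processing the state on $\hilbert{B}$ and outputting a quantum system, whose subsequent measurement yields $b$.'' That equivalence is false in general: a joint POVM on $\hilbert{B}\otimes\hilbert{Y}$ need not factor as a channel on $\hilbert{B}$ followed by a measurement, and ``choosing $\set{B}$ and $\payoff$ large enough'' does not force such a factorization. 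Knowing that the statistics determine $\choi{\channel{N}'}$ tells you \emph{which} channel you must synthesize, but gives no recipe for a CPTP post-processing $\mD_i$; this is precisely the obstacle you name, and your proposed resolution does not close it.

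The paper closes it by a different arrangement of the teleportation trick, which you gesture at but apply on the wrong side. One fixes the target correlation to be the \emph{signature correlation} of $\channel{N}'$, generated by a \emph{complete Bell measurement} $\{B'_b\}$ on $\hilbert{B'}\otimes\hilbert{Y}$ (not a generic IC POVM). Writing $\psi_y\on{Y}=\PTr{Y'}{\Phi_+\on{YY'}(\openone\otimes\zeta_y\on{Y'})}$ and using tomographic completeness of $\{\xi_x\}$ and $\{\zeta_y\}$, the equality of correlations upgrades to an operator identity. Because the $\channel{N}'$-side measurement is a Bell measurement, the teleportation recovery identity $\channel{N}'(\xi_x)=\sum_b\mU_b\bigl(\PTr{B'Y}{\{\channel{N}'(\xi_x)\otimes\Phi_+\}\{B'_b\otimes\openone\}}\bigr)$ holds for suitable unitaries $\mU_b$; substituting the $\channel{N}$-side expression and defining $\mD_i(\bullet)=\sum_b\mU_b\bigl(\PTr{BY}{\{\bullet\otimes\Phi_+\}\{B_{b|i}\otimes\openone\}}\bigr)$ gives $\channel{N}'=\sum_i\mD_i\circ\channel{N}\circ\mI_i$. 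The point you are missing is that the reproducing measurement $\{B_{b|i}\}$ on the $\channel{N}$ side is \emph{never} required to factor through a channel: the map $\mD_i$ just defined is manifestly CPTP for any POVM, and the equality with $\channel{N}'$ comes from the Bell structure of the \emph{target} correlation, not from any structure of the reproducing strategy. Replacing your IC POVM by the Bell measurement and running this argument repairs the proof.
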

\begin{proof}
  In Appendix~\ref{app:proofthm1}.  
\end{proof}

The answer to our question comes from the following corollary.

\begin{corollary}
  \label{Cor:Witness}
  Let $\channel{N}:\density{A}\to\density{B}$.
  Then $\channel{N}$ is in the quantum domain $(\channel{N} \notin \free)$ if and only if there exists a semiquantum signaling game such that $\payoff^*(\channel{N})>0$ while $\payoff_{\EB}=0$.
\end{corollary}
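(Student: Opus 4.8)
The plan is to obtain the corollary as an essentially immediate consequence of Theorem~\ref{th:completefamily}, using only two facts already in hand: that $\payoff^*$ is a monotone taking the constant value $\payoff_{\EB}$ on all channels in $\free$, and that every $\Lambda\in\freeop$ maps $\free$ into $\free$.

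For the ``if'' direction I would argue by contradiction. Suppose a semiquantum signaling game exists with $\payoff^*(\channel{N})>0$ and $\payoff_{\EB}=0$, but $\channel{N}\in\free$. Since $\payoff^*$ is a monotone, it is constant on $\free$, so $\payoff^*(\channel{N})=\payoff_{\EB}=0$, contradicting $\payoff^*(\channel{N})>0$; hence $\channel{N}\notin\free$.

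For the ``only if'' direction, assume $\channel{N}\notin\free$ and fix any entanglement-breaking channel $\channel{N}_0$ --- for concreteness the channel that discards its input and prepares a fixed state, which is manifestly of the form~\eqref{eq:entanglementbreaking}. I would first observe that $\channel{N}_0\not\sufficient\channel{N}$: were $\channel{N}=\Lambda[\channel{N}_0]$ for some $\Lambda\in\freeop$, then since $\channel{N}_0\in\free$ and $\Lambda$ preserves $\free$ we would get $\channel{N}\in\free$, a contradiction. Applying the contrapositive of Theorem~\ref{th:completefamily} with $\channel{N}_0$ in the role of ``$\channel{N}$'' and $\channel{N}$ in the role of ``$\channel{N}'$'' then yields a signaling semiquantum game $(\ssscenario,\payoff)$ with $\payoff^*(\channel{N}_0)<\payoff^*(\channel{N})$. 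Since $\channel{N}_0$ is EB, $\payoff^*(\channel{N}_0)=\payoff_{\EB}$, so $\payoff_{\EB}<\payoff^*(\channel{N})$.

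The last step is a normalization. Because $\sum_b p(b|x,y)=1$ for all $x,y$, replacing $\payoff(b,x,y)$ by $\payoff(b,x,y)-\payoff_{\EB}/(|\set{X}|\,|\set{Y}|)$ shifts every admissible expected payoff --- hence both $\payoff^*$ of every channel and the threshold $\payoff_{\EB}$ --- by the same constant $-\payoff_{\EB}$. In the shifted game $\payoff_{\EB}=0$ while $\payoff^*(\channel{N})>0$, which is precisely the asserted witness. I do not expect a genuine obstacle here: the entire content is Theorem~\ref{th:completefamily}, and the only idea beyond bookkeeping is the remark that a non-free channel cannot arise from a free one via a free supermap, which is what lets one feed a fixed EB channel into the completeness statement in order to extract a separating game.
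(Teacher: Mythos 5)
Your proposal is correct and follows essentially the same route as the paper: fix an EB channel, note it cannot be mapped to the non-free $\channel{N}$ by a free supermap, invoke the contrapositive of Theorem~\ref{th:completefamily} to obtain a separating game, and shift the payoff so that $\payoff_{\EB}=0$. The only differences are cosmetic --- you spell out the trivial ``if'' direction and the normalization constant explicitly, which the paper leaves implicit.
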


\begin{proof}
  Let $\channel{N}'\in\free$ be any entanglement-breaking channel.
  As $\channel{N}' \nrightarrow \channel{N}$, there exists a semiquantum signaling game $(\ssscenario, \payoff)$ such that $\payoff^*(\channel{N}) > \payoff_\text{EB}$ by the converse of Theorem~\ref{th:completefamily}.
  Finally, $\payoff_{\EB}$ can be made zero by simply shifting the original payoff function by a fixed constant.
\end{proof}

Thus, any channel in the quantum domain can be verified by some semiquantum signaling game.
In the next section, we provide a constructive proof of this fact.

\section{Construction of experimentally friendly semiquantum signaling games}
\label{sec:construction}

We now turn to explicit constructions of tests for channels in the quantum domain, and provide a constructive proof of Corollary~\ref{Cor:Witness} independent of Theorem~\ref{th:completefamily}.
We then illustrate this construction by an example, and finally show that our construction is resistant to losses.

First, let us consider the particular semiquantum correlations that encode the essential knowledge about a channel.

\begin{definition}
  \label{def:signature}
  Given a channel $\channel{N}:\density{A}\to\density{B}$, we define its {\em signature scenario}
  \[
    \sigscenario{\channel{N}}=(\set{X},\set{Y},\set{B},\inputset{X},\inputset{Y})
  \]
  where $\inputset{X}=\{\xi_x\on{X}:x\in\set{X} \}$ and $\inputset{Y}=\{\psi_y\on{Y}:y\in\set{Y} \}$ are tomographically complete, respectively, for the Hilbert spaces of the channel: $\hilbert{X}\cong\hilbert{A}$ and $\hilbert{Y}\cong\hilbert{B}$.
  We set $|\set{B}|=(\dim\hilbert{B})^2$.
  Note that the completeness of $\inputset{X},\inputset{Y}$ implies that $|\set{X}|=(\dim\hilbert{A})^2$ and, $|\set{Y}|=(\dim\hilbert{B})^2$.

  The {\em signature correlation} $\sigcorrelation{\channel{N}}(b|x,y)$ is given by:
  \begin{align}
    \label{eq:signaturecorrelation}
    \sigcorrelation{\channel{N}}(b|x,y)=\Tr{\left\{{\channel{N}}\on{A}(\xi_x\on{X})\otimes\psi_y\on{Y}\right\}\ {B}_b\on{BY}}\;,
  \end{align}
  where we chose the POVM $\{B_b\on{BY}\}$ to be a complete Bell measurement (remember that $\hilbert{A}\cong \hilbert{X}$ and $\hilbert{B}\cong \hilbert{Y}$).
  For later use, we assume that the first POVM element $B_1 = \Phi_+\on{BY}$ is the maximally entangled state in the computational basis:
  \begin{equation}
    \label{eq:maximallyentangled}
    \Phi_+ = \frac{1}{d}\sum_{ij} \ketbra{ii}{jj}\;,
  \end{equation}
  where $d$ is the corresponding Hilbert space dimension.
  Under these conditions, $\sigcorrelation{\channel{N}}$ contains full tomographic data about the channel $\channel{N}$.
\end{definition}

The signature correlation of $\channel{N}$ describes fully its behavior, as it essentially amounts to a sort of quantum process tomography.
This fact is used in the proof of Theorem~\ref{th:completefamily} in Appendix~\ref{app:proofthm1}.
We now move to the explicit construction of semiquantum signaling games that witness channels in the quantum domain.
To characterize such channels, we make use of the Choi-Jamiolkowski~\cite{Choi1975,Jamiolkowski1972} representation $\choi{\channel{N}} \in \densitytwo{A}{B}$ of $\channel{N}: \density{A}\to\density{B}$:
\begin{equation}
  \label{eq:choi}
  \choi{\channel{N}} \defeq (\openone \otimes \channel{N})(\Phi_+)\;,
\end{equation}
where $\Phi_+$ is the maximally entangled state defined in~\eqref{eq:maximallyentangled}, such that:
\begin{equation}
  \label{eq:choiback}
  \Tr{\channel{N}(A)\ B} = d \Tr{\choi{\channel{N}}\ (A^\top \otimes B)}
\end{equation}
where $d =\dim\hilbert{A}$.
The superscript $\top$ denotes the transposition with respect to the computational basis.

\begin{proposition}
  \label{prop:construction}
  Let $\channel{N}: \density{A}\to\density{B}$ be a non-EB channel.
  Then its Choi-Jamiolkowski state $\choi{\channel{N}} = (\openone \otimes \channel{N})(\Phi_+)$ is necessarily entangled~\cite{Jiang2013}.
  Let $W$ be an entanglement witness such that $\Tr{W \choi{\channel{N}}} > 0$ while $\Tr{W \rho} \le 0$ for any separable state $\rho \in \densitytwo{A}{B}$.
  Let $\sigscenario{\channel{N}}$ be the signature scenario associated with $\channel{N}$, in which $W$ has the decomposition:
  \begin{equation}
    \label{eq:witnessdecomposition}
    W = \sum_{xy} \omega_{xy} (\xi_x^\top \otimes \psi_y^\top)\;.
  \end{equation}
  Then the payoff
  \begin{equation}
    \payoff(1,x,y) = \omega_{xy}, \quad \payoff(b>1,x,y) = 0
  \end{equation}
  defines a semiquantum signaling game that satisfies the conditions of the Corollary: $\payoff^*(\channel{N}) > 0$ while $\payoff_{\EB} = 0$.
\end{proposition}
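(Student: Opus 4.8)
The statement has two parts, and I would establish them separately: that the resource $\channel{N}$ earns a strictly positive payoff, and that every entanglement-breaking channel earns at most $0$, so that the constant value $\payoff_{\EB}$ of the monotone $\payoff^*$ on $\free$ is $0$ after the announced shift. Write $d_A=\dim\hilbert{A}$ and $d_B=\dim\hilbert{B}$.

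\emph{Positivity of $\payoff^*(\channel{N})$.} It is enough to exhibit one admissible strategy achieving a positive value, and the natural candidate is the signature correlation $\sigcorrelation{\channel{N}}$ itself, obtained from the trivial single-outcome instrument $\mI=\id$ (using $\hilbert{X}\cong\hilbert{A}$) followed by the Bell POVM $\{B_b\on{BY}\}$ with $B_1=\Phi_+\on{BY}$. For this strategy the expected payoff is $\sum_{x,y}\omega_{xy}\,\sigcorrelation{\channel{N}}(1|x,y)$; evaluating the overlap with $\Phi_+$ through the identity $\Tr{(\sigma\on{B}\otimes\tau\on{Y})\Phi_+\on{BY}}=\tfrac{1}{d_B}\Tr{\sigma\tau^\top}$, then the Choi relation~\eqref{eq:choiback}, and finally the witness decomposition~\eqref{eq:witnessdecomposition}, yields
\begin{align*}
  \sum_{x,y}\omega_{xy}\,\sigcorrelation{\channel{N}}(1|x,y)
  &=\frac{d_A}{d_B}\sum_{x,y}\omega_{xy}\Tr{\choi{\channel{N}}\,(\xi_x^\top\otimes\psi_y^\top)}\\
  &=\frac{d_A}{d_B}\,\Tr{W\,\choi{\channel{N}}}>0\,,
\end{align*}
the last inequality being the defining property of $W$. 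Hence $\payoff^*(\channel{N})\ge\frac{d_A}{d_B}\Tr{W\,\choi{\channel{N}}}>0$.

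\emph{The threshold $\payoff_{\EB}=0$.} One direction is trivial: Abby can always reply with a fixed outcome $b\neq 1$, earning $0$, so $\payoff_{\EB}\ge 0$ for any resource. For the reverse bound, since $\payoff^*$ is a monotone and therefore constant on $\free$, it suffices to show $\payoff^*(\channel{N}')\le 0$ for an arbitrary entanglement-breaking $\channel{N}':\density{A'}\to\density{B'}$. Writing $\channel{N}'$ in measure-and-prepare form $\channel{N}'(\rho)=\sum_k\rho'_k\,\Tr{M_k\rho}$ (shared randomness absorbed, cf.\ Appendix~\ref{app:sharedrandomness}) and taking any admissible strategy~\eqref{eq:admissible} for $\channel{N}'$, I would push the pre-processing instrument into the Heisenberg picture, $\tilde M_k^{(i)}\defeq\mI_i^\dagger(M_k)\ge 0$ on $\hilbert{X}\cong\hilbert{A}$, and fold each prepared state $\rho'_k$ into the output measurement, $Q_k^{(i)}\defeq\PTr{B'}{(\rho'_k\otimes\openone\on{Y})\,B_{1|i}\on{B'Y}}\ge 0$ on $\hilbert{Y}\cong\hilbert{B}$ (positivity of $Q_k^{(i)}$ being seen by conjugating $B_{1|i}$ with $(\rho'_k)^{1/2}\otimes\openone$, which leaves the partial trace unchanged). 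With $P_k^{(i)}\defeq(\tilde M_k^{(i)})^\top\ge 0$ and $R_k^{(i)}\defeq(Q_k^{(i)})^\top\ge 0$ one obtains $p_{\channel{N}'}(1|x,y)=\sum_{i,k}\Tr{P_k^{(i)}\xi_x^\top}\,\Tr{R_k^{(i)}\psi_y^\top}$, and therefore, contracting with $\omega_{xy}$ and using~\eqref{eq:witnessdecomposition},
\[
  \sum_{x,y}\omega_{xy}\,p_{\channel{N}'}(1|x,y)=\sum_{i,k}\Tr{W\,(P_k^{(i)}\otimes R_k^{(i)})}\le 0\,,
\]
because $W$ is nonpositive on the cone of unnormalised separable operators, in particular on every product $P\otimes R$ with $P,R\ge 0$. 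Hence $\payoff^*(\channel{N}')\le 0$ for all EB $\channel{N}'$, so $\payoff_{\EB}=0$, and together with the first part the game satisfies the hypotheses of Corollary~\ref{Cor:Witness}.

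The manipulations with Choi operators and transpositions are routine; the delicate point is the entanglement-breaking step. There the channel $\channel{N}'$ may carry an \emph{arbitrary} output space $\hilbert{B'}$, yet after composition with Abby's adaptive joint measurement the $b=1$ slice of the correlation has to emerge as a \emph{nonnegative} bilinear combination of local factors on $\hilbert{A}$ and on $\hilbert{B}$, so that it can be paired against the witness $W\in\densitytwo{A}{B}$. Making the transpositions land on the correct tensor factors, and verifying the positivity of the two folded operators, is precisely where the structural fact that $\channel{N}'$ is entanglement-breaking --- a classical bottleneck separating its input from its output --- is consumed; and the choice $B_1=\Phi_+$ built into the signature scenario is what makes that $b=1$ slice reproduce exactly the process-tomography data used in the first part.
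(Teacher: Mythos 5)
Your proof is correct and follows essentially the same route as the paper's: the positivity part is the identical Choi-operator computation (your explicit $d_A/d_B$ prefactor is harmless and arguably more careful than the paper's), and the entanglement-breaking part is the same factorization of the $b=1$ correlation into a nonnegative combination of products $P\otimes R$ traced against $W$, which the paper compresses into a single ``without loss of generality'' step where you spell out the Heisenberg-picture instrument, the folding of $\rho'_k$ into the measurement, and the positivity checks. Your added observation that an EB channel can trivially attain payoff $0$ (so that $\payoff_{\EB}$ equals $0$ rather than merely being bounded above by it) is a small completeness point the paper leaves implicit.
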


\begin{proof}
  In Appendix~\ref{app:proofconstruction}.
\end{proof}

With this construction, the payoff makes only use of the coefficients $\sigcorrelation{\channel{N}}(b=1|x,y)$, and thus the complete Bell measurement can be replaced by a partial Bell measurement to reduce the experimental requirements.
In Appendix~\ref{app:sparsewitness}, we show how to reduce the number of input pairs $(x,y)$ for which statistics need to be collected to $d^2 + 3$, where $d=\min(\dim\hilbert{X}, \dim\hilbert{Y})$, which has better scaling than the use of tomographically complete sets of inputs (at least $d^4$ input pairs).

\subsection{Example: qubit depolarizing channel}
As an example, consider the qubit-qubit depolarizing channel
\begin{equation}
  \channel{N}_\nu(\rho\on{A}) = \nu \rho + (1-\nu) \frac{\openone}{2}\;.
\end{equation}
As \channel{N} acts on a qubit space, we use the tomographically complete set of quantum inputs
\begin{equation}
  \xi_x = U_x \tau U_x^\dagger, \quad \psi_y = U_y \tau U_y^\dagger,
\end{equation}
where $U_1$, $U_2$, $U_3$ and $U_4$ are respectively the identity $\openone$ and the three Pauli matrices $\sigma_x$, $\sigma_y$, $\sigma_z$ while $\tau = \openone/2 + (\sigma_x + \sigma_y + \sigma_z)/\sqrt{12}\;$; note that $\set{X} = \set{Y}$.
According to our construction, we use a partial Bell measurement with $B_1 = \Phi_+$ and $B_2 = \openone - B_1$.
We get:
\begin{equation}
  \sigcorrelation{\channel{N}}(1|x,y) =
  \begin{cases}
    (1 - \nu)/4 & \mbox{if } x - y = 2 \mbox{ mod } 4, \\
    (3 + \nu)/12 & \mbox{otherwise}.
  \end{cases}
\end{equation}

In our example, we get:
\begin{equation}
  \choi{\channel{N}} = \nu \Phi_+ + (1-\nu) \openone/4\;.
\end{equation}

The entanglement witness related to $\choi{\channel{N}}$ is $W = \Phi_+ - \openone/2$.
According to its decomposition on $\xi_x$ and $\psi_y$, we obtain the payoff:
\begin{equation}
  \payoff(1,x,y) =
  \begin{cases}
    -5/8 & \mbox{if } x-y = 2 \mbox{ mod } 4, \\
    1/8 & \mbox{otherwise},
  \end{cases}
\end{equation}
such that the expected payoff value is $(3\nu - 1)/4$, which detects faithfully a channel in the quantum domain for $\nu > 1/3\approx 33\%$.
In contrast, tests based on the CHSH inequality can only detect channels in the quantum domain for $\nu > 2^{-1/2}\approx 71\%$, or $\nu > 2^{-1/4} \approx 84\%$ if two copies of the channel are used, according to the setup of Figure~\ref{fig:bell-test}.

\subsection{Robustness against losses}
\label{sec:loss}

Our construction is robust against isotropic losses and detection inefficiencies.
From the perspective of the correlation $p(b|x,y)$, all losses and detection inefficiencies that do not depend on the indices $x$ and $y$ can be modeled as an erasure channel:
\begin{equation}
  \channel{E}_\eta(\rho) = \eta \rho + (1 - \eta) \vacuum,
\end{equation}
applied after the use of the channel $\channel{N}$, such that we test effectively
\begin{equation}
  \channel{N}' = \channel{E}_\eta \circ \channel{N}\;.
\end{equation}

In the spirit of erasure channels, we assume that $\vacuum$ lies outside the range of $\channel{N}$, so that losses are always identified.

\begin{proposition}
  Let $\channel{E}_\eta$ an erasure channel for $\eta > 0$.
  Then $\channel{E}_\eta \circ \channel{N}$ is in the quantum domain if and only if $\channel{N}$ is in the quantum domain.
\end{proposition}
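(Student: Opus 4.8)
The plan is to establish the two implications directly at the level of the measure-and-prepare decomposition~\eqref{eq:entanglementbreaking}, rather than through the partial order $\sufficient$ of Definition~\ref{def:sufficient}: since $\channel{E}_\eta\circ\channel{N}$ is strictly weaker than $\channel{N}$ when $\eta<1$ (no deterministic free transformation can amplify the surviving component by $\eta^{-1}$), monotonicity alone would not close the argument.

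First I would dispose of the easy implication. If $\channel{N}\in\free$, write $\channel{N}(\rho)=\sum_i\rho'_i\,\Tr{\Pi_i\rho}$ with the mixing index absorbed into the classical wire (cf.\ Appendix~\ref{app:sharedrandomness}). Then $\channel{E}_\eta\circ\channel{N}(\rho)=\sum_i\channel{E}_\eta(\rho'_i)\,\Tr{\Pi_i\rho}$, which is again of the form~\eqref{eq:entanglementbreaking}, so $\channel{E}_\eta\circ\channel{N}\in\free$. Contrapositively, if $\channel{E}_\eta\circ\channel{N}$ is in the quantum domain then so is $\channel{N}$; note that this direction does not even use $\eta>0$.

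For the converse I would exploit the standing assumption that $\vacuum$ lies outside (orthogonally to) the range of $\channel{N}$, so that the output space factors as $\hilbert{B}\oplus\mathbb{C}\vacuum$. Letting $P$ be the projector onto $\hilbert{B}$, one has $P\,\channel{E}_\eta(\channel{N}(\rho))\,P=\eta\,\channel{N}(\rho)$ for every $\rho$. Now assume $\channel{E}_\eta\circ\channel{N}\in\free$, with decomposition $\channel{E}_\eta\circ\channel{N}(\rho)=\sum_i\sigma_i\,\Tr{\Pi_i\rho}$. Conjugating by $P$ and dividing by $\eta>0$ gives $\channel{N}(\rho)=\sum_i\tilde\sigma_i\,\Tr{\tilde\Pi_i\rho}$ with $\tilde\Pi_i\defeq\eta^{-1}\Tr{P\sigma_i}\,\Pi_i\ge0$ and $\tilde\sigma_i\defeq P\sigma_iP/\Tr{P\sigma_i}$, dropping any term with $\Tr{P\sigma_i}=0$. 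Since $\Tr{\channel{N}(\rho)}=1$ for all $\rho$, the $\tilde\Pi_i$ sum to $\openone$, so $\{\tilde\Pi_i\}$ is a bona fide POVM and $\channel{N}$ is measure-and-prepare, i.e.\ $\channel{N}\in\free$. Contrapositively, $\channel{N}$ in the quantum domain forces $\channel{E}_\eta\circ\channel{N}$ in the quantum domain, and combining the two implications yields the claimed equivalence.

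I expect the only real content — and the sole place where $\eta>0$ is essential — to be the rescaling step: it is legitimate precisely because the vacuum flag is perfectly distinguishable from the support of $\channel{N}$'s output, so that conditioning on ``no erasure'' is lossless and recovers the full channel up to the overall factor $\eta$. No further obstacle is anticipated.
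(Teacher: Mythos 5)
Your proof is correct, but it takes a genuinely different route from the paper's. The paper argues at the level of Choi--Jamiolkowski states: it writes $\choi{\channel{E}_\eta\circ\channel{N}} = \eta\,\choi{\channel{N}} + (1-\eta)\,\openone\otimes\vacuum$, notes that this is a mixture of separable states when $\channel{N}\in\free$ (easy direction), and for the hard direction picks an entanglement witness $W$ for $\choi{\channel{N}}$ with no support on the vacuum sector, so that $\Tr{W\choi{\channel{E}_\eta\circ\channel{N}}}=\eta\Tr{W\choi{\channel{N}}}>0$. You instead stay at the level of the measure-and-prepare decomposition and, for the hard direction (in contrapositive form), filter an assumed EB decomposition of the lossy channel back into one for $\channel{N}$ by conjugating with the projector onto the no-erasure sector and rescaling by $\eta^{-1}$; your normalization check that the rescaled $\tilde\Pi_i$ form a POVM is sound, and the degenerate terms with $\Tr{P\sigma_i}=0$ are handled properly. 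Both arguments hinge on the same physical fact — the vacuum flag is perfectly distinguishable from the genuine output — but yours is more elementary and self-contained (it needs neither the Choi correspondence between EB channels and separable states nor the existence of a witness supported away from $\vacuum$, which the paper asserts without proof), whereas the paper's version has the advantage of plugging directly into the witness-based game construction of Proposition~\ref{prop:construction} and the loss-tolerance payoff computation that immediately follows. Your opening remark that monotonicity under $\sufficient$ cannot give the hard direction is also accurate: free post-processing only yields the easy implication.
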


\begin{proof}
  Let $\choi{\channel{N}}$ be the Choi-Jamiolkowski representation~\eqref{eq:choi} of $\channel{N}$.
  The representation of $\channel{E}_\eta \circ \channel{N}$ is:
  \begin{equation}
    \label{eq:choierased}
    \choi{\channel{E}_\eta\circ\channel{N}} = \eta \choi{\channel{N}} + (1-\eta) \openone \otimes \vacuum.
  \end{equation}
  When $\channel{N}$ is EB, the state~\eqref{eq:choierased} is a mixture of separable states, and thus $\channel{E}_\eta\circ\channel{N}$ is EB.
  Assume now that $\channel{N}$ is non-EB.
  Then there exists an entanglement witness $W$, without support on $\vacuum$, such that $\Tr{W \choi{\channel{N}}} > 0$ while $\Tr{W \rho_\text{SEP}} \le 0$.
  The result follows by applying $W$ on the Choi of $\channel{E}_\eta\circ\channel{N}$: $\Tr{W \choi{\channel{E}_\eta\circ\channel{N}}} = \eta \Tr{W \choi{\channel{N}}} > 0$.
\end{proof}

The constructive approach of Proposition~\ref{prop:construction} can then be straightforwardly applied.
Assume that $\payoff(b,x,y)$ corresponds to a semiquantum signaling game for $\channel{N}$.
When testing the channel $\channel{N}' = \channel{E}_\eta \circ \channel{N}$ with isotropic erasure, we add an element $B_0\on{BY} = \vacuum \otimes \openone$ to the measurement $\{B_b\}$ such that the erased state $\vacuum$ is sent to a new measurement outcome $b=0$.
Let $p(b|x,y)$ be the correlation of the channel $\channel{N}$ that obtained a payoff $\left<\payoff\right> > 0$.
With the above scheme, the correlation of $\channel{N}'$ is readily obtained:
\begin{equation}
  p'(1|x,y) = \eta p(1|x,y)\;.
\end{equation}
Setting accordingly $\payoff(0,x,y) \defeq 0$ for the new measurement outcome, the average payoff on $\channel{N}'$ is:
\begin{equation}
  \sum_{xy} \payoff(1,x,y) p'(1|x,y) = \eta \left< \payoff \right>
\end{equation}
and thus demonstrates the non-EB nature of $\channel{N}'$ for all $\eta>0$.

\section{Conclusion}

In our paper, we define a class of tests that can faithfully verify the quantum nature of memories with minimal assumptions.
To do so, we provide a qualitative resource theory necessary to distinguish (non)-entanglement-breaking quantum channels as different classes of resources, mirroring the resource theory of entangled states and their transformations under LOCC.
Allowing the classical storage of any amount of information for free, we identify the nontrivial resources as the channels preserving entanglement.
By the use of classically correlated pre/post processing supermaps, we define how quantum channels can be transformed, and accordingly show the existence of a partial order on channels.
We single out the class of entanglement-breaking channels that operate by the storage of classical information.
We complete our resource theory by defining the monotones relevant to the quantitative study of quantum memories.
Second, we translate the idea of Buscemi~\cite{Buscemi2012} to the temporal setting and construct semiquantum games for temporal correlations; by showing that the maximal expected payoffs of such games form a complete family of monotones, we demonstrate the existence of measurement-device-independent (MDI) tests for memories in the quantum domain.
We finally provide a construction of such tests, and show that the resulting MDI witnesses can certify all memories in the quantum domain, even when facing arbitrary losses.

Our work opens new research avenues regarding the classification of quantum channels.
The monotones detailed in this work are motivated by experimental tests that can be constructed with minimal assumptions.
As recently explored in the spatial context~\cite{Verbanis2016,Supic2017a,Shahandeh2017,Shahandeh2017}, the semiquantum framework and the related MDI witnesses can provide bounds on a variety of operational entanglement measures.
This relation should also hold in the temporal context: we leave it as an open question to find the relation between our family of monotones and the various quantitative measures already defined on quantum channels (like, e.g., channel capacities or other entropic quantities). 

We also note that the semiquantum framework applies equally well to spatial and temporal correlations, without encountering the issues that plague the description of time-like joint states~\cite{Horsman2017}.
By treating space and time on an equal footing, this framework requires only minimal assumptions and is well-suited to the examination of quantum causal structures when facing arbitrary causal orders~\cite{Oreshkov2012}.

As a final comment, our results cater for memories acting on finite dimensional Hilbert spaces. This restriction eases considerably the derivation: for example, our construction relies on Bell measurements, whose generalization to e.g. continuous degrees of freedom has a quite different nature~\cite{Hofer2013}.
As some experimental tests of quantum memories involve continuous degrees of freedom~\cite{Killoran2012,Yang2014a}, we leave as an open question the generalization of our results to infinite dimensional systems.

\section*{Acknowledgments}

We acknowledge discussions with Cyril Branciard, Nicolas Gisin and Miguel Navascu{\'e}s.
Research at Perimeter Institute is supported by the Government of Canada through Industry Canada and by the Province of Ontario through the Ministry of Research and Innovation.
This publication was made possible through the support of a grant from the John Templeton Foundation. 
D.R. was supported by the SNSF Early Postdoc. Mobility fellowship P2GEP2\_162060, F.B. was supported in part by JSPS KAKENHI, grant no. 26247016 and no. 17K17796.
Y.C.L. is supported by the Ministry of Science and Technology of Taiwan (Grant No.104-2112-M-006-021-MY3).
Part of this work was initiated during the Nagoya Winter Workshop NWW2013.

\bibliography{memdiew}

\begin{thebibliography}{95}%
\makeatletter
\providecommand \@ifxundefined [1]{%
 \@ifx{#1\undefined}
}%
\providecommand \@ifnum [1]{%
 \ifnum #1\expandafter \@firstoftwo
 \else \expandafter \@secondoftwo
 \fi
}%
\providecommand \@ifx [1]{%
 \ifx #1\expandafter \@firstoftwo
 \else \expandafter \@secondoftwo
 \fi
}%
\providecommand \natexlab [1]{#1}%
\providecommand \enquote  [1]{``#1''}%
\providecommand \bibnamefont  [1]{#1}%
\providecommand \bibfnamefont [1]{#1}%
\providecommand \citenamefont [1]{#1}%
\providecommand \href@noop [0]{\@secondoftwo}%
\providecommand \href [0]{\begingroup \@sanitize@url \@href}%
\providecommand \@href[1]{\@@startlink{#1}\@@href}%
\providecommand \@@href[1]{\endgroup#1\@@endlink}%
\providecommand \@sanitize@url [0]{\catcode `\\12\catcode `\$12\catcode
  `\&12\catcode `\#12\catcode `\^12\catcode `\_12\catcode `\%12\relax}%
\providecommand \@@startlink[1]{}%
\providecommand \@@endlink[0]{}%
\providecommand \url  [0]{\begingroup\@sanitize@url \@url }%
\providecommand \@url [1]{\endgroup\@href {#1}{\urlprefix }}%
\providecommand \urlprefix  [0]{URL }%
\providecommand \Eprint [0]{\href }%
\providecommand \doibase [0]{http://dx.doi.org/}%
\providecommand \selectlanguage [0]{\@gobble}%
\providecommand \bibinfo  [0]{\@secondoftwo}%
\providecommand \bibfield  [0]{\@secondoftwo}%
\providecommand \translation [1]{[#1]}%
\providecommand \BibitemOpen [0]{}%
\providecommand \bibitemStop [0]{}%
\providecommand \bibitemNoStop [0]{.\EOS\space}%
\providecommand \EOS [0]{\spacefactor3000\relax}%
\providecommand \BibitemShut  [1]{\csname bibitem#1\endcsname}%
\let\auto@bib@innerbib\@empty
\bibitem [{\citenamefont {Briegel}\ \emph {et~al.}(1998)\citenamefont
  {Briegel}, \citenamefont {D{\"u}r}, \citenamefont {Cirac},\ and\
  \citenamefont {Zoller}}]{Briegel1998}%
  \BibitemOpen
  \bibfield  {author} {\bibinfo {author} {\bibfnamefont {H.-J.}\ \bibnamefont
  {Briegel}}, \bibinfo {author} {\bibfnamefont {W.}~\bibnamefont {D{\"u}r}},
  \bibinfo {author} {\bibfnamefont {J.~I.}\ \bibnamefont {Cirac}}, \ and\
  \bibinfo {author} {\bibfnamefont {P.}~\bibnamefont {Zoller}},\ }\bibfield
  {title} {\enquote {\bibinfo {title} {Quantum {{Repeaters}}: {{The Role}} of
  {{Imperfect Local Operations}} in {{Quantum Communication}}},}\ }\href
  {\doibase 10.1103/PhysRevLett.81.5932} {\bibfield  {journal} {\bibinfo
  {journal} {Phys. Rev. Lett.}\ }\textbf {\bibinfo {volume} {81}},\ \bibinfo
  {pages} {5932--5935} (\bibinfo {year} {1998})}\BibitemShut {NoStop}%
\bibitem [{\citenamefont {Sangouard}\ \emph {et~al.}(2011)\citenamefont
  {Sangouard}, \citenamefont {Simon}, \citenamefont {{de Riedmatten}},\ and\
  \citenamefont {Gisin}}]{Sangouard2011}%
  \BibitemOpen
  \bibfield  {author} {\bibinfo {author} {\bibfnamefont {Nicolas}\ \bibnamefont
  {Sangouard}}, \bibinfo {author} {\bibfnamefont {Christoph}\ \bibnamefont
  {Simon}}, \bibinfo {author} {\bibfnamefont {Hugues}\ \bibnamefont {{de
  Riedmatten}}}, \ and\ \bibinfo {author} {\bibfnamefont {Nicolas}\
  \bibnamefont {Gisin}},\ }\bibfield  {title} {\enquote {\bibinfo {title}
  {Quantum repeaters based on atomic ensembles and linear optics},}\ }\href
  {\doibase 10.1103/RevModPhys.83.33} {\bibfield  {journal} {\bibinfo
  {journal} {Rev. Mod. Phys.}\ }\textbf {\bibinfo {volume} {83}},\ \bibinfo
  {pages} {33--80} (\bibinfo {year} {2011})}\BibitemShut {NoStop}%
\bibitem [{\citenamefont {Holevo}(1998)}]{Holevo1998}%
  \BibitemOpen
  \bibfield  {author} {\bibinfo {author} {\bibfnamefont {Alexander~S.}\
  \bibnamefont {Holevo}},\ }\bibfield  {title} {\enquote {\bibinfo {title}
  {Quantum coding theorems},}\ }\href {\doibase
  10.1070/RM1998v053n06ABEH000091} {\bibfield  {journal} {\bibinfo  {journal}
  {Russ. Math. Surv.}\ }\textbf {\bibinfo {volume} {53}},\ \bibinfo {pages}
  {1295} (\bibinfo {year} {1998})}\BibitemShut {NoStop}%
\bibitem [{\citenamefont {Braunstein}\ and\ \citenamefont
  {Kimble}(1998)}]{Braunstein1998}%
  \BibitemOpen
  \bibfield  {author} {\bibinfo {author} {\bibfnamefont {Samuel~L.}\
  \bibnamefont {Braunstein}}\ and\ \bibinfo {author} {\bibfnamefont {H.~J.}\
  \bibnamefont {Kimble}},\ }\bibfield  {title} {\enquote {\bibinfo {title}
  {Teleportation of {{Continuous Quantum Variables}}},}\ }\href {\doibase
  10.1103/PhysRevLett.80.869} {\bibfield  {journal} {\bibinfo  {journal} {Phys.
  Rev. Lett.}\ }\textbf {\bibinfo {volume} {80}},\ \bibinfo {pages} {869--872}
  (\bibinfo {year} {1998})}\BibitemShut {NoStop}%
\bibitem [{\citenamefont {Verstraete}\ and\ \citenamefont
  {Verschelde}(2002)}]{Verstraete2002a}%
  \BibitemOpen
  \bibfield  {author} {\bibinfo {author} {\bibfnamefont {Frank}\ \bibnamefont
  {Verstraete}}\ and\ \bibinfo {author} {\bibfnamefont {Henri}\ \bibnamefont
  {Verschelde}},\ }\bibfield  {title} {\enquote {\bibinfo {title} {On quantum
  channels},}\ }\href@noop {} {\  (\bibinfo {year} {2002})},\ \Eprint
  {http://arxiv.org/abs/quant-ph/0202124} {arXiv:quant-ph/0202124} \BibitemShut
  {NoStop}%
\bibitem [{\citenamefont {Horodecki}\ \emph {et~al.}(2003)\citenamefont
  {Horodecki}, \citenamefont {Shor},\ and\ \citenamefont
  {Ruskai}}]{Horodecki2003}%
  \BibitemOpen
  \bibfield  {author} {\bibinfo {author} {\bibfnamefont {Michael}\ \bibnamefont
  {Horodecki}}, \bibinfo {author} {\bibfnamefont {Peter~W.}\ \bibnamefont
  {Shor}}, \ and\ \bibinfo {author} {\bibfnamefont {Mary~Beth}\ \bibnamefont
  {Ruskai}},\ }\bibfield  {title} {\enquote {\bibinfo {title} {Entanglement
  {{Breaking Channels}}},}\ }\href {\doibase 10.1142/S0129055X03001709}
  {\bibfield  {journal} {\bibinfo  {journal} {Rev. Math. Phys.}\ }\textbf
  {\bibinfo {volume} {15}},\ \bibinfo {pages} {629--641} (\bibinfo {year}
  {2003})}\BibitemShut {NoStop}%
\bibitem [{\citenamefont {Holevo}(2008)}]{Holevo2008}%
  \BibitemOpen
  \bibfield  {author} {\bibinfo {author} {\bibfnamefont {A.~S.}\ \bibnamefont
  {Holevo}},\ }\bibfield  {title} {\enquote {\bibinfo {title}
  {Entanglement-breaking channels in infinite dimensions},}\ }\href {\doibase
  10.1134/S0032946008030010} {\bibfield  {journal} {\bibinfo  {journal} {Probl.
  Inform. Transm.}\ }\textbf {\bibinfo {volume} {44}},\ \bibinfo {pages}
  {171--184} (\bibinfo {year} {2008})}\BibitemShut {NoStop}%
\bibitem [{\citenamefont {Namiki}(2008)}]{Namiki2008}%
  \BibitemOpen
  \bibfield  {author} {\bibinfo {author} {\bibfnamefont {Ryo}\ \bibnamefont
  {Namiki}},\ }\bibfield  {title} {\enquote {\bibinfo {title} {Verification of
  the quantum-domain process using two nonorthogonal states},}\ }\href
  {\doibase 10.1103/PhysRevA.78.032333} {\bibfield  {journal} {\bibinfo
  {journal} {Phys. Rev. A}\ }\textbf {\bibinfo {volume} {78}},\ \bibinfo
  {pages} {032333} (\bibinfo {year} {2008})}\BibitemShut {NoStop}%
\bibitem [{\citenamefont {H{\"a}seler}\ \emph {et~al.}(2008)\citenamefont
  {H{\"a}seler}, \citenamefont {Moroder},\ and\ \citenamefont
  {L{\"u}tkenhaus}}]{Haseler2008}%
  \BibitemOpen
  \bibfield  {author} {\bibinfo {author} {\bibfnamefont {Hauke}\ \bibnamefont
  {H{\"a}seler}}, \bibinfo {author} {\bibfnamefont {Tobias}\ \bibnamefont
  {Moroder}}, \ and\ \bibinfo {author} {\bibfnamefont {Norbert}\ \bibnamefont
  {L{\"u}tkenhaus}},\ }\bibfield  {title} {\enquote {\bibinfo {title} {Testing
  quantum devices: {{Practical}} entanglement verification in bipartite optical
  systems},}\ }\href {\doibase 10.1103/PhysRevA.77.032303} {\bibfield
  {journal} {\bibinfo  {journal} {Phys. Rev. A}\ }\textbf {\bibinfo {volume}
  {77}},\ \bibinfo {pages} {032303} (\bibinfo {year} {2008})}\BibitemShut
  {NoStop}%
\bibitem [{\citenamefont {H{\"a}seler}\ and\ \citenamefont
  {L{\"u}tkenhaus}(2009)}]{Haseler2009}%
  \BibitemOpen
  \bibfield  {author} {\bibinfo {author} {\bibfnamefont {Hauke}\ \bibnamefont
  {H{\"a}seler}}\ and\ \bibinfo {author} {\bibfnamefont {Norbert}\ \bibnamefont
  {L{\"u}tkenhaus}},\ }\bibfield  {title} {\enquote {\bibinfo {title} {Probing
  the quantumness of channels with mixed states},}\ }\href {\doibase
  10.1103/PhysRevA.80.042304} {\bibfield  {journal} {\bibinfo  {journal} {Phys.
  Rev. A}\ }\textbf {\bibinfo {volume} {80}},\ \bibinfo {pages} {042304}
  (\bibinfo {year} {2009})}\BibitemShut {NoStop}%
\bibitem [{\citenamefont {Macchiavello}\ and\ \citenamefont
  {Rossi}(2013)}]{Macchiavello2013}%
  \BibitemOpen
  \bibfield  {author} {\bibinfo {author} {\bibfnamefont {C.}~\bibnamefont
  {Macchiavello}}\ and\ \bibinfo {author} {\bibfnamefont {M.}~\bibnamefont
  {Rossi}},\ }\bibfield  {title} {\enquote {\bibinfo {title} {Quantum channel
  detection},}\ }\href {\doibase 10.1103/PhysRevA.88.042335} {\bibfield
  {journal} {\bibinfo  {journal} {Phys. Rev. A}\ }\textbf {\bibinfo {volume}
  {88}},\ \bibinfo {pages} {042335} (\bibinfo {year} {2013})}\BibitemShut
  {NoStop}%
\bibitem [{\citenamefont {Pusey}(2015)}]{Pusey2015}%
  \BibitemOpen
  \bibfield  {author} {\bibinfo {author} {\bibfnamefont {Matthew~F.}\
  \bibnamefont {Pusey}},\ }\bibfield  {title} {\enquote {\bibinfo {title}
  {Verifying the quantumness of a channel with an untrusted device},}\ }\href
  {\doibase 10.1364/JOSAB.32.000A56} {\bibfield  {journal} {\bibinfo  {journal}
  {J. Opt. Soc. Am. B}\ }\textbf {\bibinfo {volume} {32}},\ \bibinfo {pages}
  {A56} (\bibinfo {year} {2015})}\BibitemShut {NoStop}%
\bibitem [{\citenamefont {Poyatos}\ \emph {et~al.}(1997)\citenamefont
  {Poyatos}, \citenamefont {Cirac},\ and\ \citenamefont
  {Zoller}}]{Poyatos1997}%
  \BibitemOpen
  \bibfield  {author} {\bibinfo {author} {\bibfnamefont {J.~F.}\ \bibnamefont
  {Poyatos}}, \bibinfo {author} {\bibfnamefont {J.~I.}\ \bibnamefont {Cirac}},
  \ and\ \bibinfo {author} {\bibfnamefont {P.}~\bibnamefont {Zoller}},\
  }\bibfield  {title} {\enquote {\bibinfo {title} {Complete
  {{Characterization}} of a {{Quantum Process}}: {{The Two}}-{{Bit Quantum
  Gate}}},}\ }\href {\doibase 10.1103/PhysRevLett.78.390} {\bibfield  {journal}
  {\bibinfo  {journal} {Phys. Rev. Lett.}\ }\textbf {\bibinfo {volume} {78}},\
  \bibinfo {pages} {390--393} (\bibinfo {year} {1997})}\BibitemShut {NoStop}%
\bibitem [{\citenamefont {D'Ariano}\ and\ \citenamefont
  {Lo~Presti}(2001)}]{DAriano2001}%
  \BibitemOpen
  \bibfield  {author} {\bibinfo {author} {\bibfnamefont {G.~M.}\ \bibnamefont
  {D'Ariano}}\ and\ \bibinfo {author} {\bibfnamefont {P.}~\bibnamefont
  {Lo~Presti}},\ }\bibfield  {title} {\enquote {\bibinfo {title} {Quantum
  {{Tomography}} for {{Measuring Experimentally}} the {{Matrix Elements}} of an
  {{Arbitrary Quantum Operation}}},}\ }\href {\doibase
  10.1103/PhysRevLett.86.4195} {\bibfield  {journal} {\bibinfo  {journal}
  {Phys. Rev. Lett.}\ }\textbf {\bibinfo {volume} {86}},\ \bibinfo {pages}
  {4195--4198} (\bibinfo {year} {2001})}\BibitemShut {NoStop}%
\bibitem [{\citenamefont {Mohseni}\ \emph {et~al.}(2008)\citenamefont
  {Mohseni}, \citenamefont {Rezakhani},\ and\ \citenamefont
  {Lidar}}]{Mohseni2008}%
  \BibitemOpen
  \bibfield  {author} {\bibinfo {author} {\bibfnamefont {M.}~\bibnamefont
  {Mohseni}}, \bibinfo {author} {\bibfnamefont {A.~T.}\ \bibnamefont
  {Rezakhani}}, \ and\ \bibinfo {author} {\bibfnamefont {D.~A.}\ \bibnamefont
  {Lidar}},\ }\bibfield  {title} {\enquote {\bibinfo {title} {Quantum-process
  tomography: {{Resource}} analysis of different strategies},}\ }\href
  {\doibase 10.1103/PhysRevA.77.032322} {\bibfield  {journal} {\bibinfo
  {journal} {Phys. Rev. A}\ }\textbf {\bibinfo {volume} {77}},\ \bibinfo
  {pages} {032322} (\bibinfo {year} {2008})}\BibitemShut {NoStop}%
\bibitem [{\citenamefont {Scarani}(2012)}]{Scarani2012}%
  \BibitemOpen
  \bibfield  {author} {\bibinfo {author} {\bibfnamefont {Valerio}\ \bibnamefont
  {Scarani}},\ }\bibfield  {title} {\enquote {\bibinfo {title} {The
  device-independent outlook on quantum physics},}\ }\href@noop {} {\bibfield
  {journal} {\bibinfo  {journal} {Acta Phys. Slovaca}\ }\textbf {\bibinfo
  {volume} {62}},\ \bibinfo {pages} {347--409} (\bibinfo {year}
  {2012})}\BibitemShut {NoStop}%
\bibitem [{\citenamefont {Rosset}\ \emph {et~al.}(2012)\citenamefont {Rosset},
  \citenamefont {Ferretti-Sch{\"o}bitz}, \citenamefont {Bancal}, \citenamefont
  {Gisin},\ and\ \citenamefont {Liang}}]{Rosset2012a}%
  \BibitemOpen
  \bibfield  {author} {\bibinfo {author} {\bibfnamefont {Denis}\ \bibnamefont
  {Rosset}}, \bibinfo {author} {\bibfnamefont {Raphael}\ \bibnamefont
  {Ferretti-Sch{\"o}bitz}}, \bibinfo {author} {\bibfnamefont {Jean-Daniel}\
  \bibnamefont {Bancal}}, \bibinfo {author} {\bibfnamefont {Nicolas}\
  \bibnamefont {Gisin}}, \ and\ \bibinfo {author} {\bibfnamefont
  {Yeong-Cherng}\ \bibnamefont {Liang}},\ }\bibfield  {title} {\enquote
  {\bibinfo {title} {Imperfect measurement settings: {{Implications}} for
  quantum state tomography and entanglement witnesses},}\ }\href {\doibase
  10.1103/PhysRevA.86.062325} {\bibfield  {journal} {\bibinfo  {journal} {Phys.
  Rev. A}\ }\textbf {\bibinfo {volume} {86}},\ \bibinfo {pages} {062325}
  (\bibinfo {year} {2012})}\BibitemShut {NoStop}%
\bibitem [{\citenamefont {Brunner}\ \emph {et~al.}(2014)\citenamefont
  {Brunner}, \citenamefont {Cavalcanti}, \citenamefont {Pironio}, \citenamefont
  {Scarani},\ and\ \citenamefont {Wehner}}]{Brunner2014}%
  \BibitemOpen
  \bibfield  {author} {\bibinfo {author} {\bibfnamefont {Nicolas}\ \bibnamefont
  {Brunner}}, \bibinfo {author} {\bibfnamefont {Daniel}\ \bibnamefont
  {Cavalcanti}}, \bibinfo {author} {\bibfnamefont {Stefano}\ \bibnamefont
  {Pironio}}, \bibinfo {author} {\bibfnamefont {Valerio}\ \bibnamefont
  {Scarani}}, \ and\ \bibinfo {author} {\bibfnamefont {Stephanie}\ \bibnamefont
  {Wehner}},\ }\bibfield  {title} {\enquote {\bibinfo {title} {Bell
  nonlocality},}\ }\href {\doibase 10.1103/RevModPhys.86.419} {\bibfield
  {journal} {\bibinfo  {journal} {Rev. Mod. Phys.}\ }\textbf {\bibinfo {volume}
  {86}},\ \bibinfo {pages} {419--478} (\bibinfo {year} {2014})}\BibitemShut
  {NoStop}%
\bibitem [{\citenamefont {Ekert}(1991)}]{Ekert1991}%
  \BibitemOpen
  \bibfield  {author} {\bibinfo {author} {\bibfnamefont {Artur~K.}\
  \bibnamefont {Ekert}},\ }\bibfield  {title} {\enquote {\bibinfo {title}
  {Quantum cryptography based on {{Bell}}'s theorem},}\ }\href {\doibase
  10.1103/PhysRevLett.67.661} {\bibfield  {journal} {\bibinfo  {journal} {Phys.
  Rev. Lett.}\ }\textbf {\bibinfo {volume} {67}},\ \bibinfo {pages} {661--663}
  (\bibinfo {year} {1991})}\BibitemShut {NoStop}%
\bibitem [{\citenamefont {Ac{\'\i}n}\ \emph {et~al.}(2007)\citenamefont
  {Ac{\'\i}n}, \citenamefont {Brunner}, \citenamefont {Gisin}, \citenamefont
  {Massar}, \citenamefont {Pironio},\ and\ \citenamefont {Scarani}}]{Acin2007}%
  \BibitemOpen
  \bibfield  {author} {\bibinfo {author} {\bibfnamefont {Antonio}\ \bibnamefont
  {Ac{\'\i}n}}, \bibinfo {author} {\bibfnamefont {Nicolas}\ \bibnamefont
  {Brunner}}, \bibinfo {author} {\bibfnamefont {Nicolas}\ \bibnamefont
  {Gisin}}, \bibinfo {author} {\bibfnamefont {Serge}\ \bibnamefont {Massar}},
  \bibinfo {author} {\bibfnamefont {Stefano}\ \bibnamefont {Pironio}}, \ and\
  \bibinfo {author} {\bibfnamefont {Valerio}\ \bibnamefont {Scarani}},\
  }\bibfield  {title} {\enquote {\bibinfo {title} {Device-{{Independent
  Security}} of {{Quantum Cryptography}} against {{Collective Attacks}}},}\
  }\href {\doibase 10.1103/PhysRevLett.98.230501} {\bibfield  {journal}
  {\bibinfo  {journal} {Phys. Rev. Lett.}\ }\textbf {\bibinfo {volume} {98}},\
  \bibinfo {pages} {230501} (\bibinfo {year} {2007})}\BibitemShut {NoStop}%
\bibitem [{\citenamefont {Vazirani}\ and\ \citenamefont
  {Vidick}(2014)}]{Vazirani2014}%
  \BibitemOpen
  \bibfield  {author} {\bibinfo {author} {\bibfnamefont {Umesh}\ \bibnamefont
  {Vazirani}}\ and\ \bibinfo {author} {\bibfnamefont {Thomas}\ \bibnamefont
  {Vidick}},\ }\bibfield  {title} {\enquote {\bibinfo {title} {Fully
  {{Device}}-{{Independent Quantum Key Distribution}}},}\ }\href {\doibase
  10.1103/PhysRevLett.113.140501} {\bibfield  {journal} {\bibinfo  {journal}
  {Phys. Rev. Lett.}\ }\textbf {\bibinfo {volume} {113}},\ \bibinfo {pages}
  {140501} (\bibinfo {year} {2014})}\BibitemShut {NoStop}%
\bibitem [{\citenamefont {Colbeck}(2006)}]{Colbeck2006}%
  \BibitemOpen
  \bibfield  {author} {\bibinfo {author} {\bibfnamefont {Roger}\ \bibnamefont
  {Colbeck}},\ }\emph {\bibinfo {title} {Quantum {{And Relativistic Protocols
  For Secure Multi}}-{{Party Computation}}}},\ \href@noop {} {Ph.D. thesis},\
  \bibinfo  {school} {University of Cambridge} (\bibinfo {year}
  {2006})\BibitemShut {NoStop}%
\bibitem [{\citenamefont {Pironio}\ \emph {et~al.}(2010)\citenamefont
  {Pironio}, \citenamefont {Ac{\'\i}n}, \citenamefont {Massar}, \citenamefont
  {{de la Giroday}}, \citenamefont {Matsukevich}, \citenamefont {Maunz},
  \citenamefont {Olmschenk}, \citenamefont {Hayes}, \citenamefont {Luo},
  \citenamefont {Manning},\ and\ \citenamefont {Monroe}}]{Pironio2010}%
  \BibitemOpen
  \bibfield  {author} {\bibinfo {author} {\bibfnamefont {S.}~\bibnamefont
  {Pironio}}, \bibinfo {author} {\bibfnamefont {A.}~\bibnamefont {Ac{\'\i}n}},
  \bibinfo {author} {\bibfnamefont {S.}~\bibnamefont {Massar}}, \bibinfo
  {author} {\bibfnamefont {A.~Boyer}\ \bibnamefont {{de la Giroday}}}, \bibinfo
  {author} {\bibfnamefont {D.~N.}\ \bibnamefont {Matsukevich}}, \bibinfo
  {author} {\bibfnamefont {P.}~\bibnamefont {Maunz}}, \bibinfo {author}
  {\bibfnamefont {S.}~\bibnamefont {Olmschenk}}, \bibinfo {author}
  {\bibfnamefont {D.}~\bibnamefont {Hayes}}, \bibinfo {author} {\bibfnamefont
  {L.}~\bibnamefont {Luo}}, \bibinfo {author} {\bibfnamefont {T.~A.}\
  \bibnamefont {Manning}}, \ and\ \bibinfo {author} {\bibfnamefont
  {C.}~\bibnamefont {Monroe}},\ }\bibfield  {title} {\enquote {\bibinfo {title}
  {Random numbers certified by {{Bell}}'s theorem},}\ }\href {\doibase
  10.1038/nature09008} {\bibfield  {journal} {\bibinfo  {journal} {Nature}\
  }\textbf {\bibinfo {volume} {464}},\ \bibinfo {pages} {1021--1024} (\bibinfo
  {year} {2010})}\BibitemShut {NoStop}%
\bibitem [{\citenamefont {Colbeck}\ and\ \citenamefont
  {Kent}(2011)}]{Colbeck2011}%
  \BibitemOpen
  \bibfield  {author} {\bibinfo {author} {\bibfnamefont {Roger}\ \bibnamefont
  {Colbeck}}\ and\ \bibinfo {author} {\bibfnamefont {Adrian}\ \bibnamefont
  {Kent}},\ }\bibfield  {title} {\enquote {\bibinfo {title} {Private randomness
  expansion with untrusted devices},}\ }\href {\doibase
  10.1088/1751-8113/44/9/095305} {\bibfield  {journal} {\bibinfo  {journal} {J.
  Phys. A}\ }\textbf {\bibinfo {volume} {44}},\ \bibinfo {pages} {095305}
  (\bibinfo {year} {2011})}\BibitemShut {NoStop}%
\bibitem [{\citenamefont {Reichardt}\ \emph {et~al.}(2013)\citenamefont
  {Reichardt}, \citenamefont {Unger},\ and\ \citenamefont
  {Vazirani}}]{Reichardt2013}%
  \BibitemOpen
  \bibfield  {author} {\bibinfo {author} {\bibfnamefont {Ben~W.}\ \bibnamefont
  {Reichardt}}, \bibinfo {author} {\bibfnamefont {Falk}\ \bibnamefont {Unger}},
  \ and\ \bibinfo {author} {\bibfnamefont {Umesh}\ \bibnamefont {Vazirani}},\
  }\bibfield  {title} {\enquote {\bibinfo {title} {Classical command of quantum
  systems},}\ }\href {\doibase 10.1038/nature12035} {\bibfield  {journal}
  {\bibinfo  {journal} {Nature}\ }\textbf {\bibinfo {volume} {496}},\ \bibinfo
  {pages} {456--460} (\bibinfo {year} {2013})}\BibitemShut {NoStop}%
\bibitem [{\citenamefont {Bell}(1964)}]{Bell1964}%
  \BibitemOpen
  \bibfield  {author} {\bibinfo {author} {\bibfnamefont {JS}~\bibnamefont
  {Bell}},\ }\bibfield  {title} {\enquote {\bibinfo {title} {On the
  {{Einstein}}-{{Podolsky}}-{{Rosen Paradox}}},}\ }\href@noop {} {\bibfield
  {journal} {\bibinfo  {journal} {Physics}\ }\textbf {\bibinfo {volume} {1}},\
  \bibinfo {pages} {195--200} (\bibinfo {year} {1964})}\BibitemShut {NoStop}%
\bibitem [{\citenamefont {Bancal}\ \emph {et~al.}(2012)\citenamefont {Bancal},
  \citenamefont {Pironio}, \citenamefont {Ac{\'\i}n}, \citenamefont {Liang},
  \citenamefont {Scarani},\ and\ \citenamefont {Gisin}}]{Bancal2012a}%
  \BibitemOpen
  \bibfield  {author} {\bibinfo {author} {\bibfnamefont {J.-D.}\ \bibnamefont
  {Bancal}}, \bibinfo {author} {\bibfnamefont {S.}~\bibnamefont {Pironio}},
  \bibinfo {author} {\bibfnamefont {A.}~\bibnamefont {Ac{\'\i}n}}, \bibinfo
  {author} {\bibfnamefont {Y.-C.}\ \bibnamefont {Liang}}, \bibinfo {author}
  {\bibfnamefont {V.}~\bibnamefont {Scarani}}, \ and\ \bibinfo {author}
  {\bibfnamefont {N.}~\bibnamefont {Gisin}},\ }\bibfield  {title} {\enquote
  {\bibinfo {title} {Quantum non-locality based on finite-speed causal
  influences leads to superluminal signalling},}\ }\href {\doibase
  10.1038/nphys2460} {\bibfield  {journal} {\bibinfo  {journal} {Nat. Phys.}\
  }\textbf {\bibinfo {volume} {8}},\ \bibinfo {pages} {867--870} (\bibinfo
  {year} {2012})}\BibitemShut {NoStop}%
\bibitem [{\citenamefont {Dall'Arno}\ \emph
  {et~al.}(2017{\natexlab{a}})\citenamefont {Dall'Arno}, \citenamefont
  {Brandsen},\ and\ \citenamefont {Buscemi}}]{DallArno2017a}%
  \BibitemOpen
  \bibfield  {author} {\bibinfo {author} {\bibfnamefont {Michele}\ \bibnamefont
  {Dall'Arno}}, \bibinfo {author} {\bibfnamefont {Sarah}\ \bibnamefont
  {Brandsen}}, \ and\ \bibinfo {author} {\bibfnamefont {Francesco}\
  \bibnamefont {Buscemi}},\ }\bibfield  {title} {\enquote {\bibinfo {title}
  {Device-independent tests of quantum channels},}\ }\href {\doibase
  10.1098/rspa.2016.0721} {\bibfield  {journal} {\bibinfo  {journal} {Proc. R.
  Soc. A}\ }\textbf {\bibinfo {volume} {473}},\ \bibinfo {pages} {20160721}
  (\bibinfo {year} {2017}{\natexlab{a}})}\BibitemShut {NoStop}%
\bibitem [{\citenamefont {Dall'Arno}\ \emph
  {et~al.}(2017{\natexlab{b}})\citenamefont {Dall'Arno}, \citenamefont
  {Brandsen}, \citenamefont {Buscemi},\ and\ \citenamefont
  {Vedral}}]{DallArno2017b}%
  \BibitemOpen
  \bibfield  {author} {\bibinfo {author} {\bibfnamefont {Michele}\ \bibnamefont
  {Dall'Arno}}, \bibinfo {author} {\bibfnamefont {Sarah}\ \bibnamefont
  {Brandsen}}, \bibinfo {author} {\bibfnamefont {Francesco}\ \bibnamefont
  {Buscemi}}, \ and\ \bibinfo {author} {\bibfnamefont {Vlatko}\ \bibnamefont
  {Vedral}},\ }\bibfield  {title} {\enquote {\bibinfo {title}
  {Device-{{Independent Tests}} of {{Quantum Measurements}}},}\ }\href
  {\doibase 10.1103/PhysRevLett.118.250501} {\bibfield  {journal} {\bibinfo
  {journal} {Phys. Rev. Lett.}\ }\textbf {\bibinfo {volume} {118}},\ \bibinfo
  {pages} {250501} (\bibinfo {year} {2017}{\natexlab{b}})}\BibitemShut
  {NoStop}%
\bibitem [{\citenamefont {Werner}(1989)}]{Werner1989}%
  \BibitemOpen
  \bibfield  {author} {\bibinfo {author} {\bibfnamefont {Reinhard~F.}\
  \bibnamefont {Werner}},\ }\bibfield  {title} {\enquote {\bibinfo {title}
  {Quantum states with {{Einstein}}-{{Podolsky}}-{{Rosen}} correlations
  admitting a hidden-variable model},}\ }\href {\doibase
  10.1103/PhysRevA.40.4277} {\bibfield  {journal} {\bibinfo  {journal} {Phys.
  Rev. A}\ }\textbf {\bibinfo {volume} {40}},\ \bibinfo {pages} {4277--4281}
  (\bibinfo {year} {1989})}\BibitemShut {NoStop}%
\bibitem [{\citenamefont {Barrett}(2002)}]{Barrett2002}%
  \BibitemOpen
  \bibfield  {author} {\bibinfo {author} {\bibfnamefont {Jonathan}\
  \bibnamefont {Barrett}},\ }\bibfield  {title} {\enquote {\bibinfo {title}
  {Nonsequential positive-operator-valued measurements on entangled mixed
  states do not always violate a {{Bell}} inequality},}\ }\href {\doibase
  10.1103/PhysRevA.65.042302} {\bibfield  {journal} {\bibinfo  {journal} {Phys.
  Rev. A}\ }\textbf {\bibinfo {volume} {65}},\ \bibinfo {pages} {042302}
  (\bibinfo {year} {2002})}\BibitemShut {NoStop}%
\bibitem [{\citenamefont {Pal}\ and\ \citenamefont {Ghosh}(2015)}]{Pal2015}%
  \BibitemOpen
  \bibfield  {author} {\bibinfo {author} {\bibfnamefont {Rajarshi}\
  \bibnamefont {Pal}}\ and\ \bibinfo {author} {\bibfnamefont {Sibasish}\
  \bibnamefont {Ghosh}},\ }\bibfield  {title} {\enquote {\bibinfo {title}
  {Non-locality breaking qubit channels: The case for {{CHSH}} inequality},}\
  }\href {\doibase 10.1088/1751-8113/48/15/155302} {\bibfield  {journal}
  {\bibinfo  {journal} {J. Phys. A}\ }\textbf {\bibinfo {volume} {48}},\
  \bibinfo {pages} {155302} (\bibinfo {year} {2015})}\BibitemShut {NoStop}%
\bibitem [{\citenamefont {Santos}(1992)}]{Santos1992}%
  \BibitemOpen
  \bibfield  {author} {\bibinfo {author} {\bibfnamefont {Emilio}\ \bibnamefont
  {Santos}},\ }\bibfield  {title} {\enquote {\bibinfo {title} {Critical
  analysis of the empirical tests of local hidden-variable theories},}\ }\href
  {\doibase 10.1103/PhysRevA.46.3646} {\bibfield  {journal} {\bibinfo
  {journal} {Phys. Rev. A}\ }\textbf {\bibinfo {volume} {46}},\ \bibinfo
  {pages} {3646--3656} (\bibinfo {year} {1992})}\BibitemShut {NoStop}%
\bibitem [{\citenamefont {Clauser}\ \emph {et~al.}(1969)\citenamefont
  {Clauser}, \citenamefont {Horne}, \citenamefont {Shimony},\ and\
  \citenamefont {Holt}}]{Clauser1969}%
  \BibitemOpen
  \bibfield  {author} {\bibinfo {author} {\bibfnamefont {John~F.}\ \bibnamefont
  {Clauser}}, \bibinfo {author} {\bibfnamefont {Michael~A.}\ \bibnamefont
  {Horne}}, \bibinfo {author} {\bibfnamefont {Abner}\ \bibnamefont {Shimony}},
  \ and\ \bibinfo {author} {\bibfnamefont {Richard~A.}\ \bibnamefont {Holt}},\
  }\bibfield  {title} {\enquote {\bibinfo {title} {Proposed {{Experiment}} to
  {{Test Local Hidden}}-{{Variable Theories}}},}\ }\href {\doibase
  10.1103/PhysRevLett.23.880} {\bibfield  {journal} {\bibinfo  {journal} {Phys.
  Rev. Lett.}\ }\textbf {\bibinfo {volume} {23}},\ \bibinfo {pages} {880--884}
  (\bibinfo {year} {1969})}\BibitemShut {NoStop}%
\bibitem [{\citenamefont {Ding}\ \emph
  {et~al.}(2015{\natexlab{a}})\citenamefont {Ding}, \citenamefont {Zhang},
  \citenamefont {Zhou}, \citenamefont {Shi}, \citenamefont {Shi},\ and\
  \citenamefont {Guo}}]{Ding2015}%
  \BibitemOpen
  \bibfield  {author} {\bibinfo {author} {\bibfnamefont {Dong-Sheng}\
  \bibnamefont {Ding}}, \bibinfo {author} {\bibfnamefont {Wei}\ \bibnamefont
  {Zhang}}, \bibinfo {author} {\bibfnamefont {Zhi-Yuan}\ \bibnamefont {Zhou}},
  \bibinfo {author} {\bibfnamefont {Shuai}\ \bibnamefont {Shi}}, \bibinfo
  {author} {\bibfnamefont {Bao-Sen}\ \bibnamefont {Shi}}, \ and\ \bibinfo
  {author} {\bibfnamefont {Guang-Can}\ \bibnamefont {Guo}},\ }\bibfield
  {title} {\enquote {\bibinfo {title} {Raman quantum memory of photonic
  polarized entanglement},}\ }\href {\doibase 10.1038/nphoton.2015.43}
  {\bibfield  {journal} {\bibinfo  {journal} {Nat. Photonics}\ }\textbf
  {\bibinfo {volume} {9}},\ \bibinfo {pages} {332--338} (\bibinfo {year}
  {2015}{\natexlab{a}})}\BibitemShut {NoStop}%
\bibitem [{\citenamefont {Ding}\ \emph
  {et~al.}(2015{\natexlab{b}})\citenamefont {Ding}, \citenamefont {Zhang},
  \citenamefont {Zhou}, \citenamefont {Shi}, \citenamefont {Xiang},
  \citenamefont {Wang}, \citenamefont {Jiang}, \citenamefont {Shi},\ and\
  \citenamefont {Guo}}]{Ding2015a}%
  \BibitemOpen
  \bibfield  {author} {\bibinfo {author} {\bibfnamefont {Dong-Sheng}\
  \bibnamefont {Ding}}, \bibinfo {author} {\bibfnamefont {Wei}\ \bibnamefont
  {Zhang}}, \bibinfo {author} {\bibfnamefont {Zhi-Yuan}\ \bibnamefont {Zhou}},
  \bibinfo {author} {\bibfnamefont {Shuai}\ \bibnamefont {Shi}}, \bibinfo
  {author} {\bibfnamefont {Guo-Yong}\ \bibnamefont {Xiang}}, \bibinfo {author}
  {\bibfnamefont {Xi-Shi}\ \bibnamefont {Wang}}, \bibinfo {author}
  {\bibfnamefont {Yun-Kun}\ \bibnamefont {Jiang}}, \bibinfo {author}
  {\bibfnamefont {Bao-Sen}\ \bibnamefont {Shi}}, \ and\ \bibinfo {author}
  {\bibfnamefont {Guang-Can}\ \bibnamefont {Guo}},\ }\bibfield  {title}
  {\enquote {\bibinfo {title} {Quantum {{Storage}} of {{Orbital Angular
  Momentum Entanglement}} in an {{Atomic Ensemble}}},}\ }\href {\doibase
  10.1103/PhysRevLett.114.050502} {\bibfield  {journal} {\bibinfo  {journal}
  {Phys. Rev. Lett.}\ }\textbf {\bibinfo {volume} {114}},\ \bibinfo {pages}
  {050502} (\bibinfo {year} {2015}{\natexlab{b}})}\BibitemShut {NoStop}%
\bibitem [{\citenamefont {Tiranov}\ \emph {et~al.}(2015)\citenamefont
  {Tiranov}, \citenamefont {Lavoie}, \citenamefont {Ferrier}, \citenamefont
  {Goldner}, \citenamefont {Verma}, \citenamefont {Nam}, \citenamefont {Mirin},
  \citenamefont {Lita}, \citenamefont {Marsili}, \citenamefont {Herrmann},
  \citenamefont {Silberhorn}, \citenamefont {Gisin}, \citenamefont {Afzelius},\
  and\ \citenamefont {Bussi{\`e}res}}]{Tiranov2015}%
  \BibitemOpen
  \bibfield  {author} {\bibinfo {author} {\bibfnamefont {Alexey}\ \bibnamefont
  {Tiranov}}, \bibinfo {author} {\bibfnamefont {Jonathan}\ \bibnamefont
  {Lavoie}}, \bibinfo {author} {\bibfnamefont {Alban}\ \bibnamefont {Ferrier}},
  \bibinfo {author} {\bibfnamefont {Philippe}\ \bibnamefont {Goldner}},
  \bibinfo {author} {\bibfnamefont {Varun~B.}\ \bibnamefont {Verma}}, \bibinfo
  {author} {\bibfnamefont {Sae~Woo}\ \bibnamefont {Nam}}, \bibinfo {author}
  {\bibfnamefont {Richard~P.}\ \bibnamefont {Mirin}}, \bibinfo {author}
  {\bibfnamefont {Adriana~E.}\ \bibnamefont {Lita}}, \bibinfo {author}
  {\bibfnamefont {Francesco}\ \bibnamefont {Marsili}}, \bibinfo {author}
  {\bibfnamefont {Harald}\ \bibnamefont {Herrmann}}, \bibinfo {author}
  {\bibfnamefont {Christine}\ \bibnamefont {Silberhorn}}, \bibinfo {author}
  {\bibfnamefont {Nicolas}\ \bibnamefont {Gisin}}, \bibinfo {author}
  {\bibfnamefont {Mikael}\ \bibnamefont {Afzelius}}, \ and\ \bibinfo {author}
  {\bibfnamefont {F{\'e}lix}\ \bibnamefont {Bussi{\`e}res}},\ }\bibfield
  {title} {\enquote {\bibinfo {title} {Storage of hyperentanglement in a
  solid-state quantum memory},}\ }\href {\doibase 10.1364/OPTICA.2.000279}
  {\bibfield  {journal} {\bibinfo  {journal} {Optica}\ }\textbf {\bibinfo
  {volume} {2}},\ \bibinfo {pages} {279--287} (\bibinfo {year}
  {2015})}\BibitemShut {NoStop}%
\bibitem [{Note1()}]{Note1}%
  \BibitemOpen
  \bibinfo {note} {This is to be compared with the protocol of \protect \textit
  {quantum channel falsification}, as considered in~\cite
  {DallArno2017a,DallArno2017b}. There, no side channel, not even a classical
  one, is freely granted, but must be part of the initial hypothesis (the one
  to be falsified). As a consequence, channel falsification can be done in a
  fully device-independent way, whereas quantum channel verification (the task
  considered in this paper) can \protect \textit {at most} be made measurement
  device-independent, but not fully device-independent.}\BibitemShut {Stop}%
\bibitem [{\citenamefont {Branciard}\ \emph {et~al.}(2013)\citenamefont
  {Branciard}, \citenamefont {Rosset}, \citenamefont {Liang},\ and\
  \citenamefont {Gisin}}]{Branciard2013}%
  \BibitemOpen
  \bibfield  {author} {\bibinfo {author} {\bibfnamefont {Cyril}\ \bibnamefont
  {Branciard}}, \bibinfo {author} {\bibfnamefont {Denis}\ \bibnamefont
  {Rosset}}, \bibinfo {author} {\bibfnamefont {Yeong-Cherng}\ \bibnamefont
  {Liang}}, \ and\ \bibinfo {author} {\bibfnamefont {Nicolas}\ \bibnamefont
  {Gisin}},\ }\bibfield  {title} {\enquote {\bibinfo {title}
  {Measurement-{{Device}}-{{Independent Entanglement Witnesses}} for {{All
  Entangled Quantum States}}},}\ }\href {\doibase
  10.1103/PhysRevLett.110.060405} {\bibfield  {journal} {\bibinfo  {journal}
  {Phys. Rev. Lett.}\ }\textbf {\bibinfo {volume} {110}},\ \bibinfo {pages}
  {060405} (\bibinfo {year} {2013})}\BibitemShut {NoStop}%
\bibitem [{\citenamefont {Buscemi}(2012{\natexlab{a}})}]{Buscemi2012}%
  \BibitemOpen
  \bibfield  {author} {\bibinfo {author} {\bibfnamefont {Francesco}\
  \bibnamefont {Buscemi}},\ }\bibfield  {title} {\enquote {\bibinfo {title}
  {All {{Entangled Quantum States Are Nonlocal}}},}\ }\href {\doibase
  10.1103/PhysRevLett.108.200401} {\bibfield  {journal} {\bibinfo  {journal}
  {Phys. Rev. Lett.}\ }\textbf {\bibinfo {volume} {108}},\ \bibinfo {pages}
  {200401} (\bibinfo {year} {2012}{\natexlab{a}})}\BibitemShut {NoStop}%
\bibitem [{\citenamefont {Rosset}\ \emph {et~al.}(2013)\citenamefont {Rosset},
  \citenamefont {Branciard}, \citenamefont {Gisin},\ and\ \citenamefont
  {Liang}}]{Rosset2013a}%
  \BibitemOpen
  \bibfield  {author} {\bibinfo {author} {\bibfnamefont {Denis}\ \bibnamefont
  {Rosset}}, \bibinfo {author} {\bibfnamefont {Cyril}\ \bibnamefont
  {Branciard}}, \bibinfo {author} {\bibfnamefont {Nicolas}\ \bibnamefont
  {Gisin}}, \ and\ \bibinfo {author} {\bibfnamefont {Yeong-Cherng}\
  \bibnamefont {Liang}},\ }\bibfield  {title} {\enquote {\bibinfo {title}
  {Entangled states cannot be classically simulated in generalized {{Bell}}
  experiments with quantum inputs},}\ }\href {\doibase
  10.1088/1367-2630/15/5/053025} {\bibfield  {journal} {\bibinfo  {journal}
  {‎New J. Phys}\ }\textbf {\bibinfo {volume} {15}},\ \bibinfo {pages}
  {053025} (\bibinfo {year} {2013})}\BibitemShut {NoStop}%
\bibitem [{\citenamefont {Cavalcanti}\ \emph {et~al.}(2013)\citenamefont
  {Cavalcanti}, \citenamefont {Hall},\ and\ \citenamefont
  {Wiseman}}]{Cavalcanti2013a}%
  \BibitemOpen
  \bibfield  {author} {\bibinfo {author} {\bibfnamefont {Eric~G.}\ \bibnamefont
  {Cavalcanti}}, \bibinfo {author} {\bibfnamefont {Michael J.~W.}\ \bibnamefont
  {Hall}}, \ and\ \bibinfo {author} {\bibfnamefont {Howard~M.}\ \bibnamefont
  {Wiseman}},\ }\bibfield  {title} {\enquote {\bibinfo {title} {Entanglement
  verification and steering when {{Alice}} and {{Bob}} cannot be trusted},}\
  }\href {\doibase 10.1103/PhysRevA.87.032306} {\bibfield  {journal} {\bibinfo
  {journal} {Phys. Rev. A}\ }\textbf {\bibinfo {volume} {87}},\ \bibinfo
  {pages} {032306} (\bibinfo {year} {2013})}\BibitemShut {NoStop}%
\bibitem [{\citenamefont {Cavalcanti}\ \emph {et~al.}(2017)\citenamefont
  {Cavalcanti}, \citenamefont {Skrzypczyk},\ and\ \citenamefont {{\v
  S}upi{\'c}}}]{Cavalcanti2017a}%
  \BibitemOpen
  \bibfield  {author} {\bibinfo {author} {\bibfnamefont {Daniel}\ \bibnamefont
  {Cavalcanti}}, \bibinfo {author} {\bibfnamefont {Paul}\ \bibnamefont
  {Skrzypczyk}}, \ and\ \bibinfo {author} {\bibfnamefont {Ivan}\ \bibnamefont
  {{\v S}upi{\'c}}},\ }\bibfield  {title} {\enquote {\bibinfo {title} {All
  {{Entangled States}} can {{Demonstrate Nonclassical Teleportation}}},}\
  }\href {\doibase 10.1103/PhysRevLett.119.110501} {\bibfield  {journal}
  {\bibinfo  {journal} {Phys. Rev. Lett.}\ }\textbf {\bibinfo {volume} {119}},\
  \bibinfo {pages} {110501} (\bibinfo {year} {2017})}\BibitemShut {NoStop}%
\bibitem [{\citenamefont {Xu}\ \emph {et~al.}(2014)\citenamefont {Xu},
  \citenamefont {Yuan}, \citenamefont {Chen}, \citenamefont {Lu}, \citenamefont
  {Yao}, \citenamefont {Ma}, \citenamefont {Chen},\ and\ \citenamefont
  {Pan}}]{Xu2014}%
  \BibitemOpen
  \bibfield  {author} {\bibinfo {author} {\bibfnamefont {Ping}\ \bibnamefont
  {Xu}}, \bibinfo {author} {\bibfnamefont {Xiao}\ \bibnamefont {Yuan}},
  \bibinfo {author} {\bibfnamefont {Luo-Kan}\ \bibnamefont {Chen}}, \bibinfo
  {author} {\bibfnamefont {He}~\bibnamefont {Lu}}, \bibinfo {author}
  {\bibfnamefont {Xing-Can}\ \bibnamefont {Yao}}, \bibinfo {author}
  {\bibfnamefont {Xiongfeng}\ \bibnamefont {Ma}}, \bibinfo {author}
  {\bibfnamefont {Yu-Ao}\ \bibnamefont {Chen}}, \ and\ \bibinfo {author}
  {\bibfnamefont {Jian-Wei}\ \bibnamefont {Pan}},\ }\bibfield  {title}
  {\enquote {\bibinfo {title} {Implementation of a
  {{Measurement}}-{{Device}}-{{Independent Entanglement Witness}}},}\ }\href
  {\doibase 10.1103/PhysRevLett.112.140506} {\bibfield  {journal} {\bibinfo
  {journal} {Phys. Rev. Lett.}\ }\textbf {\bibinfo {volume} {112}},\ \bibinfo
  {pages} {140506} (\bibinfo {year} {2014})}\BibitemShut {NoStop}%
\bibitem [{\citenamefont {Nawareg}\ \emph {et~al.}(2015)\citenamefont
  {Nawareg}, \citenamefont {Muhammad}, \citenamefont {Amselem},\ and\
  \citenamefont {Bourennane}}]{Nawareg2015}%
  \BibitemOpen
  \bibfield  {author} {\bibinfo {author} {\bibfnamefont {Mohamed}\ \bibnamefont
  {Nawareg}}, \bibinfo {author} {\bibfnamefont {Sadiq}\ \bibnamefont
  {Muhammad}}, \bibinfo {author} {\bibfnamefont {Elias}\ \bibnamefont
  {Amselem}}, \ and\ \bibinfo {author} {\bibfnamefont {Mohamed}\ \bibnamefont
  {Bourennane}},\ }\bibfield  {title} {\enquote {\bibinfo {title} {Experimental
  {{Measurement}}-{{Device}}-{{Independent Entanglement Detection}}},}\ }\href
  {\doibase 10.1038/srep08048} {\bibfield  {journal} {\bibinfo  {journal} {Sci.
  Rep.}\ }\textbf {\bibinfo {volume} {5}} (\bibinfo {year} {2015}),\
  10.1038/srep08048}\BibitemShut {NoStop}%
\bibitem [{\citenamefont {Kocsis}\ \emph {et~al.}(2015)\citenamefont {Kocsis},
  \citenamefont {Hall}, \citenamefont {Bennet}, \citenamefont {Saunders},\ and\
  \citenamefont {Pryde}}]{Kocsis2015}%
  \BibitemOpen
  \bibfield  {author} {\bibinfo {author} {\bibfnamefont {Sacha}\ \bibnamefont
  {Kocsis}}, \bibinfo {author} {\bibfnamefont {Michael J.~W.}\ \bibnamefont
  {Hall}}, \bibinfo {author} {\bibfnamefont {Adam~J.}\ \bibnamefont {Bennet}},
  \bibinfo {author} {\bibfnamefont {Dylan~J.}\ \bibnamefont {Saunders}}, \ and\
  \bibinfo {author} {\bibfnamefont {Geoff~J.}\ \bibnamefont {Pryde}},\
  }\bibfield  {title} {\enquote {\bibinfo {title} {Experimental
  measurement-device-independent verification of quantum steering},}\ }\href
  {\doibase 10.1038/ncomms6886} {\bibfield  {journal} {\bibinfo  {journal}
  {Nat. Commun.}\ }\textbf {\bibinfo {volume} {6}},\ \bibinfo {pages} {6886}
  (\bibinfo {year} {2015})}\BibitemShut {NoStop}%
\bibitem [{\citenamefont {Verbanis}\ \emph {et~al.}(2016)\citenamefont
  {Verbanis}, \citenamefont {Martin}, \citenamefont {Rosset}, \citenamefont
  {Lim}, \citenamefont {Thew},\ and\ \citenamefont {Zbinden}}]{Verbanis2016}%
  \BibitemOpen
  \bibfield  {author} {\bibinfo {author} {\bibfnamefont {E.}~\bibnamefont
  {Verbanis}}, \bibinfo {author} {\bibfnamefont {A.}~\bibnamefont {Martin}},
  \bibinfo {author} {\bibfnamefont {D.}~\bibnamefont {Rosset}}, \bibinfo
  {author} {\bibfnamefont {C.C.W.}\ \bibnamefont {Lim}}, \bibinfo {author}
  {\bibfnamefont {R.T.}\ \bibnamefont {Thew}}, \ and\ \bibinfo {author}
  {\bibfnamefont {H.}~\bibnamefont {Zbinden}},\ }\bibfield  {title} {\enquote
  {\bibinfo {title} {Resource-{{Efficient
  Measurement}}-{{Device}}-{{Independent Entanglement Witness}}},}\ }\href
  {\doibase 10.1103/PhysRevLett.116.190501} {\bibfield  {journal} {\bibinfo
  {journal} {Phys. Rev. Lett.}\ }\textbf {\bibinfo {volume} {116}},\ \bibinfo
  {pages} {190501} (\bibinfo {year} {2016})}\BibitemShut {NoStop}%
\bibitem [{\citenamefont {Yuan}\ \emph {et~al.}(2016)\citenamefont {Yuan},
  \citenamefont {Mei}, \citenamefont {Zhou},\ and\ \citenamefont
  {Ma}}]{Yuan2016}%
  \BibitemOpen
  \bibfield  {author} {\bibinfo {author} {\bibfnamefont {Xiao}\ \bibnamefont
  {Yuan}}, \bibinfo {author} {\bibfnamefont {Quanxin}\ \bibnamefont {Mei}},
  \bibinfo {author} {\bibfnamefont {Shan}\ \bibnamefont {Zhou}}, \ and\
  \bibinfo {author} {\bibfnamefont {Xiongfeng}\ \bibnamefont {Ma}},\ }\bibfield
   {title} {\enquote {\bibinfo {title} {Reliable and robust entanglement
  witness},}\ }\href {\doibase 10.1103/PhysRevA.93.042317} {\bibfield
  {journal} {\bibinfo  {journal} {Phys. Rev. A}\ }\textbf {\bibinfo {volume}
  {93}},\ \bibinfo {pages} {042317} (\bibinfo {year} {2016})}\BibitemShut
  {NoStop}%
\bibitem [{\citenamefont {Rosset}\ \emph
  {et~al.}(2017{\natexlab{a}})\citenamefont {Rosset}, \citenamefont {Martin},
  \citenamefont {Verbanis}, \citenamefont {Lim},\ and\ \citenamefont
  {Thew}}]{Rosset2017b}%
  \BibitemOpen
  \bibfield  {author} {\bibinfo {author} {\bibfnamefont {Denis}\ \bibnamefont
  {Rosset}}, \bibinfo {author} {\bibfnamefont {Anthony}\ \bibnamefont
  {Martin}}, \bibinfo {author} {\bibfnamefont {Ephanielle}\ \bibnamefont
  {Verbanis}}, \bibinfo {author} {\bibfnamefont {Charles Ci~Wen}\ \bibnamefont
  {Lim}}, \ and\ \bibinfo {author} {\bibfnamefont {Rob}\ \bibnamefont {Thew}},\
  }\bibfield  {title} {\enquote {\bibinfo {title} {Practical
  measurement-device-independent entanglement quantification},}\ }\href@noop {}
  {\  (\bibinfo {year} {2017}{\natexlab{a}})},\ \Eprint
  {http://arxiv.org/abs/1709.03090} {arXiv:1709.03090} \BibitemShut {NoStop}%
\bibitem [{\citenamefont {Zhao}\ \emph {et~al.}(2016)\citenamefont {Zhao},
  \citenamefont {Yuan},\ and\ \citenamefont {Ma}}]{Zhao2016}%
  \BibitemOpen
  \bibfield  {author} {\bibinfo {author} {\bibfnamefont {Qi}~\bibnamefont
  {Zhao}}, \bibinfo {author} {\bibfnamefont {Xiao}\ \bibnamefont {Yuan}}, \
  and\ \bibinfo {author} {\bibfnamefont {Xiongfeng}\ \bibnamefont {Ma}},\
  }\bibfield  {title} {\enquote {\bibinfo {title} {Efficient
  measurement-device-independent detection of multipartite entanglement
  structure},}\ }\href {\doibase 10.1103/PhysRevA.94.012343} {\bibfield
  {journal} {\bibinfo  {journal} {Phys. Rev. A}\ }\textbf {\bibinfo {volume}
  {94}},\ \bibinfo {pages} {012343} (\bibinfo {year} {2016})}\BibitemShut
  {NoStop}%
\bibitem [{\citenamefont {Shahandeh}\ \emph {et~al.}(2017)\citenamefont
  {Shahandeh}, \citenamefont {Hall},\ and\ \citenamefont
  {Ralph}}]{Shahandeh2017}%
  \BibitemOpen
  \bibfield  {author} {\bibinfo {author} {\bibfnamefont {Farid}\ \bibnamefont
  {Shahandeh}}, \bibinfo {author} {\bibfnamefont {Michael~J.W.}\ \bibnamefont
  {Hall}}, \ and\ \bibinfo {author} {\bibfnamefont {Timothy~C.}\ \bibnamefont
  {Ralph}},\ }\bibfield  {title} {\enquote {\bibinfo {title}
  {Measurement-{{Device}}-{{Independent Approach}} to {{Entanglement
  Measures}}},}\ }\href {\doibase 10.1103/PhysRevLett.118.150505} {\bibfield
  {journal} {\bibinfo  {journal} {Phys. Rev. Lett.}\ }\textbf {\bibinfo
  {volume} {118}},\ \bibinfo {pages} {150505} (\bibinfo {year}
  {2017})}\BibitemShut {NoStop}%
\bibitem [{\citenamefont {{\v{S}}upi{\'c}}\ \emph {et~al.}(2017)\citenamefont
  {{\v{S}}upi{\'c}}, \citenamefont {Skrzypczyk},\ and\ \citenamefont
  {Cavalcanti}}]{Supic2017a}%
  \BibitemOpen
  \bibfield  {author} {\bibinfo {author} {\bibfnamefont {Ivan}\ \bibnamefont
  {{\v{S}}upi{\'c}}}, \bibinfo {author} {\bibfnamefont {Paul}\ \bibnamefont
  {Skrzypczyk}}, \ and\ \bibinfo {author} {\bibfnamefont {Daniel}\ \bibnamefont
  {Cavalcanti}},\ }\bibfield  {title} {\enquote {\bibinfo {title}
  {Measurement-device-independent entanglement and randomness estimation in
  quantum networks},}\ }\href@noop {} {\bibfield  {journal} {\bibinfo
  {journal} {Phys. Rev. A}\ }\textbf {\bibinfo {volume} {95}},\ \bibinfo
  {pages} {042340} (\bibinfo {year} {2017})}\BibitemShut {NoStop}%
\bibitem [{\citenamefont {Cao}\ \emph {et~al.}(2015)\citenamefont {Cao},
  \citenamefont {Zhou},\ and\ \citenamefont {Ma}}]{Cao2015}%
  \BibitemOpen
  \bibfield  {author} {\bibinfo {author} {\bibfnamefont {Zhu}\ \bibnamefont
  {Cao}}, \bibinfo {author} {\bibfnamefont {Hongyi}\ \bibnamefont {Zhou}}, \
  and\ \bibinfo {author} {\bibfnamefont {Xiongfeng}\ \bibnamefont {Ma}},\
  }\bibfield  {title} {\enquote {\bibinfo {title} {Loss-tolerant
  measurement-device-independent quantum random number generation},}\ }\href
  {\doibase 10.1088/1367-2630/17/12/125011} {\bibfield  {journal} {\bibinfo
  {journal} {‎New J. Phys}\ }\textbf {\bibinfo {volume} {17}},\ \bibinfo
  {pages} {125011} (\bibinfo {year} {2015})}\BibitemShut {NoStop}%
\bibitem [{\citenamefont {Chaturvedi}\ and\ \citenamefont
  {Banik}(2015)}]{Chaturvedi2015}%
  \BibitemOpen
  \bibfield  {author} {\bibinfo {author} {\bibfnamefont {Anubhav}\ \bibnamefont
  {Chaturvedi}}\ and\ \bibinfo {author} {\bibfnamefont {Manik}\ \bibnamefont
  {Banik}},\ }\bibfield  {title} {\enquote {\bibinfo {title}
  {Measurement-device-independent randomness from local entangled states},}\
  }\href {\doibase 10.1209/0295-5075/112/30003} {\bibfield  {journal} {\bibinfo
   {journal} {EPL}\ }\textbf {\bibinfo {volume} {112}},\ \bibinfo {pages}
  {30003} (\bibinfo {year} {2015})}\BibitemShut {NoStop}%
\bibitem [{\citenamefont {Horodecki}\ and\ \citenamefont
  {Oppenheim}(2012)}]{Horodecki2012}%
  \BibitemOpen
  \bibfield  {author} {\bibinfo {author} {\bibfnamefont {Michal}\ \bibnamefont
  {Horodecki}}\ and\ \bibinfo {author} {\bibfnamefont {Jonathan}\ \bibnamefont
  {Oppenheim}},\ }\bibfield  {title} {\enquote {\bibinfo {title} {(quantumness
  in the context of) resource theories},}\ }\href {\doibase
  10.1142/S0217979213450197} {\bibfield  {journal} {\bibinfo  {journal} {Int.
  J. Mod. Phys. B}\ }\textbf {\bibinfo {volume} {27}},\ \bibinfo {pages}
  {1345019} (\bibinfo {year} {2012})}\BibitemShut {NoStop}%
\bibitem [{\citenamefont {Brand{\~a}o}\ and\ \citenamefont
  {Gour}(2015)}]{Brandao2015}%
  \BibitemOpen
  \bibfield  {author} {\bibinfo {author} {\bibfnamefont {Fernando~G.S.L.}\
  \bibnamefont {Brand{\~a}o}}\ and\ \bibinfo {author} {\bibfnamefont {Gilad}\
  \bibnamefont {Gour}},\ }\bibfield  {title} {\enquote {\bibinfo {title}
  {Reversible {{Framework}} for {{Quantum Resource Theories}}},}\ }\href
  {\doibase 10.1103/PhysRevLett.115.070503} {\bibfield  {journal} {\bibinfo
  {journal} {Phys. Rev. Lett.}\ }\textbf {\bibinfo {volume} {115}},\ \bibinfo
  {pages} {070503} (\bibinfo {year} {2015})}\BibitemShut {NoStop}%
\bibitem [{\citenamefont {Coecke}\ \emph {et~al.}(2016)\citenamefont {Coecke},
  \citenamefont {Fritz},\ and\ \citenamefont {Spekkens}}]{Coecke2016}%
  \BibitemOpen
  \bibfield  {author} {\bibinfo {author} {\bibfnamefont {Bob}\ \bibnamefont
  {Coecke}}, \bibinfo {author} {\bibfnamefont {Tobias}\ \bibnamefont {Fritz}},
  \ and\ \bibinfo {author} {\bibfnamefont {Robert~W.}\ \bibnamefont
  {Spekkens}},\ }\bibfield  {title} {\enquote {\bibinfo {title} {A mathematical
  theory of resources},}\ }\href {\doibase 10.1016/j.ic.2016.02.008} {\bibfield
   {journal} {\bibinfo  {journal} {Inf. Comput.}\ }\textbf {\bibinfo {volume}
  {250}},\ \bibinfo {pages} {59--86} (\bibinfo {year} {2016})}\BibitemShut
  {NoStop}%
\bibitem [{\citenamefont {Streltsov}\ \emph {et~al.}(2017)\citenamefont
  {Streltsov}, \citenamefont {Adesso},\ and\ \citenamefont
  {Plenio}}]{Streltsov2017}%
  \BibitemOpen
  \bibfield  {author} {\bibinfo {author} {\bibfnamefont {Alexander}\
  \bibnamefont {Streltsov}}, \bibinfo {author} {\bibfnamefont {Gerardo}\
  \bibnamefont {Adesso}}, \ and\ \bibinfo {author} {\bibfnamefont {Martin~B.}\
  \bibnamefont {Plenio}},\ }\bibfield  {title} {\enquote {\bibinfo {title}
  {Quantum {{Coherence}} as a {{Resource}}},}\ }\href {\doibase
  10.1103/RevModPhys.89.041003} {\bibfield  {journal} {\bibinfo  {journal}
  {Rev. Mod. Phys.}\ }\textbf {\bibinfo {volume} {89}},\ \bibinfo {pages}
  {041003} (\bibinfo {year} {2017})}\BibitemShut {NoStop}%
\bibitem [{\citenamefont {Buscemi}\ and\ \citenamefont
  {Gour}(2017)}]{Buscemi2017}%
  \BibitemOpen
  \bibfield  {author} {\bibinfo {author} {\bibfnamefont {Francesco}\
  \bibnamefont {Buscemi}}\ and\ \bibinfo {author} {\bibfnamefont {Gilad}\
  \bibnamefont {Gour}},\ }\bibfield  {title} {\enquote {\bibinfo {title}
  {Quantum relative {{Lorenz}} curves},}\ }\href {\doibase
  10.1103/PhysRevA.95.012110} {\bibfield  {journal} {\bibinfo  {journal} {Phys.
  Rev. A}\ }\textbf {\bibinfo {volume} {95}},\ \bibinfo {pages} {012110}
  (\bibinfo {year} {2017})}\BibitemShut {NoStop}%
\bibitem [{\citenamefont {Ben~Dana}\ \emph {et~al.}(2017)\citenamefont
  {Ben~Dana}, \citenamefont {Garc{\'\i}a~D{\'\i}az}, \citenamefont {Mejatty},\
  and\ \citenamefont {Winter}}]{BenDana2017}%
  \BibitemOpen
  \bibfield  {author} {\bibinfo {author} {\bibfnamefont {Khaled}\ \bibnamefont
  {Ben~Dana}}, \bibinfo {author} {\bibfnamefont {Mar{\'\i}a}\ \bibnamefont
  {Garc{\'\i}a~D{\'\i}az}}, \bibinfo {author} {\bibfnamefont {Mohamed}\
  \bibnamefont {Mejatty}}, \ and\ \bibinfo {author} {\bibfnamefont {Andreas}\
  \bibnamefont {Winter}},\ }\bibfield  {title} {\enquote {\bibinfo {title}
  {Resource theory of coherence: {{Beyond}} states},}\ }\href {\doibase
  10.1103/PhysRevA.95.062327} {\bibfield  {journal} {\bibinfo  {journal} {Phys.
  Rev. A}\ }\textbf {\bibinfo {volume} {95}},\ \bibinfo {pages} {062327}
  (\bibinfo {year} {2017})}\BibitemShut {NoStop}%
\bibitem [{\citenamefont {Chiribella}\ \emph {et~al.}(2008)\citenamefont
  {Chiribella}, \citenamefont {D'Ariano},\ and\ \citenamefont
  {Perinotti}}]{Chiribella2008}%
  \BibitemOpen
  \bibfield  {author} {\bibinfo {author} {\bibfnamefont {G.}~\bibnamefont
  {Chiribella}}, \bibinfo {author} {\bibfnamefont {G.~M.}\ \bibnamefont
  {D'Ariano}}, \ and\ \bibinfo {author} {\bibfnamefont {P.}~\bibnamefont
  {Perinotti}},\ }\bibfield  {title} {\enquote {\bibinfo {title} {Transforming
  quantum operations: {{Quantum}} supermaps},}\ }\href {\doibase
  10.1209/0295-5075/83/30004} {\bibfield  {journal} {\bibinfo  {journal} {EPL}\
  }\textbf {\bibinfo {volume} {83}},\ \bibinfo {pages} {30004} (\bibinfo {year}
  {2008})}\BibitemShut {NoStop}%
\bibitem [{\citenamefont {Bisio}\ \emph {et~al.}(2012)\citenamefont {Bisio},
  \citenamefont {D'Ariano}, \citenamefont {Perinotti},\ and\ \citenamefont
  {Sedl{\'a}k}}]{Bisio2012}%
  \BibitemOpen
  \bibfield  {author} {\bibinfo {author} {\bibfnamefont {Alessandro}\
  \bibnamefont {Bisio}}, \bibinfo {author} {\bibfnamefont {Giacomo~Mauro}\
  \bibnamefont {D'Ariano}}, \bibinfo {author} {\bibfnamefont {Paolo}\
  \bibnamefont {Perinotti}}, \ and\ \bibinfo {author} {\bibfnamefont {Michal}\
  \bibnamefont {Sedl{\'a}k}},\ }\bibfield  {title} {\enquote {\bibinfo {title}
  {Memory cost of quantum protocols},}\ }\href {\doibase
  10.1103/PhysRevA.85.032333} {\bibfield  {journal} {\bibinfo  {journal} {Phys.
  Rev. A}\ }\textbf {\bibinfo {volume} {85}},\ \bibinfo {pages} {032333}
  (\bibinfo {year} {2012})}\BibitemShut {NoStop}%
\bibitem [{\citenamefont {Akibue}\ \emph {et~al.}(2017)\citenamefont {Akibue},
  \citenamefont {Owari}, \citenamefont {Kato},\ and\ \citenamefont
  {Murao}}]{Akibue2017}%
  \BibitemOpen
  \bibfield  {author} {\bibinfo {author} {\bibfnamefont {Seiseki}\ \bibnamefont
  {Akibue}}, \bibinfo {author} {\bibfnamefont {Masaki}\ \bibnamefont {Owari}},
  \bibinfo {author} {\bibfnamefont {Go}~\bibnamefont {Kato}}, \ and\ \bibinfo
  {author} {\bibfnamefont {Mio}\ \bibnamefont {Murao}},\ }\bibfield  {title}
  {\enquote {\bibinfo {title} {Entanglement-assisted classical communication
  can simulate classical communication without causal order},}\ }\href
  {\doibase 10.1103/PhysRevA.96.062331} {\bibfield  {journal} {\bibinfo
  {journal} {Phys. Rev. A}\ }\textbf {\bibinfo {volume} {96}},\ \bibinfo
  {pages} {062331} (\bibinfo {year} {2017})}\BibitemShut {NoStop}%
\bibitem [{\citenamefont {Blackwell}(1953)}]{Blackwell1953}%
  \BibitemOpen
  \bibfield  {author} {\bibinfo {author} {\bibfnamefont {David}\ \bibnamefont
  {Blackwell}},\ }\bibfield  {title} {\enquote {\bibinfo {title} {Equivalent
  {{Comparisons}} of {{Experiments}}},}\ }\href {\doibase
  10.1214/aoms/1177729032} {\bibfield  {journal} {\bibinfo  {journal} {Ann.
  Math. Stat.}\ }\textbf {\bibinfo {volume} {24}},\ \bibinfo {pages} {265--272}
  (\bibinfo {year} {1953})}\BibitemShut {NoStop}%
\bibitem [{\citenamefont {Cam}(1964)}]{Cam1964}%
  \BibitemOpen
  \bibfield  {author} {\bibinfo {author} {\bibfnamefont {L.~Le}\ \bibnamefont
  {Cam}},\ }\bibfield  {title} {\enquote {\bibinfo {title} {Sufficiency and
  {{Approximate Sufficiency}}},}\ }\href {\doibase 10.1214/aoms/1177700372}
  {\bibfield  {journal} {\bibinfo  {journal} {Ann. Math. Stat.}\ }\textbf
  {\bibinfo {volume} {35}},\ \bibinfo {pages} {1419--1455} (\bibinfo {year}
  {1964})}\BibitemShut {NoStop}%
\bibitem [{\citenamefont {Buscemi}(2012{\natexlab{b}})}]{Buscemi2012a}%
  \BibitemOpen
  \bibfield  {author} {\bibinfo {author} {\bibfnamefont {Francesco}\
  \bibnamefont {Buscemi}},\ }\bibfield  {title} {\enquote {\bibinfo {title}
  {Comparison of {{Quantum Statistical Models}}: {{Equivalent Conditions}} for
  {{Sufficiency}}},}\ }\href {\doibase 10.1007/s00220-012-1421-3} {\bibfield
  {journal} {\bibinfo  {journal} {Commun. Math. Phys.}\ }\textbf {\bibinfo
  {volume} {310}},\ \bibinfo {pages} {625--647} (\bibinfo {year}
  {2012}{\natexlab{b}})}\BibitemShut {NoStop}%
\bibitem [{\citenamefont {Buscemi}\ \emph {et~al.}(2014)\citenamefont
  {Buscemi}, \citenamefont {Datta},\ and\ \citenamefont
  {Strelchuk}}]{Buscemi2014}%
  \BibitemOpen
  \bibfield  {author} {\bibinfo {author} {\bibfnamefont {Francesco}\
  \bibnamefont {Buscemi}}, \bibinfo {author} {\bibfnamefont {Nilanjana}\
  \bibnamefont {Datta}}, \ and\ \bibinfo {author} {\bibfnamefont {Sergii}\
  \bibnamefont {Strelchuk}},\ }\bibfield  {title} {\enquote {\bibinfo {title}
  {Game-theoretic characterization of antidegradable channels},}\ }\href
  {\doibase 10.1063/1.4895918} {\bibfield  {journal} {\bibinfo  {journal} {J.
  Math. Phys.}\ }\textbf {\bibinfo {volume} {55}},\ \bibinfo {pages} {092202}
  (\bibinfo {year} {2014})}\BibitemShut {NoStop}%
\bibitem [{\citenamefont {Buscemi}(2016{\natexlab{a}})}]{Buscemi2016}%
  \BibitemOpen
  \bibfield  {author} {\bibinfo {author} {\bibfnamefont {F.}~\bibnamefont
  {Buscemi}},\ }\bibfield  {title} {\enquote {\bibinfo {title} {Degradable
  channels, less noisy channels, and quantum statistical morphisms: {{An}}
  equivalence relation},}\ }\href {\doibase 10.1134/S0032946016030017}
  {\bibfield  {journal} {\bibinfo  {journal} {Probl. Inform. Transm.}\ }\textbf
  {\bibinfo {volume} {52}},\ \bibinfo {pages} {201--213} (\bibinfo {year}
  {2016}{\natexlab{a}})}\BibitemShut {NoStop}%
\bibitem [{\citenamefont {Buscemi}(2014)}]{Buscemi2014a}%
  \BibitemOpen
  \bibfield  {author} {\bibinfo {author} {\bibfnamefont {Francesco}\
  \bibnamefont {Buscemi}},\ }\bibfield  {title} {\enquote {\bibinfo {title}
  {Complete {{Positivity}}, {{Markovianity}}, and the {{Quantum
  Data}}-{{Processing Inequality}}, in the {{Presence}} of {{Initial
  System}}-{{Environment Correlations}}},}\ }\href {\doibase
  10.1103/PhysRevLett.113.140502} {\bibfield  {journal} {\bibinfo  {journal}
  {Phys. Rev. Lett.}\ }\textbf {\bibinfo {volume} {113}},\ \bibinfo {pages}
  {140502} (\bibinfo {year} {2014})}\BibitemShut {NoStop}%
\bibitem [{\citenamefont {Buscemi}\ and\ \citenamefont
  {Datta}(2016)}]{Buscemi2016a}%
  \BibitemOpen
  \bibfield  {author} {\bibinfo {author} {\bibfnamefont {Francesco}\
  \bibnamefont {Buscemi}}\ and\ \bibinfo {author} {\bibfnamefont {Nilanjana}\
  \bibnamefont {Datta}},\ }\bibfield  {title} {\enquote {\bibinfo {title}
  {Equivalence between divisibility and monotonic decrease of information in
  classical and quantum stochastic processes},}\ }\href {\doibase
  10.1103/PhysRevA.93.012101} {\bibfield  {journal} {\bibinfo  {journal} {Phys.
  Rev. A}\ }\textbf {\bibinfo {volume} {93}},\ \bibinfo {pages} {012101}
  (\bibinfo {year} {2016})}\BibitemShut {NoStop}%
\bibitem [{\citenamefont {Buscemi}(2015)}]{Buscemi2015}%
  \BibitemOpen
  \bibfield  {author} {\bibinfo {author} {\bibfnamefont {Francesco}\
  \bibnamefont {Buscemi}},\ }\bibfield  {title} {\enquote {\bibinfo {title}
  {Fully quantum second-law--like statements from the theory of statistical
  comparisons},}\ }\href@noop {} {\  (\bibinfo {year} {2015})},\ \Eprint
  {http://arxiv.org/abs/1505.00535} {arXiv:1505.00535} \BibitemShut {NoStop}%
\bibitem [{\citenamefont {Buscemi}(2016{\natexlab{b}})}]{Buscemi2016b}%
  \BibitemOpen
  \bibfield  {author} {\bibinfo {author} {\bibfnamefont {Francesco}\
  \bibnamefont {Buscemi}},\ }\bibfield  {title} {\enquote {\bibinfo {title}
  {Reverse {{Data}}-{{Processing Theorems}} and {{Computational Second
  Laws}}},}\ }\href@noop {} {\  (\bibinfo {year} {2016}{\natexlab{b}})},\
  \Eprint {http://arxiv.org/abs/1607.08335} {arXiv:1607.08335} \BibitemShut
  {NoStop}%
\bibitem [{\citenamefont {Gour}\ \emph {et~al.}(2017)\citenamefont {Gour},
  \citenamefont {Jennings}, \citenamefont {Buscemi}, \citenamefont {Duan},\
  and\ \citenamefont {Marvian}}]{Gour2017}%
  \BibitemOpen
  \bibfield  {author} {\bibinfo {author} {\bibfnamefont {Gilad}\ \bibnamefont
  {Gour}}, \bibinfo {author} {\bibfnamefont {David}\ \bibnamefont {Jennings}},
  \bibinfo {author} {\bibfnamefont {Francesco}\ \bibnamefont {Buscemi}},
  \bibinfo {author} {\bibfnamefont {Runyao}\ \bibnamefont {Duan}}, \ and\
  \bibinfo {author} {\bibfnamefont {Iman}\ \bibnamefont {Marvian}},\ }\bibfield
   {title} {\enquote {\bibinfo {title} {Quantum majorization and a complete set
  of entropic conditions for quantum thermodynamics},}\ }\href@noop {} {\
  (\bibinfo {year} {2017})},\ \Eprint {http://arxiv.org/abs/1708.04302}
  {arXiv:1708.04302} \BibitemShut {NoStop}%
\bibitem [{\citenamefont {Vidal}\ and\ \citenamefont
  {Tarrach}(1999)}]{Vidal1999}%
  \BibitemOpen
  \bibfield  {author} {\bibinfo {author} {\bibfnamefont {Guifr{\'e}}\
  \bibnamefont {Vidal}}\ and\ \bibinfo {author} {\bibfnamefont {Rolf}\
  \bibnamefont {Tarrach}},\ }\bibfield  {title} {\enquote {\bibinfo {title}
  {Robustness of entanglement},}\ }\href {\doibase 10.1103/PhysRevA.59.141}
  {\bibfield  {journal} {\bibinfo  {journal} {Phys. Rev. A}\ }\textbf {\bibinfo
  {volume} {59}},\ \bibinfo {pages} {141--155} (\bibinfo {year}
  {1999})}\BibitemShut {NoStop}%
\bibitem [{\citenamefont {Steiner}(2003)}]{Steiner2003}%
  \BibitemOpen
  \bibfield  {author} {\bibinfo {author} {\bibfnamefont {Michael}\ \bibnamefont
  {Steiner}},\ }\bibfield  {title} {\enquote {\bibinfo {title} {Generalized
  robustness of entanglement},}\ }\href {\doibase 10.1103/PhysRevA.67.054305}
  {\bibfield  {journal} {\bibinfo  {journal} {Phys. Rev. A}\ }\textbf {\bibinfo
  {volume} {67}},\ \bibinfo {pages} {054305} (\bibinfo {year}
  {2003})}\BibitemShut {NoStop}%
\bibitem [{Note2()}]{Note2}%
  \BibitemOpen
  \bibinfo {note} {For that purpose, the set of EB channels can be
  characterized using the symmetric extensions~\cite {Doherty2004} of their
  Choi, or equivalently, the existence of a channel that performs $1 \to N$
  approximate symmetrical cloning~\cite {Bae2006,Chiribella2006a}, and whose
  restriction to a single output reproduces the original channel for all $N$.
  This hierarchy of tests can be formulated using semidefinite
  programming.}\BibitemShut {Stop}%
\bibitem [{\citenamefont {Gallego}\ and\ \citenamefont
  {Aolita}(2015)}]{Gallego2015}%
  \BibitemOpen
  \bibfield  {author} {\bibinfo {author} {\bibfnamefont {Rodrigo}\ \bibnamefont
  {Gallego}}\ and\ \bibinfo {author} {\bibfnamefont {Leandro}\ \bibnamefont
  {Aolita}},\ }\bibfield  {title} {\enquote {\bibinfo {title} {Resource
  {{Theory}} of {{Steering}}},}\ }\href {\doibase 10.1103/PhysRevX.5.041008}
  {\bibfield  {journal} {\bibinfo  {journal} {Phys. Rev. X}\ }\textbf {\bibinfo
  {volume} {5}},\ \bibinfo {pages} {041008} (\bibinfo {year}
  {2015})}\BibitemShut {NoStop}%
\bibitem [{\citenamefont {Napoli}\ \emph {et~al.}(2016)\citenamefont {Napoli},
  \citenamefont {Bromley}, \citenamefont {Cianciaruso}, \citenamefont {Piani},
  \citenamefont {Johnston},\ and\ \citenamefont {Adesso}}]{Napoli2016}%
  \BibitemOpen
  \bibfield  {author} {\bibinfo {author} {\bibfnamefont {Carmine}\ \bibnamefont
  {Napoli}}, \bibinfo {author} {\bibfnamefont {Thomas~R.}\ \bibnamefont
  {Bromley}}, \bibinfo {author} {\bibfnamefont {Marco}\ \bibnamefont
  {Cianciaruso}}, \bibinfo {author} {\bibfnamefont {Marco}\ \bibnamefont
  {Piani}}, \bibinfo {author} {\bibfnamefont {Nathaniel}\ \bibnamefont
  {Johnston}}, \ and\ \bibinfo {author} {\bibfnamefont {Gerardo}\ \bibnamefont
  {Adesso}},\ }\bibfield  {title} {\enquote {\bibinfo {title} {Robustness of
  {{Coherence}}: {{An Operational}} and {{Observable Measure}} of {{Quantum
  Coherence}}},}\ }\href {\doibase 10.1103/PhysRevLett.116.150502} {\bibfield
  {journal} {\bibinfo  {journal} {Phys. Rev. Lett.}\ }\textbf {\bibinfo
  {volume} {116}},\ \bibinfo {pages} {150502} (\bibinfo {year}
  {2016})}\BibitemShut {NoStop}%
\bibitem [{\citenamefont {Spekkens}\ \emph {et~al.}()\citenamefont {Spekkens}
  \emph {et~al.}}]{SpekkensInPrep}%
  \BibitemOpen
  \bibfield  {author} {\bibinfo {author} {\bibfnamefont {Rob}\ \bibnamefont
  {Spekkens}} \emph {et~al.},\ }\bibfield  {title} {\enquote {\bibinfo {title}
  {Robustness in resource theories},}\ }\href@noop {} {\bibinfo  {journal} {in
  prep}\ }\BibitemShut {NoStop}%
\bibitem [{\citenamefont {Chru{\'s}ci{\'n}ski}\ and\ \citenamefont
  {Kossakowski}(2006)}]{Chruscinski2006}%
  \BibitemOpen
\bibfield  {journal} {  }\bibfield  {author} {\bibinfo {author} {\bibfnamefont
  {Dariusz}\ \bibnamefont {Chru{\'s}ci{\'n}ski}}\ and\ \bibinfo {author}
  {\bibfnamefont {Andrzej}\ \bibnamefont {Kossakowski}},\ }\bibfield  {title}
  {\enquote {\bibinfo {title} {On {{Partially Entanglement Breaking
  Channels}}},}\ }\href {\doibase 10.1007/s11080-006-7264-7} {\bibfield
  {journal} {\bibinfo  {journal} {Open Syst Inf Dyn}\ }\textbf {\bibinfo
  {volume} {13}},\ \bibinfo {pages} {17--26} (\bibinfo {year}
  {2006})}\BibitemShut {NoStop}%
\bibitem [{\citenamefont {Rosset}\ \emph {et~al.}()\citenamefont {Rosset},
  \citenamefont {Buscemi},\ and\ \citenamefont {Liang}}]{ResourceTheoryInPrep}%
  \BibitemOpen
  \bibfield  {author} {\bibinfo {author} {\bibfnamefont {Denis}\ \bibnamefont
  {Rosset}}, \bibinfo {author} {\bibfnamefont {Francesco}\ \bibnamefont
  {Buscemi}}, \ and\ \bibinfo {author} {\bibfnamefont {Yeong-Cherng}\
  \bibnamefont {Liang}},\ }\bibfield  {title} {\enquote {\bibinfo {title} {A
  resource theory of entanglement for spatial and temporal correlations},}\
  }\href@noop {} {\bibinfo  {journal} {in prep}\ }\BibitemShut {NoStop}%
\bibitem [{\citenamefont {Choi}(1975)}]{Choi1975}%
  \BibitemOpen
\bibfield  {journal} {  }\bibfield  {author} {\bibinfo {author} {\bibfnamefont
  {Man-Duen}\ \bibnamefont {Choi}},\ }\bibfield  {title} {\enquote {\bibinfo
  {title} {Completely positive linear maps on complex matrices},}\ }\href
  {\doibase 10.1016/0024-3795(75)90075-0} {\bibfield  {journal} {\bibinfo
  {journal} {Linear Algebra Appl.}\ }\textbf {\bibinfo {volume} {10}},\
  \bibinfo {pages} {285--290} (\bibinfo {year} {1975})}\BibitemShut {NoStop}%
\bibitem [{\citenamefont {Jamio{\l}kowski}(1972)}]{Jamiolkowski1972}%
  \BibitemOpen
  \bibfield  {author} {\bibinfo {author} {\bibfnamefont {A.}~\bibnamefont
  {Jamio{\l}kowski}},\ }\bibfield  {title} {\enquote {\bibinfo {title} {Linear
  transformations which preserve trace and positive semidefiniteness of
  operators},}\ }\href {\doibase 10.1016/0034-4877(72)90011-0} {\bibfield
  {journal} {\bibinfo  {journal} {Rep. Math. Phys.}\ }\textbf {\bibinfo
  {volume} {3}},\ \bibinfo {pages} {275--278} (\bibinfo {year}
  {1972})}\BibitemShut {NoStop}%
\bibitem [{\citenamefont {Jiang}\ \emph {et~al.}(2013)\citenamefont {Jiang},
  \citenamefont {Luo},\ and\ \citenamefont {Fu}}]{Jiang2013}%
  \BibitemOpen
  \bibfield  {author} {\bibinfo {author} {\bibfnamefont {Min}\ \bibnamefont
  {Jiang}}, \bibinfo {author} {\bibfnamefont {Shunlong}\ \bibnamefont {Luo}}, \
  and\ \bibinfo {author} {\bibfnamefont {Shuangshuang}\ \bibnamefont {Fu}},\
  }\bibfield  {title} {\enquote {\bibinfo {title} {Channel-state duality},}\
  }\href {\doibase 10.1103/PhysRevA.87.022310} {\bibfield  {journal} {\bibinfo
  {journal} {Phys. Rev. A}\ }\textbf {\bibinfo {volume} {87}},\ \bibinfo
  {pages} {022310} (\bibinfo {year} {2013})}\BibitemShut {NoStop}%
\bibitem [{\citenamefont {Horsman}\ \emph {et~al.}(2017)\citenamefont
  {Horsman}, \citenamefont {Heunen}, \citenamefont {Pusey}, \citenamefont
  {Barrett},\ and\ \citenamefont {Spekkens}}]{Horsman2017}%
  \BibitemOpen
  \bibfield  {author} {\bibinfo {author} {\bibfnamefont {Dominic}\ \bibnamefont
  {Horsman}}, \bibinfo {author} {\bibfnamefont {Chris}\ \bibnamefont {Heunen}},
  \bibinfo {author} {\bibfnamefont {Matthew~F.}\ \bibnamefont {Pusey}},
  \bibinfo {author} {\bibfnamefont {Jonathan}\ \bibnamefont {Barrett}}, \ and\
  \bibinfo {author} {\bibfnamefont {Robert~W.}\ \bibnamefont {Spekkens}},\
  }\bibfield  {title} {\enquote {\bibinfo {title} {Can a quantum state over
  time resemble a quantum state at a single time?}}\ }\href {\doibase
  10.1098/rspa.2017.0395} {\bibfield  {journal} {\bibinfo  {journal} {Proc. R.
  Soc. A}\ }\textbf {\bibinfo {volume} {473}},\ \bibinfo {pages} {20170395}
  (\bibinfo {year} {2017})}\BibitemShut {NoStop}%
\bibitem [{\citenamefont {Oreshkov}\ \emph {et~al.}(2012)\citenamefont
  {Oreshkov}, \citenamefont {Costa},\ and\ \citenamefont
  {Brukner}}]{Oreshkov2012}%
  \BibitemOpen
  \bibfield  {author} {\bibinfo {author} {\bibfnamefont {Ognyan}\ \bibnamefont
  {Oreshkov}}, \bibinfo {author} {\bibfnamefont {Fabio}\ \bibnamefont {Costa}},
  \ and\ \bibinfo {author} {\bibfnamefont {{\v C}aslav}\ \bibnamefont
  {Brukner}},\ }\bibfield  {title} {\enquote {\bibinfo {title} {Quantum
  correlations with no causal order},}\ }\href {\doibase 10.1038/ncomms2076}
  {\bibfield  {journal} {\bibinfo  {journal} {Nat. Commun.}\ }\textbf {\bibinfo
  {volume} {3}},\ \bibinfo {pages} {1092} (\bibinfo {year} {2012})}\BibitemShut
  {NoStop}%
\bibitem [{\citenamefont {Hofer}\ \emph {et~al.}(2013)\citenamefont {Hofer},
  \citenamefont {Vasilyev}, \citenamefont {Aspelmeyer},\ and\ \citenamefont
  {Hammerer}}]{Hofer2013}%
  \BibitemOpen
  \bibfield  {author} {\bibinfo {author} {\bibfnamefont {Sebastian~G.}\
  \bibnamefont {Hofer}}, \bibinfo {author} {\bibfnamefont {Denis~V.}\
  \bibnamefont {Vasilyev}}, \bibinfo {author} {\bibfnamefont {Markus}\
  \bibnamefont {Aspelmeyer}}, \ and\ \bibinfo {author} {\bibfnamefont
  {Klemens}\ \bibnamefont {Hammerer}},\ }\bibfield  {title} {\enquote {\bibinfo
  {title} {Time-{{Continuous Bell Measurements}}},}\ }\href {\doibase
  10.1103/PhysRevLett.111.170404} {\bibfield  {journal} {\bibinfo  {journal}
  {Phys. Rev. Lett.}\ }\textbf {\bibinfo {volume} {111}},\ \bibinfo {pages}
  {170404} (\bibinfo {year} {2013})}\BibitemShut {NoStop}%
\bibitem [{\citenamefont {Killoran}\ \emph {et~al.}(2012)\citenamefont
  {Killoran}, \citenamefont {Hosseini}, \citenamefont {Buchler}, \citenamefont
  {Lam},\ and\ \citenamefont {L{\"u}tkenhaus}}]{Killoran2012}%
  \BibitemOpen
  \bibfield  {author} {\bibinfo {author} {\bibfnamefont {N.}~\bibnamefont
  {Killoran}}, \bibinfo {author} {\bibfnamefont {M.}~\bibnamefont {Hosseini}},
  \bibinfo {author} {\bibfnamefont {B.~C.}\ \bibnamefont {Buchler}}, \bibinfo
  {author} {\bibfnamefont {P.~K.}\ \bibnamefont {Lam}}, \ and\ \bibinfo
  {author} {\bibfnamefont {N.}~\bibnamefont {L{\"u}tkenhaus}},\ }\bibfield
  {title} {\enquote {\bibinfo {title} {Quantum benchmarking with realistic
  states of light},}\ }\href {\doibase 10.1103/PhysRevA.86.022331} {\bibfield
  {journal} {\bibinfo  {journal} {Phys. Rev. A}\ }\textbf {\bibinfo {volume}
  {86}},\ \bibinfo {pages} {022331} (\bibinfo {year} {2012})}\BibitemShut
  {NoStop}%
\bibitem [{\citenamefont {Yang}\ \emph {et~al.}(2014)\citenamefont {Yang},
  \citenamefont {Chiribella},\ and\ \citenamefont {Adesso}}]{Yang2014a}%
  \BibitemOpen
  \bibfield  {author} {\bibinfo {author} {\bibfnamefont {Yuxiang}\ \bibnamefont
  {Yang}}, \bibinfo {author} {\bibfnamefont {Giulio}\ \bibnamefont
  {Chiribella}}, \ and\ \bibinfo {author} {\bibfnamefont {Gerardo}\
  \bibnamefont {Adesso}},\ }\bibfield  {title} {\enquote {\bibinfo {title}
  {Certifying quantumness: {{Benchmarks}} for the optimal processing of
  generalized coherent and squeezed states},}\ }\href {\doibase
  10.1103/PhysRevA.90.042319} {\bibfield  {journal} {\bibinfo  {journal} {Phys.
  Rev. A}\ }\textbf {\bibinfo {volume} {90}},\ \bibinfo {pages} {042319}
  (\bibinfo {year} {2014})}\BibitemShut {NoStop}%
\bibitem [{\citenamefont {Doherty}\ \emph {et~al.}(2004)\citenamefont
  {Doherty}, \citenamefont {Parrilo},\ and\ \citenamefont
  {Spedalieri}}]{Doherty2004}%
  \BibitemOpen
  \bibfield  {author} {\bibinfo {author} {\bibfnamefont {Andrew~C.}\
  \bibnamefont {Doherty}}, \bibinfo {author} {\bibfnamefont {Pablo~A.}\
  \bibnamefont {Parrilo}}, \ and\ \bibinfo {author} {\bibfnamefont
  {Federico~M.}\ \bibnamefont {Spedalieri}},\ }\bibfield  {title} {\enquote
  {\bibinfo {title} {Complete family of separability criteria},}\ }\href
  {\doibase 10.1103/PhysRevA.69.022308} {\bibfield  {journal} {\bibinfo
  {journal} {Phys. Rev. A}\ }\textbf {\bibinfo {volume} {69}},\ \bibinfo
  {pages} {022308} (\bibinfo {year} {2004})}\BibitemShut {NoStop}%
\bibitem [{\citenamefont {Bae}\ and\ \citenamefont
  {Ac{\'\i}n}(2006)}]{Bae2006}%
  \BibitemOpen
  \bibfield  {author} {\bibinfo {author} {\bibfnamefont {Joonwoo}\ \bibnamefont
  {Bae}}\ and\ \bibinfo {author} {\bibfnamefont {Antonio}\ \bibnamefont
  {Ac{\'\i}n}},\ }\bibfield  {title} {\enquote {\bibinfo {title} {Asymptotic
  {{Quantum Cloning Is State Estimation}}},}\ }\href {\doibase
  10.1103/PhysRevLett.97.030402} {\bibfield  {journal} {\bibinfo  {journal}
  {Phys. Rev. Lett.}\ }\textbf {\bibinfo {volume} {97}},\ \bibinfo {pages}
  {030402} (\bibinfo {year} {2006})}\BibitemShut {NoStop}%
\bibitem [{\citenamefont {Chiribella}\ and\ \citenamefont
  {D'Ariano}(2006)}]{Chiribella2006a}%
  \BibitemOpen
  \bibfield  {author} {\bibinfo {author} {\bibfnamefont {Giulio}\ \bibnamefont
  {Chiribella}}\ and\ \bibinfo {author} {\bibfnamefont {Giacomo~Mauro}\
  \bibnamefont {D'Ariano}},\ }\bibfield  {title} {\enquote {\bibinfo {title}
  {Quantum {{Information Becomes Classical When Distributed}} to {{Many
  Users}}},}\ }\href {\doibase 10.1103/PhysRevLett.97.250503} {\bibfield
  {journal} {\bibinfo  {journal} {Phys. Rev. Lett.}\ }\textbf {\bibinfo
  {volume} {97}},\ \bibinfo {pages} {250503} (\bibinfo {year}
  {2006})}\BibitemShut {NoStop}%
\bibitem [{\citenamefont {Rosset}\ \emph
  {et~al.}(2017{\natexlab{b}})\citenamefont {Rosset}, \citenamefont {Gisin},\
  and\ \citenamefont {Wolfe}}]{Rosset2017a}%
  \BibitemOpen
  \bibfield  {author} {\bibinfo {author} {\bibfnamefont {Denis}\ \bibnamefont
  {Rosset}}, \bibinfo {author} {\bibfnamefont {Nicolas}\ \bibnamefont {Gisin}},
  \ and\ \bibinfo {author} {\bibfnamefont {Elie}\ \bibnamefont {Wolfe}},\
  }\bibfield  {title} {\enquote {\bibinfo {title} {Universal bound on the
  cardinality of local hidden variables in networks},}\ }\href@noop {} {\
  (\bibinfo {year} {2017}{\natexlab{b}})},\ \Eprint
  {http://arxiv.org/abs/1709.00707} {arXiv:1709.00707} \BibitemShut {NoStop}%
\bibitem [{\citenamefont {Rockafellar}(1970)}]{Rockafellar1970}%
  \BibitemOpen
  \bibfield  {author} {\bibinfo {author} {\bibfnamefont {R~Tyrrell}\
  \bibnamefont {Rockafellar}},\ }\href@noop {} {\emph {\bibinfo {title} {Convex
  {{Analysis}}}}},\ Vol.~\bibinfo {volume} {28}\ (\bibinfo  {publisher}
  {{Princeton University Press}},\ \bibinfo {year} {1970})\BibitemShut
  {NoStop}%
\bibitem [{\citenamefont {G{\"u}hne}\ and\ \citenamefont
  {T{\'o}th}(2009)}]{Guhne2009}%
  \BibitemOpen
  \bibfield  {author} {\bibinfo {author} {\bibfnamefont {Otfried}\ \bibnamefont
  {G{\"u}hne}}\ and\ \bibinfo {author} {\bibfnamefont {G{\'e}za}\ \bibnamefont
  {T{\'o}th}},\ }\bibfield  {title} {\enquote {\bibinfo {title} {Entanglement
  detection},}\ }\href {\doibase 10.1016/j.physrep.2009.02.004} {\bibfield
  {journal} {\bibinfo  {journal} {Physics Reports}\ }\textbf {\bibinfo {volume}
  {474}},\ \bibinfo {pages} {1--75} (\bibinfo {year} {2009})}\BibitemShut
  {NoStop}%
\end{thebibliography}%

\appendix

\section{Use of randomness in the resource theory of channels}
\label{app:sharedrandomness}
Preexisting randomness, represented in our formulas by the index $\mu$, plays an important role in the definition of entanglement-breaking channels~\eqref{eq:entanglementbreaking} and free transformations~\eqref{eq:freetransformation}, as it is the basic ingredient that provides convexity to sets of objects.
In our case, however, the use of randomness can be embedded in the quantum operations and classical communication, and thus the notation used to describe related objects can be simplified.

\begin{proposition}
  Let $\Lambda$ be a free transformation described by:
  \begin{equation}
    \channel{N}' = \Lambda[\channel{N}] = \sum_\mu \pi(\mu) \sum_i \mD_{i,\mu}\on{B'\from B} \circ \mN\on{B\from A} \circ \mI_{i|\mu}\on{A\from A'}\;,
  \end{equation}
  where $\mu$, $\{\mI_{i|\mu}\}$, $\{\mD_{i,\mu}\}$ are as described in the main text.
  Then there is an equivalent free transformation $\Lambda'$ that does not require randomness:
  \begin{equation}
    \channel{N}' = \Lambda'[\channel{N}] = \sum_{i'} {\mD'}_{i'}\on{B'\from B} \circ \mN\on{B\from A} \circ {\mI'}_{i'}\on{A \from A'}.
  \end{equation}
\end{proposition}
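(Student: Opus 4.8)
The plan is to absorb the shared classical label $\mu$ into the classical message that the free transformation already forwards from the pre-processing instrument to the post-processing operation, turning the pair of indices $(i,\mu)$ into a single index $i'$. Concretely, I would relabel $i' \defeq (i,\mu)$ and set
\begin{equation}
  {\mI'}_{i'}\on{A\from A'} \defeq \pi(\mu)\,\mI_{i|\mu}\on{A\from A'}\;, \qquad {\mD'}_{i'}\on{B'\from B} \defeq \mD_{i,\mu}\on{B'\from B}\;.
\end{equation}
Intuitively, the new instrument first samples $\mu$ with probability $\pi(\mu)$ --- an operation that Abby can perform locally before touching the channel input --- records it together with the genuine measurement outcome $i$, and the post-processing then reads both off the classical register exactly as before.

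The verification splits into two routine steps. First I would check that $\{{\mI'}_{i'}\}_{i'}$ is a legitimate quantum instrument $\density{A}\to\subdensity{A'}$: each ${\mI'}_{i'}$ is CP, being a nonnegative multiple of the CP map $\mI_{i|\mu}$, and it maps into subnormalized states since $\pi(\mu)\le 1$; moreover
\begin{equation}
  \sum_{i'}{\mI'}_{i'} = \sum_{\mu}\pi(\mu)\sum_i \mI_{i|\mu}
\end{equation}
is a convex mixture of trace-preserving maps (one for each $\mu$), hence trace-preserving. Each ${\mD'}_{i'}=\mD_{i,\mu}$ is CPTP by hypothesis, so $\Lambda'$ built from $\{{\mI'}_{i'}\}$ and $\{{\mD'}_{i'}\}$ has the required randomness-free form. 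Second, I would verify $\Lambda'=\Lambda$ by direct substitution and linearity of composition,
\begin{align}
  \Lambda'[\channel{N}] &= \sum_{i'} {\mD'}_{i'}\circ\mN\circ{\mI'}_{i'}
  = \sum_{i,\mu} \mD_{i,\mu}\circ\mN\circ\big(\pi(\mu)\,\mI_{i|\mu}\big) \nonumber\\
  &= \sum_\mu \pi(\mu)\sum_i \mD_{i,\mu}\circ\mN\circ\mI_{i|\mu} = \Lambda[\channel{N}]\;,
\end{align}
which is the claim.

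There is no genuine obstacle here: the content is just the structural observation --- used implicitly throughout this appendix --- that preexisting randomness can always be generated locally by one party and then transmitted through the freely available classical side-channel. The only point requiring a moment's care is the bookkeeping that rescaling by $\pi(\mu)$ leaves $\{{\mI'}_{i'}\}$ a properly normalized instrument, which is precisely the computation displayed above. If one additionally wants the index set $\{i'\}$ to remain finite --- e.g.\ to preserve the subsequent claim that $\set{S}(\channel{N},\ssscenario)$ is a closed, bounded subset of a finite-dimensional space with the stated convexity --- one can, if necessary, first invoke a Carath\'eodory-type argument to replace $\pi(\mu)$ by a finitely supported distribution; but this refinement is not needed for the statement as phrased.
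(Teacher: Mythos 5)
Your proposal is correct and follows essentially the same route as the paper's own proof: defining $i'=(i,\mu)$, absorbing the probability weight into the instrument via ${\mI'}_{i'}=\pi(\mu)\,\mI_{i|\mu}$, keeping ${\mD'}_{i'}=\mD_{i,\mu}$, and checking normalization. Even your closing remark about invoking Carath\'eodory's theorem to keep the index set finite mirrors the opening observation of the paper's argument.
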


\begin{proof}
  First, we remark that our definitions implicitly assume that $\mu$ is discrete and that $\pi(\mu)$ is a probability distribution.
  This does not hinder full generality as the input and output spaces $\hilbert{A}$ and $\hilbert{B}$ are finite; then the space of free transformations between $\hilbert{A}$ and $\hilbert{B}$ has finite dimension, and thus $\mu$ can always be taken finite (by Caratheodory's theorem; see a related discussion in~\cite{Rosset2017a}).
  To construct our equivalent transformation $\Lambda'$, we write $i' = (i, \mu)$, and define:
  \begin{equation}
    \mI'_{i'}(\rho) = \mI'_{i,\mu}(\rho) \defeq \pi(\mu) \mI_{i|\mu}(\rho).
  \end{equation}
  By normalization of $\pi(\mu)$, we have $\Tr{\sum_{i,\mu} \mI'_{i,\mu}(\rho)} = 1$ for all $\rho\in\density{A}$; and the elements are still CP maps as $\pi(\mu) \ge 0$.
  Thus $\mI'_{i'}$ is a proper quantum instrument.
  We obtain the desired result by defining $\mD'_{i'} = \mD_{i,\mu}$.
\end{proof}

Two important consequences follow.
First, as all EB channels can be obtained by applying appropriate free transformations on another EB channel, we can remove the use of preexisting randomness from~\eqref{eq:entanglementbreaking} as well.
Second, the set of admissible correlations for $\channel{N}$, $\set{S}(\channel{N},\ssscenario)$ is convex, as a consequence of the convexity of admissible strategies in Eq.~(\ref{eq:admissible}).

\section{Proof of Theorem~\ref{th:completefamily}}
\label{app:proofthm1}

We now demonstrate that $\payoff^*(\channel{N}) \ge \payoff^*(\channel{N}')$ for all signaling semiquantum games implies that $\channel{N} \sufficient \channel{N}'$.
Our proof proceeds as follows.

In a fixed signaling semiquantum scenario $\ssscenario$, we first show that if $\channel{N}$ achieves a maximal expected payoff at least as good as $\channel{N'}$ for any $\payoff$,  then $\channel{N}$ can reproduce the full range of correlations admissible for $\channel{N}'$.
We then show that \textit{any} channel that can reproduce the full range of semiquantum correlations of $\channel{N}'$ can be, in fact, transformed into $\channel{N}'$ via some free transformation.

\begin{proposition}
  Let $\ssscenario$ be a fixed signaling semiquantum scenario.
  If $\payoff^*(\channel{N}) \ge \payoff^*(\channel{N}')$ for all $\payoff$, then $\channel{N}$ can reproduce all correlations of $\channel{N}'$:
  \begin{equation}
    \set{S}(\channel{N},\ssscenario) \supseteq \set{S}(\channel{N}',\ssscenario).
  \end{equation}
\end{proposition}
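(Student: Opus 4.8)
The plan is to recognize $\payoff^*(\channel{N})$ as the support function of the convex body $\set{S}(\channel{N},\ssscenario)$ and then invoke the textbook duality between a compact convex set and its support function. Concretely, I would identify a correlation $p(b|x,y)$ with a vector of $\mathbb{R}^N$, $N=|\set{B}|\cdot|\set{X}|\cdot|\set{Y}|$, and observe that the map $p \mapsto \sum_{b,x,y}\payoff(b,x,y)\,p(b|x,y)$ is a linear functional on $\mathbb{R}^N$; as $\payoff$ ranges over all real-valued functions on $\set{B}\times\set{X}\times\set{Y}$, it ranges over all of $(\mathbb{R}^N)^*$. With this identification, $\payoff^*(\channel{N}) = \max_{p\in\set{S}(\channel{N},\ssscenario)}\langle\payoff,p\rangle$ is precisely the value at $\payoff$ of the support function of $\set{S}(\channel{N},\ssscenario)$, the maximum being attained because, by Appendix~\ref{app:sharedrandomness}, the set $\set{S}(\channel{N},\ssscenario)$ is convex, closed and bounded, hence compact (and nonempty).

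Next I would argue by contraposition. Suppose $p'\in\set{S}(\channel{N}',\ssscenario)$ but $p'\notin\set{S}(\channel{N},\ssscenario)$. Since $\set{S}(\channel{N},\ssscenario)$ is a nonempty closed convex subset of $\mathbb{R}^N$ not containing $p'$, the separating hyperplane theorem yields a payoff function $\payoff$ that strictly separates $p'$ from it, i.e. $\langle\payoff,p\rangle<\langle\payoff,p'\rangle$ for all $p\in\set{S}(\channel{N},\ssscenario)$ with a uniform gap. Taking the supremum over $p$ gives $\payoff^*(\channel{N})<\langle\payoff,p'\rangle$, while $p'\in\set{S}(\channel{N}',\ssscenario)$ gives $\langle\payoff,p'\rangle\le\payoff^*(\channel{N}')$. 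Chaining these, $\payoff^*(\channel{N})<\payoff^*(\channel{N}')$, contradicting the hypothesis that $\payoff^*(\channel{N})\ge\payoff^*(\channel{N}')$ for \emph{all} $\payoff$. Hence no such $p'$ exists and $\set{S}(\channel{N},\ssscenario)\supseteq\set{S}(\channel{N}',\ssscenario)$.

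The only delicate points — and the nearest thing to a genuine obstacle — are the structural facts feeding the separation step: that $\set{S}(\channel{N},\ssscenario)$ is truly closed and convex, and that the admissible payoffs exhaust all linear functionals rather than only those annihilating the normalization constraint $\sum_b p(b|x,y)=1$. Convexity and boundedness follow from embedding the shared randomness into the quantum instrument as in Appendix~\ref{app:sharedrandomness}; closedness follows because $\set{S}(\channel{N},\ssscenario)$ is the image of a compact set of instruments and POVMs (Carathéodory bounds the number of branches needed) under the continuous map~\eqref{eq:admissible}. The functional issue is actually a non-issue: $\payoff$ is unconstrained, and even if one only had functionals vanishing on the normalization subspace, both $\set{S}(\channel{N},\ssscenario)$ and $\set{S}(\channel{N}',\ssscenario)$ sit in the same affine slice, so separation within that slice would already suffice. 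Everything else is the standard fact that a closed convex set is the intersection of the half-spaces picked out by its support function.
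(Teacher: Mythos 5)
Your argument is correct and is essentially the paper's own proof: both identify $\payoff^*(\channel{N})$ as the support function of the convex set $\set{S}(\channel{N},\ssscenario)\subset\mathbb{R}^N$ and deduce the inclusion from the separation theorem for closed convex sets. Your extra remarks on closedness (compactness of the instrument/POVM parametrization) and on the normalization subspace being a non-issue are sound and, if anything, slightly more careful than the paper's one-line appeal to the separation theorem.
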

\begin{proof}
Given a payoff function $\payoff$, the expected payoff of $\channel{N}$ can be rewritten as
\begin{equation}
  \payoff^*(\channel{N})= \max_{\vec{p}\;\in \set{S}(\mN,\ssscenario)}\vec{p}\cdot\vec{\wp}\;,
\end{equation}
where we use the notation $\vec{\payoff}$ to denote the real vector $(\payoff_{x,y,a}=\wp(x,y,a))\in\mathbb{R}^N\;$.
Correspondingly, the condition $\payoff^*(\channel{N})\ge\payoff^*(\channel{N}')$, can be rewritten as follows:
\begin{equation}
  \max_{\vec{p}\;\in \set{S}(\mN,\ssscenario)}\vec{p}\cdot\vec{\wp}\ge \max_{\vec{p}\;'\in \set{S}(\mN',\ssscenario)}\vec{p}\;'\cdot\vec{\wp}\;,
\end{equation}
for all $\payoff$.
The payoff vector $\vec{\payoff}$ can be any vector in $\mathbb{R}^N$.
Since $\set{S}(\channel{N},\ssscenario)$ and $\set{S}(\channel{N}',\ssscenario)$ are convex (see Appendix~\ref{app:sharedrandomness}), the separation theorem for convex sets~\cite{Rockafellar1970} implies that the above inequality is equivalent to
\begin{equation}
  \label{eq:convex-incl}
  \set{S}(\channel{N},\ssscenario)\supseteq \set{S}(\channel{N}',\ssscenario)\;,
\end{equation}
which proves the Proposition.
\end{proof}

We now make use of the signature correlation (Definition~\ref{def:signature}) for our target channel $\channel{N}'$.
It has the property that any channel that can reproduce it can be transformed into $\channel{N}'$.
\begin{proposition}
  Let $\sigcorrelation{\channel{N}'}$ be the signature correlation obtained in the scenario $\sigscenario{\channel{N}'}$ for the channel $\channel{N}'$:
\begin{align}
  \sigcorrelation{\channel{N}'}(b|x,y)=\Tr{\left\{{\channel{N}'}\on{A'}(\xi_x\on{X})\otimes\psi_y\on{Y}\right\}\ {B'}_b\on{B'Y}}\;,
\end{align}
Then any channel $\channel{N}$ whose admissible set contains $\sigcorrelation{\channel{N}'}$ can, in fact, be transformed into $\channel{N}'$: $\channel{N} \sufficient \channel{N}'$.
\end{proposition}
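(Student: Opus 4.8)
The plan is to turn the hypothesis into an explicit classically correlated supermap, by combining Abby's simulating strategy with a teleportation gadget. Since $\sigcorrelation{\channel{N}'}\in\set{S}(\channel{N},\sigscenario{\channel{N}'})$, Definition~\ref{Dfn:Admissible} supplies a quantum instrument $\{\mI_i\on{X}:\density{X}\to\subdensity{A}\}$ and POVMs $\{B_{b|i}\on{BY}\}$ reproducing $\sigcorrelation{\channel{N}'}(b|x,y)$ with the resource $\channel{N}$. I would equate this with the signature correlation of $\channel{N}'$ itself --- read off with the complete Bell measurement $\{{B'}_b\on{B'Y}\}$ of $\sigscenario{\channel{N}'}$, whose first element is ${B'}_1=\Phi_+$ --- and, using that $\inputset{X}$ and $\inputset{Y}$ are tomographically complete for $\hilbert{X}\cong\hilbert{A'}$ and $\hilbert{Y}\cong\hilbert{B'}$, extend the resulting identity by linearity to arbitrary operators in place of $\xi_x$ and $\psi_y$.

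Next I would distil an operator identity. Writing the Bell states as ${B'}_b=(\openone\otimes U_b)\Phi_+(\openone\otimes U_b^\dagger)$ with $U_1=\openone$, using $\Tr{(\rho\otimes\sigma)\Phi_+}=\tfrac{1}{d'}\Tr{\rho\,\sigma^\top}$ with $d'=\dim\hilbert{B'}$, tracing out $\hilbert{B}$ on the right-hand side, and matching the linear functionals of the (now arbitrary) second operator on both sides, one obtains for every $b$
\[
  V_b\,\channel{N}'(\rho)\,V_b^\dagger=\sum_i\sK^{(b)}_i\!\left(\channel{N}(\mI_i(\rho))\right),\qquad \sK^{(b)}_i:\tau\longmapsto d'\left(\PTr{B}{(\tau\otimes\openone\on{Y})\,B_{b|i}\on{BY}}\right)^{\!\top},
\]
where the transpose acts on $\hilbert{Y}\cong\hilbert{B'}$ and the unitary $V_b=\overline{U_b}$ depends only on $b$ --- it comes from the fixed Bell basis defining $\sigcorrelation{\channel{N}'}$, not from $\channel{N}$ or the strategy. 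A short Choi-operator computation then shows that each $\sK^{(b)}_i$ is completely positive: its Choi operator is the \emph{full} transpose $(B_{b|i})^{\top}\ge0$, the transpose produced by the reduction cancelling in sign the one carried by the $\Phi_+$ sitting inside the signature measurement.

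It then remains to assemble these blocks into a legitimate supermap. I would take the branch-$i$ post-processing to be
\[
  \mD_i(\tau\on{B})\defeq\sum_b V_b^\dagger\,\PTr{BC}{\left(\tau\on{B}\otimes\Phi_+\on{B'C}\right)\!\left(B_{b|i}\on{BC}\otimes\openone\on{B'}\right)}\,V_b\;,
\]
that is: Abby appends a fresh maximally entangled pair $\Phi_+\on{B'C}$ with $\hilbert{C}\cong\hilbert{Y}$, jointly measures the channel output with $\hilbert{C}$ using $\{B_{b|i}\}$, and corrects $\hilbert{B'}$ by $V_b^\dagger$ according to the outcome. Each $\mD_i$ is manifestly completely positive; it is trace preserving because $\sum_b B_{b|i}=\openone\on{BC}$ removes the measurement and leaves $\Tr{\tau}\Tr{\Phi_+}=\Tr{\tau}$; and the teleportation identity shows that the $b$-branch produces exactly the state $\tfrac{1}{(d')^2}\sK^{(b)}_i(\tau)$, whence $\sum_i\mD_i(\channel{N}(\mI_i(\rho)))=\tfrac{1}{(d')^2}\sum_b V_b^\dagger\!\left(V_b\channel{N}'(\rho)V_b^\dagger\right)\!V_b=\channel{N}'(\rho)$. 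Thus $\Lambda\defeq\sum_i\mD_i\circ(\cdot)\circ\mI_i$ is a classically correlated supermap of the form~\eqref{eq:freetransformation} --- instrument $\{\mI_i\}$ on the input side, CPTP maps $\{\mD_i\}$ on the output side, only the index $i$ communicated --- and $\Lambda[\channel{N}]=\channel{N}'$, i.e.\ $\channel{N}\sufficient\channel{N}'$.

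The part that needs care --- and the main obstacle --- is precisely the bookkeeping that makes $\mD_i$ a bona fide CPTP map rather than a mere CP map. Complete positivity hinges on the deliberate choice of maximally entangled POVM elements in the signature scenario, so that the stray transpose is cancelled; trace preservation, and hence the \emph{free} status of $\Lambda$, requires summing over the \emph{whole} Bell measurement and invoking $\sum_b B_{b|i}=\openone$ together with the matching normalisation of the teleportation gadget. One should also check that the corrections $V_b$ are common to every branch $i$ (they stem only from the scenario's Bell basis), and that the passage from $\xi_x,\psi_y$ to arbitrary operators is legitimate; both are guaranteed by the definition of $\sigscenario{\channel{N}'}$.
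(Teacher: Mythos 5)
Your proposal is correct and follows essentially the same route as the paper's proof: use tomographic completeness of $\inputset{X},\inputset{Y}$ to promote the equality of correlations to an operator (and then map) identity, then exploit the Bell-basis structure of the signature measurement via a teleportation gadget to build CPTP post-processings $\mD_i$ satisfying $\channel{N}'=\sum_i\mD_i\circ\channel{N}\circ\mI_i$. Your $\mD_i$ (append $\Phi_+$, measure with $B_{b|i}$, correct with $V_b^\dagger$) is exactly the paper's $\mathcal{D}_i$ with $\mU_b=V_b^\dagger(\cdot)V_b$, and your explicit Choi-operator check of complete positivity and the $\sum_b B_{b|i}=\openone$ trace-preservation argument make rigorous the steps the paper delegates to the standard teleportation protocol.
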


\begin{figure}
  \includegraphics{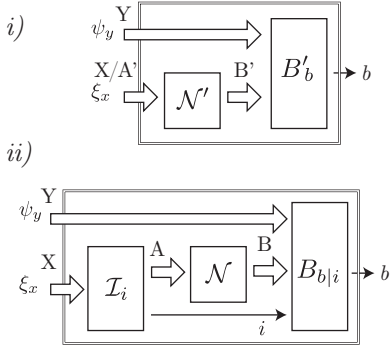}
  \caption{
    A diagram relating the uses of the channel $\channel{N}$ and $\channel{N}'$ to produce the signature correlation of $\channel{N}'$ in Eq.~\eqref{eq:prob-iden}.
    The correlations of both boxes enclosed in double lines are identical, and thus their behavior is essentially indistinguishable.
  }
  \label{fig:twochannels}
\end{figure}

\begin{proof}
  Assume that the channel $\mathcal{N}$ can reproduce the correlation $\sigcorrelation{\channel{N}'}(b|x,y)$.
  From Eq.~\eqref{eq:convex-incl}, we have an admissible strategy of $\sigcorrelation{\channel{N}'}$ using $\channel{N}$, comprising a quantum instrument $\{ \mathcal{I}_i\on{X\to A}\}$ and a family of measurements $\{ B_{b|i}\on{BY} \}$, for which
  \begin{align}
    \label{eq:prob-iden}
    p(b|x,y)&=\sum_i \Tr{\left\{(\channel{N}\on{A}\circ\mI_i\on{X})(\xi_x\on{X})\otimes\psi_y\on{Y}\right\}\ B_{b|i}\on{BY}}\nonumber\\
            &=\Tr{\left\{{\channel{N}'}\on{A'}(\xi_x\on{X})\otimes\psi_y\on{Y}\right\}\ {B'}_b\on{B'Y}}\;,
  \end{align}
  for all $x$, $y$ and $b$.
  The graphical translation of Eq.~\eqref{eq:prob-iden} is given in Figure~\ref{fig:twochannels}.
  As the sets of inputs are tomographically complete, the correlation $p(b|x,y)$ contains the data related to quantum tomography of the process.
  Thus, the quantum systems enclosed by the double line have the same external behavior, and are thus indistinguishable for our purposes.
  
The above arguments can be made rigorous as follows. Introducing another system $\hilbert{Y'}\cong \hilbert{Y}$, we now write $\psi\on{Y}_y=\PTr{Y'}{\Phi_+\on{YY'}\ (\openone\on{Y}\otimes\zeta_y\on{Y'})}$, where $\Phi_+\on{YY'}$ is the maximally entangled state between $\hilbert{Y}$ and $\hilbert{Y'}$.
Due to the completeness of the $\psi\on{Y}_y$ we have the completeness of the $\zeta_y\on{Y'}$.
Hence, Eq.~(\ref{eq:prob-iden}) can be written as an operator identity, i.e.,
\begin{align}
  \label{eq:almost-tele}
  &\sum_i \PTr{BY}{\left\{(\channel{N}\on{A}\circ\mI_i\on{X})(\xi\on{X}_x)\otimes\Phi_+\on{YY'}\right\}\ \left\{B_{b|i}\on{BY}\otimes\openone\on{Y'}\right\}}\nonumber\\
  &=\PTr{B'Y}{\left\{{\channel{N}'}\on{A'}(\xi_x\on{X})\otimes\Phi_+\on{YY'}\right\}\ \left\{{B'}_b\on{B'Y}\otimes\openone\on{Y'}\right\}}\;.
\end{align}

\begin{figure}
  \includegraphics{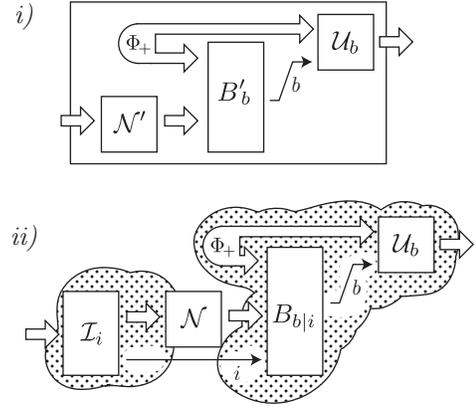}
  \caption{
    i)
    To correct the effect of the Bell measurement and the preparation of a maximally entangled state, it is sufficient to apply a unitary correction selected using the outcome of the Bell measurement.
    The resulting channel (bigger box) is equivalent to $\channel{N}'$.
    ii)
    As seen in Figure~\ref{fig:twochannels}, the same correction can be applied when using $\channel{N}$ to recover $\channel{N}'$.
    In dotted blob, the parts of the protocol that are regrouped to form a free transformation $\Lambda$ as defined in Section~\ref{sec:resourcetheory}.
  }
  \label{fig:extracting}
\end{figure}

Invoking now the protocol of quantum teleportation (Fig.~\ref{fig:extracting}i), we know that there exist unitary channels $\mU_b\on{Y'}$ such that
\begin{multline}
  \label{eq:channel-tele}
  {\channel{N}'}\on{A'}(\xi_x\on{X}) =
  \sum_b \mU\on{Y'}_b
  \Big(
  \PTrOp{B'Y}\Big[
    \Big\{
    {\channel{N}'}\on{A'}(\xi_x\on{X})\otimes\Phi_+\on{YY'}
    \Big\}\\
    \times \Big\{
    {B'}_b\on{B'Y}\otimes\openone\on{Y'}
    \Big\}
  \Big]
  \Big)\;,
\end{multline}
for all $x$, namely, using Eq.~(\ref{eq:almost-tele}),
\begin{multline}
  {\channel{N}'}\on{A'}(\xi_x\on{X})=
  \sum_{ib}\mU\on{Y'}_b
  \Big(
  \PTrOp{BY}\Big[
  \Big\{(\mN\on{A}\circ\mI_i\on{X})(\xi_x\on{X})\otimes\Phi_+\on{YY'}\Big\} \\
  \times \Big\{B_{b|i}\on{BY}\otimes\openone\on{Y'}\Big\}
  \Big]
  \Big)\;.
\end{multline}
Due to the completeness of $\xi_x\on{X}$, the above equation indeed holds as a \textit{map} (remember that $\hilbert{A'}\cong \hilbert{X}$):
\begin{multline}
  \label{eq:simulation}
  {\channel{N}'}\on{A'}(\bullet\on{A'})=
  \sum_{ib}\mU\on{Y'}_b
  \Big(\PTrOp{BY}\Big[
  \Big\{(\mN\on{A}\circ\mI_i\on{X})(\bullet\on{X})\otimes\Phi_+\on{YY'}\Big\}\\
  \Big\{B_{b|i}\on{BY}\otimes\openone\on{Y'}\Big\}
  \Big]\Big)\;.
\end{multline}
A schematic diagram of the above equation is given in Fig.~\ref{fig:extracting}.
Finally, introducing the maps $\mathcal{D}_i:B\to Y' \cong B'$ defined as
\begin{equation*}
  \mathcal{D}_i\on{B}(\bullet)\defeq\sum_{b}\mU\on{Y'}_b\left(\PTr{BY}{\left\{\bullet\on{B}\otimes\Phi_+\on{YY'}\right\}\ \left\{B_{b|i}\on{BY}\otimes\openone\on{Y'}\right\}} \right)\;,
\end{equation*}
we have that
\begin{equation*}
  {\channel{N}'}\on{B'\from A'}=\sum_i\mathcal{D}_i\on{B'\from B}\circ\mN\on{B\from A}\circ\mI_i\on{A\from A'},
\end{equation*}
which proves that $\mN \sufficient \mN'$.
\end{proof}

The proof of Theorem~\ref{th:completefamily} follows, as we proved that $\channel{N}$ can reproduce all correlations of $\channel{N}'$ as a consequence of $\payoff^*(\channel{N}) \ge \payoff^*(\channel{N}')$ for all games.

\section{Proof of Proposition~\ref{prop:construction}}
\label{app:proofconstruction}
  We first remark that, by construction of the signature correlation~\eqref{eq:signaturecorrelation}, we have, using~\eqref{eq:choiback}, that
  \begin{align}
    \sigcorrelation{\channel{N}}(1|x,y) & = \Tr{\left(\channel{N}(\xi_x\on{X}) \otimes \psi_y\on{Y}\right)\Phi_+\on{BY}} \nonumber \\
    & = d \Tr{\left(\choi{\channel{N}} \otimes \psi_y\on{Y}\right) \left(\xi_x^\top \otimes \Phi_+ \right)} \nonumber \\
    & = \Tr{\choi{\channel{N}}(\xi_x^\top \otimes \psi_y^\top)}.
  \end{align}
  Then, by construction,
  \begin{align}
    \sum_{xy} \payoff(1|x,y) \sigcorrelation{\channel{N}}(1|x,y) & = \sum_{xy} \omega_{xy} \Tr{\choi{\channel{N}}(\xi_x^\top \otimes \psi_y^\top)} \nonumber \\
                                                                 & = \Tr{\choi{\channel{N}} W} > 0\;.
  \end{align}
  Thus the optimal payoff when Abby has access to $\channel{N}$ is positive.
  Now, consider an EB channel.
  Without loss of generality, its correlations can be written, again using~\eqref{eq:entanglementbreaking}, as follows:
  \begin{equation}
    p_{\EB}(b|x,y) = \sum_i \Tr{\xi_x \Pi_i\on{X}} \Tr{\psi_y B_{b|i}\on{Y}}\;,
  \end{equation}
  where we folded the creation of a state $\rho'_i$ into the measurement $B_{b|i}\on{Y}$.
  Then:
  \begin{align}
    & \qquad \sum_{xy} \payoff(1|x,y) p_{\EB}(1|x,y) \nonumber \\
    & = \sum_{xyi} \omega_{xy} \Tr{(\xi_x \Pi_i\on{X})^\top} \Tr{(\psi_y B_{1|i}\on{Y})^\top} \nonumber \\
    & = \sum_i \Tr{(\Pi_i\on{X} \otimes B_{1|i}\on{Y})^\top W} \le 0 \;,
  \end{align}
  as the element traced with $W$ is a product of positive semidefinite operators.

\section{Sparse witness decompositions}
\label{app:sparsewitness}
The construction given in Section~\ref{sec:construction} rests on the decomposition~\eqref{eq:witnessdecomposition} of an entanglement witness $W$ on pairs of input states $(\xi_x, \psi_y)$ with coefficients $\omega_{xy}$.
Notice that the statistics $p(b|xy)$ need only to be collected for the input pairs $(x,y)$ which have $\omega_{xy} \ne 0$.
We now provide a method to minimize the number of those input pairs.
\begin{proposition}
  Let $W$ be a Hermitian operator acting on $\hilbert{X} \otimes \hilbert{Y}$.
  Let $d=\min(\dim\hilbert{X}, \dim\hilbert{Y})$.
  Shifting the transpose to the left-hand side of~\eqref{eq:witnessdecomposition}, the operator $W^\top$ has a decomposition
  \begin{equation}
    \label{eq:witnessdecompositionappendix}
    W^\top = \sum_{xy} \omega_{xy} (\xi_x \otimes \psi_y)\;,
  \end{equation}
  where the number of nonzero $\omega_{xy}$ is at most $d^2 + 3$.
\end{proposition}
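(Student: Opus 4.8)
The plan is to prove the purely linear-algebraic fact that \emph{any} Hermitian operator $V$ on $\hilbert{X}\otimes\hilbert{Y}$ is a real linear combination of at most $d^2+3$ products $\rho\otimes\sigma$ of density matrices ($\rho$ on $\hilbert{X}$, $\sigma$ on $\hilbert{Y}$). Applying this to $V=W^\top$ (which is again Hermitian) and relabelling the states that occur as $\xi_x$ and $\psi_y$ --- enlarging these families to tomographically complete ones by adjoining further states with zero coefficient if needed --- then yields the Proposition. By the symmetry of the two tensor factors in this statement we may assume $\dim\hilbert{X}\le\dim\hilbert{Y}$, so that $d=\dim\hilbert{X}$.

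\emph{Step 1 (cap the number of products at $d^2$).} The Hermitian operators on $\hilbert{X}$ form a real vector space of dimension $d^2$; fixing a basis $\{G_k\}_{k=1}^{d^2}$ of it and expanding the first tensor factor gives $V=\sum_{k=1}^{d^2}G_k\otimes H_k$ with each $H_k$ Hermitian on $\hilbert{Y}$ (equivalently, this is an operator--Schmidt decomposition, whose rank is bounded by $(\dim\hilbert{X})^2=d^2$). It remains to replace the Hermitian factors by states.

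\emph{Step 2 (Hermitian factors $\to$ states).} I use two elementary facts. First, every Hermitian $A$ on a $D$-dimensional space equals $\lambda\,\sigma-\mu\,(\openone/D)$ for a density matrix $\sigma$ and scalars $\lambda,\mu\ge 0$: for $c$ large enough $A+c\openone\succeq 0$, and $A=(A+c\openone)-cD\,(\openone/D)$ with $A+c\openone$ a nonnegative multiple of a state. Second, any nonnegative combination of density matrices is a nonnegative multiple of a single density matrix. Applying the first fact to each $G_k$ on $\hilbert{X}$, $G_k=p_k\rho_k-q_k\,\tau_X$ with $\tau_X:=\openone_X/d$, so collecting the $\tau_X$-terms, $V=\sum_k p_k\,\rho_k\otimes H_k-\tau_X\otimes\hat H$ with $\hat H:=\sum_k q_k H_k$ Hermitian on $\hilbert{Y}$. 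Applying the first fact again, now on $\hilbert{Y}$, to each $H_k$ and to $\hat H$ --- using the \emph{same} maximally mixed factor $\tau_Y:=\openone_Y/\dim\hilbert{Y}$ for all of them --- and then using the second fact to merge the nonnegative combination that multiplies $\tau_Y$ into a single scaled state $\rho_0$, everything regroups as
\begin{equation*}
  V=\sum_{k=1}^{d^2}\omega_k\,\rho_k\otimes\sigma_k-s\,\rho_0\otimes\tau_Y-\hat u\,\tau_X\otimes\hat\sigma+\hat v\,\tau_X\otimes\tau_Y,
\end{equation*}
which is a real combination of products of density matrices with at most $d^2+3$ nonzero coefficients.

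Taking $V=W^\top$ and reading off the $\omega_{xy}$ proves the Proposition; note $\{\rho_k\}_{k=1}^{d^2}\cup\{\tau_X\}$ already spans the Hermitian operators on $\hilbert{X}$ (since $\{G_k\}$ does), so $\inputset{X}$ is tomographically complete, while $\inputset{Y}$ can be completed by adjoining states with zero coefficient. There is no substantial obstacle here, only bookkeeping; the one point to get right is the two-level structure --- first bound the number of product terms by $d^2$ using a basis (or operator--Schmidt) expansion of the $d$-dimensional factor, and then, when converting Hermitian factors into states, funnel \emph{every} ``identity'' correction through the \emph{fixed} pair of maximally mixed states $\tau_X,\tau_Y$ together with a single pooled state $\rho_0$, so those corrections cost only three extra products rather than proliferating to order $d^2$. (With a little more care one in fact gets $d^2+1$, but $d^2+3$ is all that is needed.)
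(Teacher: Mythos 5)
Your proof is correct and follows essentially the same route as the paper's: bound the number of product terms by $d^2$ via an operator--Schmidt (or Hermitian-basis) expansion of the smaller factor, then shift each Hermitian factor by a multiple of the identity to reach positive operators, funnelling all identity corrections into a constant number of extra product terms. The only difference is bookkeeping --- the paper shifts both factors at once and then repairs the residual operators $\tilde{A},\tilde{B}$ in a second round, whereas you shift sequentially and pool nonnegative combinations into single states --- and both yield the same $d^2+3$ count.
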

\begin{proof}
The operator $W^\top$ has the operator-Schmidt decomposition~\cite{Guhne2009}:
\begin{equation}
  W^\top = \sum_{i=1}^n \gamma_i (A_i \otimes B_i),
\end{equation}
where $n=d^2$, $\gamma_i \ge 0$, and $\{A_i\}$ and $\{B_i\}$ are Hermitian operators acting on $\hilbert{X}$ and $\hilbert{Y}$ respectively.
This form does not match with~\eqref{eq:witnessdecompositionappendix} as the $A_i$, $B_i$ do not necessarily represent density matrices: their eigenvalues can be negative and their trace can be different from one.
The real mismatch is given by nonnegative eigenvalues, as normalization is easily fixed by rescaling.
For simplicity, we first decompose $W^\top$ over unnormalized density matrices $\{\tilde{\xi}_x\}$ and $\{\tilde{\psi}_y\}$.
We define $\tilde{\xi}_0 = \openone$ and $\tilde{\psi}_0 = \openone$ with suitable dimension,
and write, for $x,y = 1,\dots, n$:
\begin{equation*}
  \tilde{\xi}_x = A_x + a_x \tilde{\xi}_0, \quad \tilde{\psi}_y = B_y + b_y \tilde{\psi}_0,
\end{equation*}
for minimal $a_x, b_y \ge 0$, such that all $\tilde{\xi}_x$, $\tilde{\psi}_y$ have nonnegative eigenvalues.
Then
\begin{equation*}
  W^\top = \mu ~ \tilde{\xi}_0 \otimes \tilde{\psi}_0 + \tilde{A} \otimes \tilde{\psi}_0 + \tilde{\xi}_0 \otimes \tilde{B} + \sum_{i=1}^n \gamma_i ~ \tilde{\xi}_i \otimes \tilde{\psi}_i
\end{equation*}
where
\begin{equation*}
  \mu = \sum_i \gamma_i a_i b_i, ~ \tilde{A} = -\sum_i \gamma_i b_i \tilde{\xi}_i, ~ \tilde{B} = -\sum_i \gamma_i a_i \tilde{\psi}_i
\end{equation*}
which is nearly of the required form apart from $\tilde{A}$ and $\tilde{B}$.
We again write
\begin{equation*}
  \tilde{\xi}_{n+1} = \tilde{A} + a_{n+1} \tilde{\xi}_0, \quad \tilde{\psi}_{n+1} = \tilde{B} + b_{n+1} \tilde{\psi}_0
\end{equation*}
for minimal $a_{n+1}, b_{n+1} \ge 0$ to ensure semidefinite positiveness.
This provides a final decomposition $W^\top = \sum_{x,y=0}^{n+1} \tilde{\omega}_{xy} ( \tilde{\xi}_x \otimes \tilde{\psi}_y )$ with nonzero coefficients:
\begin{equation*}
  \tilde{\omega}_{00} = \mu -a_{n+1} -b_{n+1}, \quad \tilde{\omega}_{0,n+1} = \tilde{\omega}_{n+1,0} = \tilde{\omega}_{ii} = 1
\end{equation*}
for $i=1,\ldots,n$.
The correct decomposition~\eqref{eq:witnessdecompositionappendix} is found by normalizing the states $\xi_x = \tilde{\xi}_x / \Tr{\tilde{\xi}_x}$, $\psi_y = \tilde{\psi}_y / \Tr{\tilde{\psi}_y}$, rescaling the coefficients $\tilde{\omega}_{xy} \to \omega_{xy}$ in the process.
The final decomposition has at most $d^2 + 3$ nonzero coefficients, to compare with the full tomographic data that corresponds to $(\dim\hilbert{X})^2 (\dim\hilbert{y})^2 \ge d^4$ coefficients.
\end{proof}

For the entanglement witness $W = \Phi_+ - \openone/2$ of our qubit depolarizing channel example, we have the operator-Schmidt decomposition
\begin{equation}
W^\top = (-\openone_4 + \sigma_x \otimes \sigma_x - \sigma_y \otimes \sigma_y + \sigma_z \otimes \sigma_z)/4\;.
\end{equation}
By following the process described above, we obtain a decomposition with only six nonzero coefficients.
The input states are
\begin{equation}
  \xi_0 = \frac{\openone}{2}, ~ \xi_1 = \frac{\openone + \sigma_x}{2}, ~ \xi_2 = \frac{\openone + \sigma_y}{2}, ~ \xi_3 = \frac{\openone + \sigma_z}{2}
\end{equation}
and
\begin{equation}
  \xi_4 = \frac{\openone + (\sigma_x - \sigma_y + \sigma_z)/\sqrt{3}}{2}
\end{equation}
where $\psi_i = \xi_i$.
The six nonzero coefficients are $\omega_{00} = 2(\sqrt{3}-1)$, $\omega_{04} = \omega_{40} = -\sqrt{3}$ and $\omega_{11} = -\omega_{22} = \omega_{33} = 1$.
Note the similarity of this decomposition with the one used in the experimental work~\cite{Xu2014}; however our construction is general and not tailored to a specific family of witnesses.

\end{document}